\newcommand\myshade{85}
\colorlet{mylinkcolor}{YellowOrange}
\colorlet{mycitecolor}{Aquamarine}
\colorlet{myurlcolor}{violet}
\newtheorem{theorem}{Theorem}
\newtheorem{lemma}{Lemma}
\newtheorem{corollary}{Corollary}
\newtheorem{assump}{Assumption}
\newtheorem{remarkx}{Remark}
\renewcommand{\hat}{\widehat}
\renewcommand{\tilde}{\widetilde}
\newcommand{\bfm}[1]{\ensuremath{\mathbf{#1}}}
\def\bzero{\bfm 0}
\def\bone{\bfm 1}
   \def\bA{\bfm A}  
\def\bb{\bfm b}   \def\bB{\bfm B}  
   \def\bD{\bfm D}  
     \def\EE{\mathbb{E}}
  \def\bF{\bfm F}  
\def\bg{\bfm g}   \def\bG{\bfm G}  
   \def\bH{\bfm H}  
   \def\bI{\bfm I}
   \def\bO{\bfm O}  \def\OO{\mathbb{O}}
   \def\bP{\bfm P}  \def\PP{\mathbb{P}}
   \def\bQ{\bfm Q}  
   \def\bR{\bfm R}  \def\RR{\mathbb{R}}
\def\bs{\bfm s}   \def\bS{\bfm S}  
\def\bt{\bfm t}     
\def\bu{\bfm u}     
\def\bv{\bfm v}     
   \def\bW{\bfm W}  
\def\bx{\bfm x}   \def\bX{\bfm X}  
   \def\bY{\bfm Y}
\def\calA{{\cal  A}} 
\def\calB{{\cal  B}} 
\def\calC{{\cal  C}} 
\def\calE{{\cal  E}} 
\def\calF{{\cal  F}} \def\cF{{\cal  F}}
\def\calM{{\cal  M}} 
\def\calN{{\cal  N}}
\def\calS{{\cal  S}} 
\def\calU{{\cal  U}}
\def\calX{{\cal  X}} 
\def\calY{{\cal  Y}}
\newcommand{\bfsym}[1]{\ensuremath{\boldsymbol{#1}}}
\def\bgamma{\bfsym {\gamma}}             \def\bGamma{\bfsym {\Gamma}}
             \def\bSigma{\bfsym \Sigma}
\def\blambda {\bfsym {\lambda}}        \def\bLambda {\bfsym {\Lambda}}
          \def\bPhi{\bfsym {\Phi}}
\providecommand{\abs}[1]{\left\lvert#1\right\rvert}
\providecommand{\norm}[1]{\left\lVert#1\right\rVert}
\providecommand{\paran}[1]{\left( #1 \right)}
\providecommand{\brackets}[1]{\left[ #1 \right]}
\providecommand{\braces}[1]{\left\{ #1 \right\}}
\DeclareMathOperator{\rank}{rank}
\DeclareMathOperator*{\argmax}{arg\,max}
\def\bLam{\bLambda}
\def\eps{\varepsilon}
\providecommand{\defeq}{\triangleq}
\def\diag{{\rm diag}}
\def\hooi{\textsf{\tiny HOOI}}
\newcommand{\pkg}[1]{{\normalfont\fontseries{b}\selectfont #1}}
\let\proglang=\textsf
\newcommand{\msd}[2]{\renewcommand{\arraystretch}{0.8} \begin{tabular}[t]{@{}c@{}}#1\\ {\footnotesize (#2)}\end{tabular}}
\newcommand*{\addFileDependency}[1]{
	\typeout{(#1)}
	%
	%
	\@addtofilelist{#1}
	%
	\IfFileExists{#1}{}{\typeout{No file #1.}}
}\makeatother
\newcommand*{\myexternaldocument}[1]{%
	\externaldocument{#1}%
	\addFileDependency{#1.tex}%
	\addFileDependency{#1.aux}%
}
\begin{document}

\title{Semiparametric Tensor Factor Analysis by \\ Iteratively Projected SVD}

\author{Elynn Y. Chen$^1$, Dong Xia$^2$, Chencheng Cai$^3$ and Jianqing Fan$^{4, 5}$\\ \normalsize
$^1$New York University, USA, \\ \normalsize
$^2$Hong Kong University of Science and Technology, China\\ \normalsize
$^3$Washington State University, USA, \\ \normalsize
$^4$ Fudan University, China; \quad
$^5$  Princeton University, USA}

\date{}


\footnotetext[1]{
Fan is the corresponding author. Email: jqfan@princeton.edu.  
$^{1,2}$ Equal contribution. }
   
\maketitle

\begin{abstract}
This paper introduces a general framework of Semiparametric TEnsor Factor Analysis (STEFA) that focuses on the methodology and theory of low-rank tensor decomposition with auxiliary covariates.
STEFA models extend tensor factor models by incorporating auxiliary covariates in the loading matrices.
We propose an algorithm of Iteratively Projected SVD (IP-SVD) for the semiparametric estimation.
It iteratively projects tensor data onto the  linear space spanned by the basis functions of covariates and applies SVD on matricized tensors over each mode.
We establish the convergence rates of the loading matrices and the core tensor factor.
The theoretical results only require a sub-exponential noise distribution, which is weaker than the assumption of sub-Gaussian tail of noise in the literature.
Compared with the Tucker decomposition, IP-SVD yields more accurate estimators with a faster convergence rate.
Besides estimation, we propose several prediction methods with new covariates based on the STEFA model.
On both synthetic and real tensor data, we demonstrate the efficacy of the STEFA model and the IP-SVD algorithm on both the estimation and prediction tasks.
\end{abstract}

\section{Introduction}

Nowadays large-scale datasets in the format of matrices and tensors (or multi-dimensional arrays) routinely arise in a wide range of applications.
The low-rank structure, among other specific geometric configurations, is of paramount importance to enable statistically and computationally efficient analysis of such datasets.
The low-rank {\em tensor factor models} assume the following {\em noisy} Tucker decomposition:
\begin{equation} \label{eqn:tensor-factor-model-0}
\calY = \calF \times_1 \bA_1 \times_2 \cdots \times_M \bA_M + \calE,
\end{equation}
where $\calY$ is the $M$-th order tensor observation of dimension $I_1 \times \cdots \times I_M$, the \textit{latent tensor factor} $\calF$ is of dimension $R_1 \times \cdots \times R_M$, the \textit{loading} matrix $\bA_m$ is of dimension $I_m \times R_m$ with $R_m \ll I_m$ for each $m\in [M]$, and the noise $\calE$ is an $M$-th order tensor with the same dimension of $\calY$.
Tucker decomposition is a widely used form of tensor decomposition  \citep{kolda2009tensor, de2000multilinear} and  has been studied from different angles in mathematics, statistics and computer science.
Particularly, the statistical and computational properties of the decomposition have been analyzed in \cite{zhang2018tensor,richard2014statistical,allen2012sparse,allen2012regularized,wang2017tensor,zhang2019cross} under the general setting
and in \cite{zhang2019optimal} under the sparsity setting where parts of the loading matrices $\braces{\bA_m: m\in[M]}$ contain row-wise sparsity structures.

Tucker decomposition to the tensor factor model is similar to singular value decomposition (SVD) to the classic vector factor model, the latter being one of the most useful tools for modeling low-rank structures in biology, psychometrics, economics, business and so on.
Theoretical analyses on multivariate factor models assume i.i.d Gaussian noise at early stages \citep{anderson1956statistical,anderson1962introduction} and later  allow for variable-wise and sample-wise correlations \citep{bai2002determining,bai2003inferential,bai2012statistical}.
Chapter 11 of \cite{fan2020statistical} and the references therein provide a thorough review of recent advances and applications of multivariate factor models.
For 2nd-order tensor (or matrix) data, \cite{wang2019factor, chen2019constrained, chen2020multivariate} consider the matrix factor model which is a special case of \eqref{eqn:tensor-factor-model-0} with $M=2$ and propose estimation procedures based on the second moments.
Later, \cite{chen2019factor} extends the idea to the model \eqref{eqn:tensor-factor-model-0} with arbitrary $M$ by using the mode-wise auto-covariance matrices.

While the vanilla tensor factor model \eqref{eqn:tensor-factor-model-0} is neat and fundamental, it cannot incorporate any additional information that may be relevant.
Nowadays, the boom of data science has brought together informative covariates from different domains and multiple sources, in addition to the tensor observation $\calY$.
For example, the gene expression measurements from breast tumors can be cast in a tensor format, and the relevant covariates of the cancer subtypes are usually viewed as a partial driver of the underlying patterns of genetic variation among breast cancer tumors \citep{schadt2005integrative,li2016supervised}.
In restaurant recommendation system, online review sites like Yelp have access to shopping histories and friendship networks of customers, as well as the cuisine and ratings of restaurants \citep{acar2011all}.
The covariate-assisted factor models have been explored for vector and matrix observations \citep{connor2007semiparametric,connor2012efficient,fan2016projected, mao2019matrix}.
Their results show that sharing relevant covariate information across datasets leads to not only a more accurate estimation but also a better interpretation.

Inspired by those prior arts, we introduce a new modeling framework -- Semi-parametric TEnser Factor Analysis (STEFA) model -- to leverage the auxiliary information provided by mode-wise covariates.
STEFA captures practically important situations in which the observed tensor $\calY$ has an intrinsic low rank structure and the structure in $m$-th mode is {\em partially} explainable by some relevant covariate $\bX_m$.
The model is semi-parametric in the sense that it still allows covariate-free low-rank factors as in \eqref{eqn:tensor-factor-model-0}.
In the special case when $\bX_m$'s are unavailable, STEFA reduces to the classical tensor factor model  \eqref{eqn:tensor-factor-model-0}.
As to be shown in Section~\ref{sec:appl}, with auxiliary covariates, our STEFA model can outperform the vanilla tensor factor model in many scenarios. 
The auxiliary information of $\bX_m$ not only improves the performances of estimating latent factors but also enables prediction on new input covariates, which is an essential difference between our proposed framework and the existing tensor decomposition literature \citep{richard2014statistical,zhang2018tensor,zhang2019optimal,cai2019nonconvex,sun2017provable,wang2018learning,zhou2021partially}.
Indeed, unlike those tensor SVD or PCA models where estimating the latent factors usually only acts as the proxy of dimension reduction, STEFA utilizes those estimators for prediction with new observed covariates.
Another popular way of incorporating auxiliary covariates information is to couple tensors and matrix covariates together for joint factorization \citep{acar2011all,song2019tensor}.
Such method assumes that the covariate matrix and tensor share the same loading matrix along one mode.
Our method is different in that auxiliary covariates can partially predict loading matrices through nonparametric function approximation.
\cite{hao2019sparse} also used additive model in nonparametric tensor regression. 
But those authors dealt with tensor predictors and scalar responses, rather than a tensor of responses.

On the methodological aspect, we propose a computationally efficient algorithm, called Iteratively Projected SVD (IP-SVD), to estimate both the covariate-relevant loadings and covariate-independent loadings in STEFA.
As shown in Section~\ref{sec:thm}, a typical projected PCA method from \cite{fan2016projected}, while computationally fast, is generally  sub-optimal because it ignores multi-dimensional tensor structures.
The IP-SVD yields more accurate estimators of both the latent factors and loadings by adding a simple iterative projection after the initialization by projected PCA.
On the other hand, the IP-SVD can be viewed as an alternating minimization algorithm which solves a constrained tensor factorization  program where the low-rank factors are constrained to a certain functional space.
The dimension of this functional space, based on the order of sieve approximation, can be  significantly smaller than the ambient dimension which makes IP-SVD faster than the standard High-Order Orthogonal Iteration (HOOI) for solving the vanilla Tucker decomposition.
As a result, IP-SVD requires also weaker signal-to-noise ratio conditions for convergence in general.

Theoretically, we discovered interesting properties of STEFA that are different from those of the vanilla tensor factor model \eqref{eqn:tensor-factor-model-0}.
As proved in  \cite{richard2014statistical, zhang2018tensor}, the HOOI algorithm achieves statistically optimal convergence rates for model \eqref{eqn:tensor-factor-model-0} as long as the signal-to-noise ratio ${\rm SNR} \gtrsim (I_1I_2\cdots I_M)^{1/4}$ where the formal definition of SNR is deferred to Section~\ref{sec:thm}.
However,  due to the constraint of a low-dimensional (compared with $I_m$) functional space, the SNR condition required by IP-SVD in STEFA is ${\rm SNR} \gtrsim  (J_1J_2\cdots J_M)^{1/4}$
{ where $J_m$ is the number of basis function used in functional approximation and can be much smaller than $I_m$.}
Note that this weaker SNR condition is sufficient even for estimating the covariate-independent components.
Surprisingly, it shows that covariate information is not only beneficial to estimating the covariate-relevant components but also to the covariate-independent components.
Concerning the statistical convergence rates of IP-SVD, there are two terms which comprise of a parametric rate and a non-parametric rate.
By choosing a suitable order for sieve approximation, we can obtain a typical semi-parametric convergence rate for STEFA which fills a void of understanding non-parametric ingredients of tensor factor models. 
On the technical front, investigating the theoretical properties of STEFA is challenging due to the iterative nature of the estimation procedure, which involves both a parametric and non-parametric component.
Furthermore, our theoretical results only require a sub-exponential tail on the noise, which is weaker than the Gaussian or sub-Gaussian distributions of noise in all these prior works.
This technical improvement may be of independent interests.

\paragraph{Notation and organization.}
The following notations are used throughout the paper. 
We use lowercase letter $x$, boldface letter $\bx$, boldface capital letter $\bX$, and calligraphic letter $\calX$ to represent scalar, vector, matrix and tensor, respectively.
We denote $[N]=\braces{1, \ldots, N}$ for a positive integer $N$.
For any matrix $\bX$,  we use $\bx_{i \cdot}$, $\bx_{\cdot j}$, and $x_{ij}$ to refer to its $i$-th row, $j$-th column, and $ij$-th entry, respectively.
All vectors are column vectors and row vectors are written as $\bx^\top$.
The set of $N\times K$ orthonormal matrices is defined as $\OO^{N \times K}$. 
We denote $\sigma_i(\bX)$ as the $i$-th largest singular value of $\bX$,  $\|\bX\|$ as the spectral norm of $\bX$, i.e., $\|\bX\| = \sigma_1(\bX)$, and $\|\bX\|_{\rm F}$ as the Frobenius norm of $\bX$.
In addition, we frequently use the projection matrices $\bP_X = \bX \paran{\bX^\top\bX}^{-1} \bX^\top$ and $\bP_X^\perp = \bI - \bP_X$ where $\paran{\bX^\top\bX}^{-1}$ denotes the Moore-Penrose generalized inverse.

The rest of this paper is organized as follows.
Section \ref{sec:model} introduces the STEFA model and a set of identification conditions.
Section \ref{sec:est} proposes the IP-SVD algorithm to estimate the STEFA model and considers prediction with new covariates.
Section \ref{sec:thm} establishes theoretical properties of the estimators.
Section \ref{sec:simu} studies the finite sample performance via simulations.
Section \ref{sec:appl} presents empirical studies of two real data sets.
All proofs and technique lemmas are relegated to the supplementary material. 

\section{STEFA: Semi-parametric TEnsor FActor model}    \label{sec:model}

In this section, we introduce the Semi-parametric TEnsor FActor (STEFA) model. 
We present it with third-order tensors ($M=3$) to simply notation while the properties hold for general $M$.
More information of tensor algebra can be found in \cite{kolda2009tensor}. 

\subsection{Tensor factor model}  \label{sec:TFM}

For a tensor $\calS \in \RR^{I_1\times I_2\times I_3}$, the mode-1 slices of $\calS$ are matrices $\bS_{i_1 : :}\in\RR^{I_2\times I_3}$ for any $i_1\in[I_1]$ and the mode-1 fibers of $\calS$ are vectors $\bs_{: i_2 i_3} \in \RR^{I_1}$ for any $i_2 \in [I_2]$ and $i_3 \in [I_3]$.
We define its mode-1 matricization as a $I_1 \times I_2I_3$ matrix $\calM_1(\calS)$ such that
$
\brackets{ \calM_1(\calS)}_{i_1, i_2 + (i_3-1)I_2} = s_{i_1 i_2 i_3}, \text{for all } i_1 \in [I_1], i_2 \in [I_2], \text{ and } i_3 \in [I_3].
$
In other words, matrix $\calM_1(\calS)$ consists of all mode-1 fibers of $\calS$ as columns.
For a tensor $\calF \in \RR^{R_1 \times R_2 \times R_3}$ and a matrix $\bA_1 \in \RR^{I_1 \times R_1}$, the \textit{mode-$1$ product} is a mapping defined as $\times_1: \RR^{R_1 \times R_2 \times R_3}\times \RR^{I_1\times R_1} \mapsto \RR^{I_1\times R_2\times R_3}$ as
$
\calF \times_1 \bA_1 = \big[ \sum_{r_1=1}^{R_1} a_{i_1 r_1} f_{r_1 r_2 r_3} \big]_{i_1 \in [I_1], r_2 \in [R_2], r_3 \in [R_3]}
$.
In a similar fashion, we can define fibers, mode matricization, and mode product for mode-2 and mode-3, respectively.

The widely used Tucker ranks (or multilinear ranks) of a tensor $\calS$ is defined by the triplet $\rank(\calS) := (R_1, R_2, R_3)$ where 
$
R_m=  \rank(\calM_j(\calS))
$
for modes $m=1,2,3$.
The Tucker rank $(R_1, R_2, R_3)$ is closely associated with the Tucker decomposition.
If a tensor $\calS$ has an exact tensor rank $(R_1, R_2, R_3)$, then there exists a core tensor $\calF \in \RR^{R_1 \times R_2 \times R_3}$ such that $\calS$ has a {\em Tucker decomposition} $\calS = \calF \times_1 \bA_1 \times_2 \bA_2 \times_3 \bA_3$ where $\bA_m\in \RR^{I_m \times R_m}$, $m\in[3]$, are orthonormal matrices of the left singular vectors of $\calM_m(\calS)$ respectively.

Given a tensor observation $\calY  \in \RR^{I_1 \times I_2 \times I_3}$, a {\em tensor factor model} assumes that
\begin{equation} \label{eqn:tensor-factor-model}
\calY = \calS + \calE = \calF \times_1 \bA_1 \times_2 \bA_2 \times_3 \bA_3 + \calE,
\end{equation}
where the \textit{latent tensor factor} $\calF$ is of dimension $R_1 \times R_2 \times R_3$, the \textit{loading}  matrices $\bA_m \in \RR^{I_m \times R_m}$ are unknown deterministic parameters, and $\calE$ is the noise tensor.
The low-rank structure is captured by the assumption of $R_m \ll I_m$ along the $m$-th mode.
Model \eqref{eqn:tensor-factor-model} encompasses the vector and the matrix factor models as special sub-cases:
the vector factor model \citep{fan2020robust} corresponds to the special case of $\calY=\bA_1\calF + \calE$ where $\calY$, $\calE\in\RR^{I_1}$ and $\calF\in\RR^{R_1}$ are all vectors (i.e. 1st-order tensor).
The matrix factor model \citep{wang2019factor,chen2019constrained,chen2020statistical} corresponds to the special case of $\calY=\bA_1\calF\bA_2^\top + \calE$ where $\calY$, $\calE\in\RR^{I_1\times I_2}$ and $\calF\in\RR^{R_1\times R_2}$ are all matrices (i.e. 2nd-order tensors).


All the components on the right hand side of model \eqref{eqn:tensor-factor-model} are not directly observable, thus the tuples 
$\paran{\calF\times_1\bH_1^{-1}\times_2\bH_2^{-1}\times_3\bH_3^{-1}, \bA_1\bH_1, \bA_2\bH_2, \bA_3\bH_3}$
and 
$\paran{\calF, \bA_1, \bA_2, \bA_3}$ are indistinguishable for any invertible matrix $\bH_m\in\RR^{R_m\times R_m}$, $m\in[3]$.
This is a common issue with latent models since they can only be identified up to the columns space of $\bA_m$ \citep{bai2003inferential,zhang2018tensor,fan2020robust}. 
To identify one representative matrix of the column space $\bA_m$, we restrict our solution to the one that satisfies Assumption \ref{assum:tfm-ic-1}.
Lemma \ref{lem:tfm-ic} confirms the validity of Assumption~\ref{assum:tfm-ic-1} as an identification condition for model \eqref{eqn:tensor-factor-model}.

\begin{assump}[Tensor Factor Model Identification Condition]~ \label{assum:tfm-ic-1}
    We restrict our estimation targets to the loading matrices and core tensor that satisfy 
    (i) $\bA_m^\top\bA_m/I_m =\bI_{R_m}$  for all $m\in[M]$ where $\bI_{R_m}$ is an $R_m\times R_m$ identity matrix; and
    (ii) $\mathcal M_m(\mathcal F)\mathcal M_m(\mathcal F)^\top$ is a diagonal matrix with non-zero decreasing singular values for all $m$.
\end{assump}

\begin{lemma}\label{lem:tfm-ic}
    Given an $\calS\in\RR^{I_1\times\cdots\times I_M}$ with Tucker ranks $(R_1, \cdots, R_M)$ and $\calM_m(\calS)\calM_m(\calS)^{\top}$ having distinct non-zero  singular values for all $m$, then there exist unique\footnote{Note that uniqueness is up to column-wise signs of $\bA_m$'s.} $\bA_1,\cdots, \bA_M$ and $\calF$ satisfying Assumption~\ref{assum:tfm-ic-1} so that $\calS=\calF\times_1 \bA_1\times_2\cdots\times_M \bA_M$.
\end{lemma}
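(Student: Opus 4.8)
The statement has two parts: existence of a decomposition satisfying Assumption 1 and its uniqueness up to column signs. The plan is to construct the loadings and core directly from the singular value decompositions of the mode-$m$ matricizations, then verify both conditions of the Identification Condition, and finally argue uniqueness by tracking how any two valid representations must be related by orthogonal matrices that the diagonal/ordering condition forces to be sign matrices.

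\textbf{Existence.} First I would take, for each mode $m\in[M]$, the thin SVD of the matricization $\calM_m(\calS)=\bU_m\bD_m\bV_m^\top$, where $\bU_m\in\RR^{I_m\times R_m}$ has orthonormal columns (so $\bU_m^\top\bU_m=\bI_{R_m}$), $\bD_m\in\RR^{R_m\times R_m}$ is diagonal with the strictly positive singular values in decreasing order, and $\bV_m$ has orthonormal columns. Since $\calM_m(\calS)\calM_m(\calS)^\top=\bU_m\bD_m^2\bU_m^\top$ has distinct nonzero singular values by hypothesis, the columns of $\bU_m$ are determined up to sign. To meet condition (i), which normalizes $\bA_m^\top\bA_m/I_m=\bI_{R_m}$ rather than $\bA_m^\top\bA_m=\bI_{R_m}$, I set $\bA_m=\sqrt{I_m}\,\bU_m$. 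Then I define the core by projecting $\calS$ back onto these bases, $\calF=\calS\times_1\bA_1^\top/I_1\times_2\cdots\times_M\bA_M^\top/I_M$, equivalently $\calF=\calS\times_1\bU_1^\top\times_2\cdots\times_M\bU_M^\top$. Because each $\bU_m$ spans the column space of $\calM_m(\calS)$, the standard Tucker reconstruction identity gives $\calS=\calF\times_1\bA_1\times_2\cdots\times_M\bA_M$; this is precisely the fact recalled in the preliminaries subsection, adjusted for the $\sqrt{I_m}$ scaling. Condition (ii) then needs checking: I would compute $\calM_m(\calF)\calM_m(\calF)^\top$ and show it equals a diagonal matrix. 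Using the mode-$m$ matricization of a Tucker product, $\calM_m(\calF)=\bU_m^\top\calM_m(\calS)\,(\bigotimes_{k\ne m}\bU_k)$, and since the Kronecker product of orthonormal-column matrices again has orthonormal columns, $\calM_m(\calF)\calM_m(\calF)^\top=\bU_m^\top\calM_m(\calS)\calM_m(\calS)^\top\bU_m=\bD_m^2$, which is diagonal with strictly decreasing positive entries. This establishes both (i) and (ii).

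\textbf{Uniqueness.} Suppose $(\calF,\bA_1,\dots,\bA_M)$ and $(\tilde\calF,\tilde\bA_1,\dots,\tilde\bA_M)$ both satisfy Assumption 1 and reconstruct the same $\calS$. As noted in the identification discussion, under the orthonormality constraint (i) the two representations must be related by orthogonal matrices: there exist $\bH_m\in\OO^{R_m\times R_m}$ with $\tilde\bA_m=\bA_m\bH_m$ and $\tilde\calF=\calF\times_1\bH_1^\top\times_2\cdots\times_M\bH_M^\top$. The remaining work is to show each $\bH_m$ is a sign matrix, i.e. diagonal with $\pm1$ entries. I would take mode-$m$ matricizations of the core relation to get $\calM_m(\tilde\calF)=\bH_m^\top\calM_m(\calF)(\bigotimes_{k\ne m}\bH_k)$, form the Gram matrix, and use that the Kronecker factor is orthogonal to obtain $\calM_m(\tilde\calF)\calM_m(\tilde\calF)^\top=\bH_m^\top\big(\calM_m(\calF)\calM_m(\calF)^\top\big)\bH_m$. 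Condition (ii) says both $\calM_m(\tilde\calF)\calM_m(\tilde\calF)^\top$ and $\calM_m(\calF)\calM_m(\calF)^\top$ are diagonal, so $\bH_m$ conjugates one diagonal matrix into another. The key point I expect to be the crux of the argument is invoking distinctness: because the diagonal entries (the squared singular values) are all distinct, an orthogonal matrix that diagonalizes a diagonal matrix with simple spectrum must itself be a signed permutation, and the requirement that both diagonals carry the entries in the same decreasing order rules out nontrivial permutations, leaving only sign flips. Hence $\bH_m=\mathrm{diag}(\pm1)$, which is exactly the claimed uniqueness up to column signs.

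\textbf{Main obstacle.} The routine parts are the reconstruction identity and the matricization algebra. The delicate step is the uniqueness argument: one must correctly justify that the two representations differ by an \emph{orthogonal} (not merely invertible) change of basis, which relies on both representations using the same normalized column space of $\calM_m(\calS)$, and then carefully exploit the \emph{distinctness and ordering} of the singular values in condition (ii) to collapse the orthogonal ambiguity down to signs. Getting the coupling right across modes — ensuring the Kronecker factors $\bigotimes_{k\ne m}\bH_k$ are orthogonal so they drop out of the Gram computation — is where I would be most careful.
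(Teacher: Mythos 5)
Your proposal is correct and follows essentially the same route as the paper: both reduce the problem to the fact that any two decompositions satisfying the orthonormality condition differ by orthogonal matrices $\bH_m$, and then use the distinctness and decreasing order of the eigenvalues of $\calM_m(\calF)\calM_m(\calF)^\top$ to force each $\bH_m$ to be a diagonal sign matrix. One trivial slip: with $\bA_m=\sqrt{I_m}\,\bU_m$ the core is $\calF=\bigl(\prod_m I_m^{-1/2}\bigr)\,\calS\times_1\bU_1^\top\times_2\cdots\times_M\bU_M^\top$, not $\calS\times_1\bU_1^\top\times_2\cdots\times_M\bU_M^\top$ itself, but this scaling does not affect diagonality or the ordering of the entries, so the argument stands.
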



Model \eqref{eqn:tensor-factor-model} can be estimated by solving the optimization program
\begin{equation} \label{eqn:tucker-opt}
\min_{\calF, \bA_1,\bA_2,\bA_3}\norm{\calY-\calF\times_1 \bA_1\times_2\bA_2\times_3 \bA_3}^2_F,
\end{equation}
under the constraints in Assumption \ref{assum:tfm-ic-1}.
It is highly non-convex and computationally NP-hard. 
The higher order orthogonal iteration (HOOI) algorithm \citep{de2000best} solves \eqref{eqn:tucker-opt} by alternating minimization along the direction of $\bA_m$. 
Given an initial guess of $\{\hat \bA_m\}_{m\geq 2}$, the algorithm update $\hat \bA_1$ to be the maximizing value  $\hat\bA_1=\sqrt{I_1}\cdot {\rm SVD}_{R_1}\big(\calM_1(\calY)(\hat\bA_2\otimes \hat \bA_3 )\big)$ where ${\rm SVD}_r(\cdot)$ returns top-$r$ left singular vectors of a given matrix. 
Then, the algorithm proceeds to iteratively updating $\hat \bA_m$ while fixing the other $\hat\bA_{j}, j \neq m$ until some stopping criterion is satisfied. 
The performance of HOOI usually relies on the initial input of $\{\hat \bA_m\}_{m\in [M]}$.

One way to measure the importance of each factor dimension along a mode is through the mode-wise percentage explained variance. 
Suppose we are interested in the relative importance of mode-$1$ factors, the total variance along mode-$1$ can be calculated by 
$\sigma_1^2 = {\rm Tr}\paran{\calM_1(\calY)\calM_1(\calY)^\top/(I_2I_3)}$ and variances of the $R_1$ factors of mode-$1$ are the diagonal elements in the covariance matrix
$\bSigma_{F,1} = \calM_1(\calF)\calM_1(\calF)^\top/(R_2R_3)$. 
Then the mode-$1$ percentage explained variances for each of the $R_1$ factors corresponds to each element in ${\rm diag}\paran{\bSigma_{F,1}}/\sigma_1^2$, where ${\rm diag}(\cdot)$ extracts $R_1$ diagonal elements from matrix $\bSigma_{F,1}$. 

\subsection{Semiparametric tensor factor model}\label{sec:stefa}

We now generalize the classic tensor factor model to integrate mode-wise auxiliary covariates.
For any $i_1\in[I_1]$, let $\bx_{1, i_1} = [x_{1, i_1 1}, \cdots, x_{1, i_1 D_1}]^\top$ be a $D_1$-dimensional vector of covariates associated with the $i_1$-th entry along mode 1.
We assume that the mode-1 loading coefficient $a_{1,i_1 r_1}$ can be (partially) explained by $\bx_{1, i_1}$ such that
\begin{equation*}
a_{1, i_1 r_1} = g_{1, r_1}\paran{\bx_{1, i_1}} + \gamma_{1, i_1 r_1}, \quad i_1 \in [I_1], r_1 \in [R_1],
\end{equation*}
where $g_{1, r_1}: \RR^{D_1}\mapsto\RR$ is a function and $\gamma_{1, i_1 r_1}$ is the part that \textit{cannot} be explained by the covariates.
Under this assumption, the entries in the $i_1$-th mode-1 slice, $i_1\in[I_1]$, can be written as
\begin{equation} \label{eqn:tucker-semi-entry}
y_{i_1 i_2 i_3} = \sum_{r_1 = 1}^{R_1} \sum_{r_2 = 1}^{R_2} \sum_{r_3 = 1}^{R_3} \paran{g_{1,r_1}\paran{\bx_{1,i_1}} + \gamma_{1,i_1 r_1}} a_{2, i_2 r_2} a_{3, i_3 r_3} f_{r_1 r_2 r_3} + \varepsilon_{i_1 i_2 i_3},
\end{equation}
for all $i_2 \in [I_2]$ and $i_3 \in [I_3]$.
Let $\bX_1$ be a $I_1 \times D_1$ matrix taking $\bx_{1,i_1}^\top$ as rows, $\bG_1 \paran{\bX_1}$ be the $I_1 \times R_1$ matrix with its $i_1$-th row being $[g_{1,1}\paran{\bx_{1,i_1}},\cdots, g_{1,R_1}(\bx_{1,i_1})]$, and $\bGamma_1$ be the $I_1 \times R_1$ matrix of $[\gamma_{1, i_1 r_1}]$, we can write compactly $\bA_1=\bG_1(\bX_1)+\bGamma_1$ and 
\begin{equation} \label{eqn:tucker-covari-1}
\calY = \calF \times_1 \paran{\bG_1 \paran{\bX_1} + \bGamma_1} \times_2 \bA_2 \times_3 \bA_3 + \calE.
\end{equation}
This semi-parametric configuration is easily extendable to all modes of $\calY$.  
If any mode-$m$ loading entries $a_{m, i_m r_m}$ can be partially explained by a $D_m$-dimensional vector $\bx_{m, i_m}$, i.e. $a_{m, i_m r_m} = g_{m, r_m}\paran{\bx_{m, i_m}} + \gamma_{m, i_m r_m}$, then we have
\begin{equation} \label{eqn:tucker-covari-3}
\calY = \calF \times_1 \paran{\bG_1 \paran{\bX_1} + \bGamma_1} \times_2 \paran{\bG_2 \paran{\bX_2} + \bGamma_2} \times_3 \paran{\bG_3 \paran{\bX_3} + \bGamma_3} + \calE,
\end{equation}
where $\bX_m$ is a $I_m \times D_m$ matrix taking $\bx_{m,i_m}^\top$ as rows, $\bG_m\paran{\bX_m}$ be the $I_m \times R_m$ matrix with its $i_m$-th row being $[g_{m,1}\paran{\bx_{m,i_m}},\cdots, g_{m,R_m}(\bx_{m,i_m})]$, and $\bGamma_m$ be the $I_m \times R_m$ matrix of $[\gamma_{m, i_m r_m}]$.
We refer to \eqref{eqn:tucker-covari-3} as the Semiparametric TEnsor FActor (STEFA) Model.
An an illustration of model \eqref{eqn:tucker-covari-1} is presented in Figure \ref{fig:stefa-tensor}. 
\begin{figure}[htpb!]
    \centering
    \includegraphics[width=.8\textwidth]{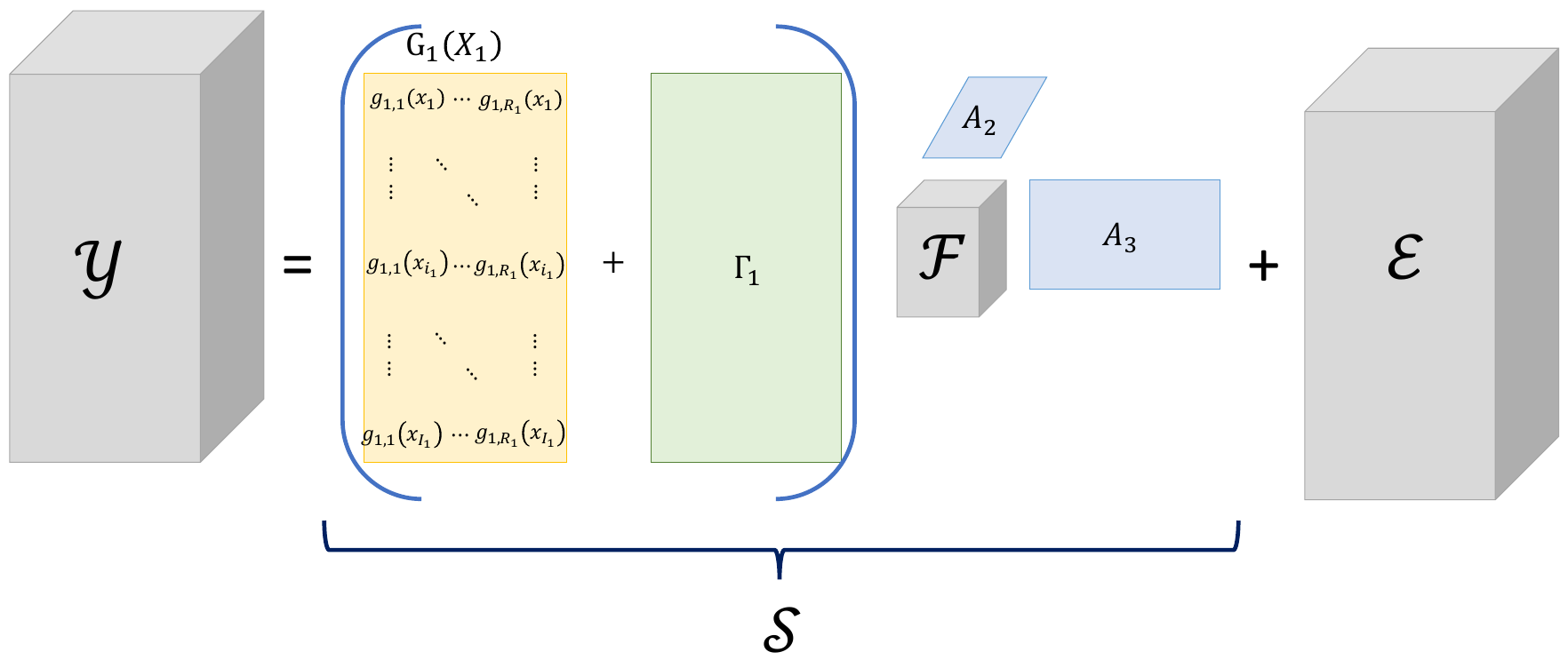}
    \caption{An illustration of the STEFA model \eqref{eqn:tucker-covari-1}.}
    \label{fig:stefa-tensor}
\end{figure}
When mode $m$ has no covariates, we take $\bG_m(\bX_m)={\bf 0}$.
If, additionally, mode $m$ has no factor structure, we take $\bA_m = \bI_{R_m}$ -- the identity matrix.
If all modes have no covariates, then STEFA reduces to the classical tensor factor model (\ref{eqn:tensor-factor-model}).
STEFA is a generalization of the  semi-parametric vector factor model \citep{fan2016projected} to the tensor data. 
But it is more complex in computation and theoretical analysis. 

\begin{remarkx}[Multivariate functional SVD]
    In the field of functional data analysis, 
    researchers have studied multidimensional functional SVD \citep{silverman1996smoothed,huang2009analysis} and functional PCA \citep{zhou2014principal,wang2017regularized}.  
    Specifically, the two-way functional SVD views each entry $y_{i_1 i_2}$ of the data matrix $\bY \in \RR^{I_1 \times I_2}$ as the evaluation of an underlying function $y(\cdot, \cdot)$ on a rectangular grid of sampling pints $\bx_{1,i_1}$ and $\bx_{2,i_2}$, that is,  $y_{i_1, i_2} := y(\bx_{1, i_1}, \bx_{2,i_2}) := \sum_{r=1}^{R} \sigma_r g_{1,r}(\bx_{1,i_1}) g_{1,r}(\bx_{2,i_2})$. 
    
    Let $\bG_1(\bX_1) \in \RR^{I_1 \times R}$ be the matrix that contains $g_{1,r}(\bx_{1, i_1})$ as its $(i_1, r)$-th element, $\bG_2(\bX_2) \in \RR^{I_2 \times R}$ be the matrix that contains $g_{2,r}(\bx_{2, i_2})$ as its $(i_2, r)$-th element, and $\bD$ represent the diagonal matrix $\diag (\sigma_1, \cdots, \sigma_R)$. 
    Under the functional SVD assumption, the data matrix has the following low-rank structure: 
     \begin{equation}   \label{eqn:func-svd}
        \bY = \bG_1(\bX_1) \, \bD \, \bG_2(\bX_2)^\top = \bD \times_1  \bG_1(\bX_1) \times \bG_2(\bX_2), 
    \end{equation}
    which is equivalent to a special case of the STEFA model where $M = 2$, core tensor $\calF\in \RR^{R\times R}$ is diagonal, $\bGamma_1 \equiv \bzero$ and $\bGamma_2 \equiv \bzero$. 
   
    The estimation method for function SVD are mostly based on regularized SVD which imposes the smoothness constraint on columns of $\bG_1(\bX_1)$ and $\bG_2(\bX_2)$. 
    For the STEFA model, we do not impose such constraints and our projection-based algorithm also estimate the covariate independent component $\bGamma_m$ that cannot be explained by the covariate. 
    
     In fact, model \eqref{eqn:func-svd} can be extended to the higher-order setting with $M \ge 3$, which can be viewed as a functional CP tensor decomposition \citep{kolda2009tensor} and is an interesting topic for future research. 
\end{remarkx}

\begin{remarkx}[Tensor response regression]
    The STEFA model is related to a list of tensor response regression models \citep{raskutti2019convex} with a low-rank coefficient tensor.
    Notably, \cite{sun2017store} and \cite{zhou2021partially} consider a model where response tensors $\calY_t \in \RR^{I_1 \times \cdots \times I_{M-1}}$ are related to a $D_M$-dimensional vector of covariate $\bx_t$ through 
    \begin{equation} \label{eqn:trr}
        \calY_t = \calB \times_{M} \bx_t  + \calE_t, 
    \end{equation} 
    where $\calB$ is a $I_1 \times \cdots \times I_{M-1} \times D_M$ unknown parameter tensor of interest, and the noise tensor $\calE_t$ has i.i.d.~standard Gaussian entries. 
    Model \eqref{eqn:trr} can be rearranged to a similar form as the STEFA model. 
    Specifically, we stack the tensor response $\calY_t$ along a new $M$-th order and get a new $(\bI_1 \times \cdots \times \bI_{M-1} \times T)$ tensor $\calY$. 
    We also stack the vector covariate $\bx_t$ together and get a new $(T \times D_M)$ matrix $\bX_M$. 
    Then, model \eqref{eqn:trr} can be rewritten as
    \begin{equation} \label{eqn:trr-mat}
        \calY = \calB \times_{M} \bX_M  + \calE_t. 
    \end{equation} 
    For high-dimensional data, the sparse or low-rank structure is assumed on the coefficient tensor $\calB$ to facilitate estimation. 
    For example, \cite{sun2017store} and \cite{zhou2021partially} assume that $\calB$ admits a rank-$R$ CP decomposition structure. 
    Alternatively, $\calB$ can be assumed to admit a rank-$(R_1, \cdots, R_M)$ Tucker decomposition structure \citep{raskutti2019convex} denoted by $\calB := \calF \times_1 \bA_1 \times_2 \cdots \times_{M-1} \bA_{M-1} \times_{M} \bB_M$, where $\calF$ is a $R_1 \times \cdots \times R_M$ tensor, $\bA_m$ are $I_m \times R_m$ matrices for $m\in[M-1]$ and $\bB_M$ is a $D_M \times R_M$ matrix.
    Under such Tucker low-rankness, model \eqref{eqn:trr-mat} can be further rewritten as
     \begin{equation*} 
        \calY = \calF \times_1 \bA_1 \times_2 \cdots \times_{M-1} \bA_{M-1} \times_{M} (\bX_M \bB_M)  + \calE_t. 
    \end{equation*} 
    which has the same form as a restricted STEFA model with $\bA_M = \bX_M \bB_M$ being exact linear and non-existence of the covariate-independent component $\bGamma_M$. 
\end{remarkx}

\begin{remarkx}[Multiple-mode-covariate tensor regression]
The multiple-mode-covariate (MMC) tensor regression  \citep{hu2021generalized} with identity link function writes
    \begin{equation} \label{eqn:xu-2019}
        \calY = \calB \times_1 \bX_1 \times_2 \bX_2 \times_3 \bX_3 + \calE,
    \end{equation} 
    where $\bX_m$ is the observable $I_m \times D_m$ covariate matrix and $\calB$ is a low-rank regression coefficient tensor. 
    The MMC tensor regression model is a parametric model while the the STEFA model is semi-parametric. 
    The STEFA model is to the MMC tensor regression as the projected PCA is to the reduce-rank regression. 
    
    If we wish to make the parametric assumption that the true loading function $g_{m, r_m}(\cdot)$ is linear and no covariate-independent component, i.e. $\gamma_{m,i_m,r_m} = 0$ in \eqref{eqn:tucker-semi-entry}, the STEFA model can be rewritten in the same form as \eqref{eqn:xu-2019}. 
    Specifically, the loading can be explicitly written as $\bA_m = \bX_m\bB_m$ 
    where $\bB_m \in \RR^{D_m \times R_m}$. 
    The STEFA model can be rewritten as
     \begin{equation}  \label{eqn:linear-loading}
         \calY = \calF \times_1 (\bX_1\bB_1) \times_2 (\bX_2\bB_2) \times_3 (\bX_3\bB_3) + \calE
          = \calB \times_1 \bX_1 \times_2 \bX_2 \times_3 \bX_3 + \calE, 
     \end{equation}
     where $\calB = \calF \times_1 \bB_1 \times_2 \bB_2 \times \bB_3$.
     Otherwise, the STEFA model is very different from the MCC tensor regression since it allows any smooth function $g_{m, r_m}(\cdot)$ and the existence of the covariate-independent component $\gamma_{m,i_m,r_m}$. 
     Generally, the advantages of the STEFA model are its non-parametric modeling on the covariates as well as its weak technical assumptions.
\end{remarkx}

\subsubsection{Identifiability conditions for STEFA}
Similar to the tensor factor model \eqref{eqn:tensor-factor-model}, the identifiability is also an issue for STEFA.
Note that the factor loading $\bA_m$ in STEFA consists of two components $\bG_m(\bX_m)$ and $\bGamma_m$.
A naive generalization of Assumption~\ref{assum:tfm-ic-1} requires that
\begin{equation*}
 \bI_{R_m}=\bA_m^{\top}\bA_m=(\bG_m(\bX_m)+\bGamma_m)^{\top}(\bG_m(\bX_m)+\bGamma_m)=\bG_m(\bX_m)^{\top}\bG_m(\bX_m)+\bGamma_m^{\top}\bGamma_m,
\end{equation*}
where we assume that $\bGamma_m^{\top}\bG_m(\bX_m)={\bf 0}$. While the above identification is theoretically valid, such a condition imposes a constraint jointly for both the parametric and non-parametric components and introduces unnecessary difficulty into the estimating procedures.
Instead, we propose the following identification condition for STEFA.

\begin{assump}[STEFA Identification Condition]~ \label{assum:stefa-ic-1}
    We restrict our estimation targets to the loading matrices and core tensor that satisfy 
    \begin{enumerate}[label=(\roman*)]
        \item $\bG_m^{\top}(\bX_m)\bG_m(\bX_m)/I_m=\bI_{R_m}$ and $\bG_m^{\top}(\bX_m)\bGamma_m=\bzero$ for all $m\in[M]$.
        \item $\mathcal M_m(\mathcal F)\mathcal M_m(\mathcal F)^\top$ is a diagonal matrix with non-zero decreasing singular values for all $m\in[M]$.
    \end{enumerate}
\end{assump}
Note that the identification condition $\bG_m^{\top}(\bX_m)\bG_m(\bX_m)/I_m=\bI_{R_m}$ can be replaced with $\bGamma_m^{\top}\bGamma_m/I_m=\bI_{R_m}$. We choose the first equation just for simplicity because our method starts with estimating the non-parametric component $\bG_m(\bX_m)$. However, if some mode $m$ has no covariate information, then we have to replace the identification condition with $\bGamma_m^{\top}\bGamma_m/I_m=\bI_{R_m}$.
Also note that $\bG_m \paran{\bX_m}$ is the $I_m \times R_m$ matrix of $[g_{m,r_m}\paran{\bx_{m,i_m}}]_{i_m,r_m}$, thus the identification condition is defined with respect to matrix $\bG_m \paran{\bX_m}$ with a fixed $I_m$, not on the functional form of $g_{m, r_m}\paran{\bx_{m,i_m}}$.
Alternatively, one can consider a functional version of identification conditions on $g_{m,r_m}\paran{\bx_{m,i_m}}$ defined on a Hilbert space consisting of all the square integrable functions.
But the intricate combination of functional space and tensor structure renders the problem even more difficult and thus will not be pursued here.

\section{Estimation} \label{sec:est}

In this section, we present a computationally efficient Iteratively Projected SVD (IP-SVD) algorithm to estimate the  STEFA model.
Given the identification condition (Assumption~\ref{assum:stefa-ic-1}), we start with estimating the non-parametric component $\bG_m(\bX_m)$.

\subsection{Sieve approximation and basis projection}
Our primary ingredient of estimating $\bG_m(\bX_m)$ is the sieve approximation which is a classical method in non-parametric statistics \citep{chen2007large}. At this moment, we assume that the latent dimensions $R_1$, $R_2$ and $R_3$ are known. In Section \ref{sec:estimate-tucker-ranks}, we will discuss a method to consistently estimate $R_1$, $R_2$ and $R_3$ when they are unknown.

Sieve approximation relies on a set of basis functions.
Take mode 1 for illustration.
We denote $\braces{\phi_{1,j_1}(\cdot)}_{j_1\in[J_1]}$ as a set of basis functions on $\{f:\RR^{D_1} \rightarrow \RR^{I_1}\}$,
which spans a complete  space for $\braces{g_{1,  r_1}\paran{\cdot}}_{r_1\in[R_1]}$ . 
Some widely-used basis functions are  B-spline, Fourier series, wavelets, and polynomial series \citep[Section 2.3]{chen2007large}. 
We let $\bPhi_1(\bX_1)$ be the $I_1 \times {J_1}$ matrix whose $(i_1, j_1)$-th element is $\phi_{1,j_1}(\bx_{1, i_1})$. 
We denote the $J_1 \times R_1$ matrix of sieve coefficients as $\bB_1 = \brackets{\bb_{1,1}, \cdots, \bb_{1,R_1}}$, and the $I_1 \times R_1$ residual matrix as $\bR_1(\bX_1)$, consisting of approximation errors.  
Then, in the matrix form, we have $\bG_1(\bX_1) = \bPhi_1(\bX_1)\bB_1 + \bR_1(\bX_1)$, where $\bPhi_1(\bX_1)$ can be constructed from covariates and $\bR_1(\bX_1)$ shall be small for a large enough $J_1$.
To this end, the factor loading $\bA_1$ can be written as $\bA_1=\bPhi_1(\bX_1)\bB_1+\bR_1(\bX_1)+\bGamma_1$ (illustrated in the big parentheses in Figure \ref{fig:stefa-approx}.)
\begin{figure}[htpb!]
    \centering
    \includegraphics[width=.8\textwidth]{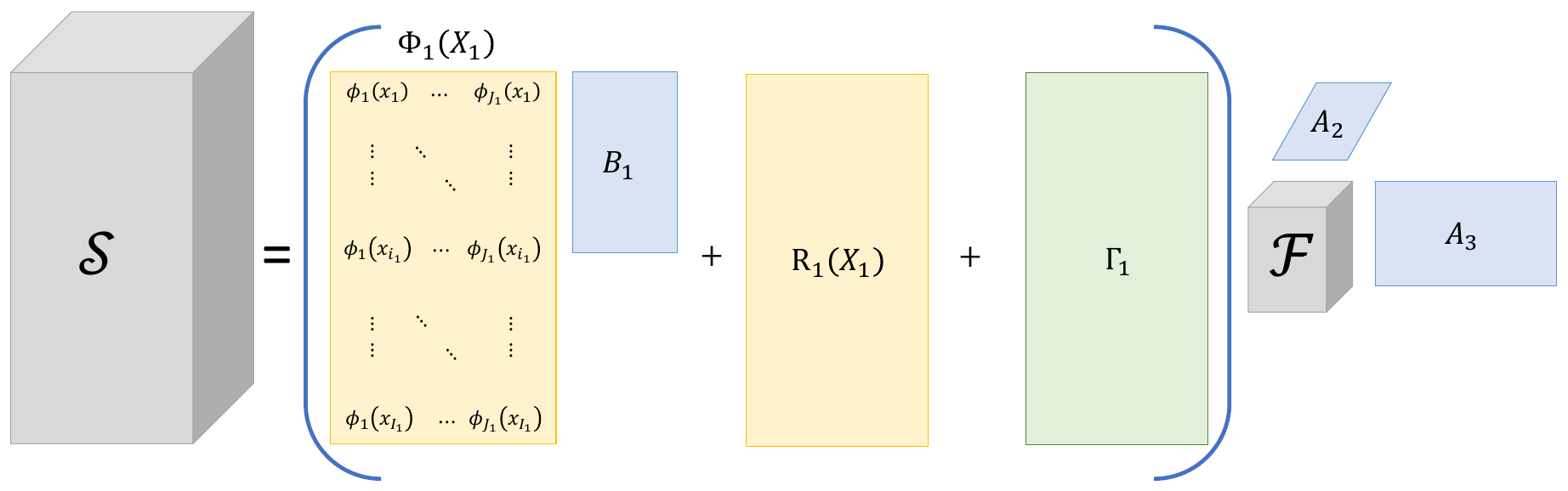}
    \caption{An illustration of sieve approximation in signal of the STEFA model. 
    The first loading matrices $\bA_1$ is decomposed into three part: the sieve approximation $\bPhi_1(\bX_1)\bB_1$, the sieve residual $\bR_1(\bX_1)$ and the covariate independent component $\bGamma_1$}
    \label{fig:stefa-approx}
\end{figure}
Generalizing to other modes $m \in [M]$, we can define similar terms and write that
\begin{equation}\label{eqn:G-reg}
\bG_m(\bX_m) = \bPhi_m(\bX_m)\bB_m + \bR_m(\bX_m).
\end{equation}
Then, a general STEFA can be re-formulated as
\begin{equation*}
\calY=\calF\times_1 \big( \bPhi_1(\bX_1)\bB_1 + \bR_1(\bX_1)+\bGamma_1\big)\times_2\cdots\times_M \big( \bPhi_M(\bX_M)\bB_M+ \bR_M(\bX_M)+\bGamma_M\big)+\calE.
\end{equation*}
In practice, to nonparametrically estimate $g_{m,r_m}(\bx_{m,i_m})$ without suffering from the curse of dimensionality when the dimension of $\bx_{m,i_m}$ is large, we can assume $g_{m, r_m}(\bx_{m,i_m})$ to be structured.   
A popular example of this kind is the additive model: for each $r_m \in [R_m]$, there are $D_m$ univariate functions $\braces{g_{m,r_m d_m}(\cdot)}_{d_m=1}^{D_m}$ such that
\begin{equation} \label{eqn:g-func-addants-vec}
g_{m,r_m}(\bx_{m,i_m}) = \sum_{d_m=1}^{D_m} g_{m,r_m d_m}(x_{m, i_m d_m}).
\end{equation}
Each one dimensional additive component $g_{m,r_m,d_m}(x_{i_m d_m})$ can be estimated without curse of dimensionality by the sieve approximation or other more complex functions.
Possible data-driven methods to estimate $J$'s are discussed in Appendix \ref{append:n-basis} in the supplemental material.

\subsection{Iteratively projected SVD}

We propose an iteratively projected SVD (IP-SVD) algorithm \footnote{A Python library of IP-SVD is available at \url{https://github.com/ElynnCC/STEFA-Code.git}. }
 to estimate the right hand side of the STEFA model  \eqref{eqn:tucker-covari-3}  from tensor $\calY$ and
matrices of covariate $\bX_m$ for $m\in [M]$.
For ease of notations, we write $\bG_m$ and $\bPhi_m$ instead of $\bG_m(\bX_m)$ and $\bPhi_m(\bX_m)$ and
define $\bP_m = \bPhi_m\cdot \left(\bPhi_m^\top\bPhi_m\right)^{-1}\bPhi_m^\top$ as the $I_m \times I_m$ projection matrix onto the sieve spaces spanned by the basis functions of $\bX_m$ of all $m\in[M]$.
Algorithm  \ref{alg-IP-SVD} summarizes the whole procedure. 
For ease of presentation, it is presented for the third order tensor or $M = 3$. 
But it is representative for the general $M$ setting. 
The outputs are estimators of the tensor factor $\hat\calF$, covariate-relevant loadings $\hat\bG_m$, sieve coefficient matrices $\hat\bB_m$, full loading matrices $\hat\bA_m$ and covariate-independent loadings $\hat\bGamma_m$ for all $m \in [M]$.

The algorithm is divided into two major blocks. 
The first block consists of the first four steps, namely projected spectral initialization, projected power iteration, projection estimate for the tensor factor and orthogonal calibration.
Together, they estimate $\hat\calF$ and $\hat\bG_m$ through an iterative procedure. 
The first step of projected spectral initialization utilizes the fact that the column space of each loading $\bG_m$ is mainly a subspace of the basis projection $\bP_m$ by sieve approximation. 
It obtains a preliminary estimator for $\bG_m$ for each $m\in[3]$ via sieve projection, matricization and singular value decomposition (SVD), specified in equation \eqref{eqn:initialization}. 
This step, in spirit, is similar to the projected PCA in \citep{fan2016projected}.
This initial estimator $\tilde \bG_m^{(0)}$ acts as a good starting point, but is {\em sub-optimal} in general.
In the second step of projected power iteration, we apply power iterations to refine the initialization.
Given rudimentary estimators $\tilde \bG_2^{(t-1)}$ and $\tilde \bG_3^{(t-1)}$, we further denoise $\calY$ by the mode-2 and 3 projections: $\calY\times_2 \tilde \bG_2^{(t-1)\top} \times_3 \tilde \bG_3^{(t-1)\top}$.
This refinement can significantly reduce the amplitude of noise while reserving the mode-1 singular subspace.
Iteratively for $t=1,\cdots,t_{\max}$, we obtain an updated estimator $\tilde\bG_m^{(t)} $ for each $m\in[3]$ according to \eqref{eqn:projected-power-iter}. 
This projected power iteration algorithm is a modification of the classical HOOI algorithm \citep{de2000best}.
The additional projection $\bP_m$ restricts the solution to be a linear function of sieve basis functions.
Empirically, the projected version of HOOI in this step converges very fast within a few iterations.
The output of this step is the final estimators $\tilde \bG_m = \tilde \bG_m^{(t_{\max})}$ for $m\in [M]$.
In the third step, $\hat\calF$ is estimated via least squares, which amounts to the projection in equation \eqref{eqn:proj-F}. 
The fourth step fixes a numerical solution of tensor factor and loadings that satisfy Assumption~\ref{assum:stefa-ic-1} by orthogonal calibration. 
The orthogonal rotation matrices are calculated by \eqref{eqn:rotation} and the ultimate estimator is given by equation \eqref{eqn:FandGm-final}. 

The second main block of the algorithm takes care of the estimation of the sieve coefficient matrices $\hat\bB_m$, full loading matrices $\hat \bA_m$, and the covariate-independent loading matrices $\hat\bGamma_m$ for $m \in [M]$ in the fifth and sixth steps, respectively. 
The sieve coefficients $\bB_m$ is useful for prediction on new covariates.
After obtaining $\hat\bG_m(\bX_m)$, sieve coefficients can be estimated following the standard sieve approximation procedure.
Indeed, we estimate $\hat\bB_m$ by equation \eqref{eqn:coeff-B}. 
Then the mode-m loading function $\bg_m(\bx)=(g_{m,1}(\bx),\cdots, g_{m,R_m}(\bx))$ can be estimated by $\hat\bg(\bx)=\bPhi(\bx)\hat\bB_m$ for any $\bx$ in the domain of mode-$m$ covariates.
Further, with the estimated $\hat\bG_m$ and tensor factor $\hat\calF$, we estimate $\hat \bA_m$ by regression in \eqref{eqn:full-loading} and $ \hat\bGamma_m$ by projecting $\hat \bA_m$ on the orthogonal column space of $\bPhi_m$ in \eqref{eqn:resid-loading}. 

The above procedure only involves matrix product and matrix SVD, which computes fast. 
Without loss of generality, assume $I_1\geq \cdots\geq I_M$ and $R_1\geq \cdots\geq R_M$. The major computation load comes from the first three three steps. 
Specifically, the projected spectral initialization in the first step requires $O(I_1^2I_2\cdots I_M)$ flops; each iteration in the second step requires $O(I_1\cdots I_M R_1\cdots R_{M-1})$ flops; and the third step requires $O(I_1\cdots I_M R_1\cdots R_M)$ flops.  
Distributed or parallel computing can be employed to speed up the computation \citep{de2014distributed,baskaran2017memory}. 

\begin{algorithm}[htpb!]
    \SetKwInOut{Input}{Input}
    \SetKwInOut{Output}{Output}
    \Input{Tensor $\calY \in \RR^{I_1 \times I_2 \times I_3}$, 
        matrices of covariate $\bX_m$ whose rows are $\bx_{m, i_m}$,
        ranks $R_m$, and
        sets of basis functions $\braces{\phi_{m,j_m}(\cdot)}_{j_m\in[J_m]}$ for $m \in [3]$.  
        }
    \Output{$\hat\calF$, $\hat\bG_m$, $\hat\bB_m$, $\hat\bA_m$ and $\hat\bGamma_m$ for $m \in [3]$. }
    
    For each $m \in [3]$, calculate the projection matrices
    $\bP_m = \bPhi_m\cdot \left(\bPhi_m^\top\bPhi_m\right)^{-1}\bPhi_m^\top$,
    where $\bPhi_m(\bX_m)$ is the $I_m \times {J_m}$ matrix whose $(i_m, j_m)$-th element is $\phi_{m,j_m}(\bx_{m, i_m})$. 
    
    \tcc{1st step: Projected spectral initialization.}
    Let $t = 0$ and calculate 
    \begin{equation} \label{eqn:initialization}
          \tilde\calY = \calY \times_1 \bP_1 \times_2 \bP_2 \times_3 \bP_3 \quad {\rm and}\quad
          \tilde \bG_m^{(0)} = \sqrt{I_m}\cdot{\rm SVD}_{R_m}(\calM_m(\tilde\calY)).
    \end{equation}
    
    \tcc{2nd step: Projected power iterations.}
    \For{$t=1,\dots, t_{max}$}{
        Calculate 
         \begin{equation} \label{eqn:projected-power-iter}
            \begin{split}
                \tilde\bG_1^{(t)} &= \sqrt{I_1}\cdot{\rm SVD}_{R_1}\paran{ \bP_1\cdot\calM_1\paran{\calY \times_2 \tilde\bG_2^{(t-1)\top} \times_3 \tilde\bG_3^{(t-1)\top}}},\\
                \tilde \bG_2^{(t)} &=\sqrt{I_2}\cdot {\rm SVD}_{R_2}\paran{\bP_2\cdot\calM_2\paran{\calY\times_1 \tilde\bG_1^{(t)\top} \times_3 \tilde\bG_3^{(t-1)\top}} },\\
                \tilde \bG_3^{(t)} &= \sqrt{I_3}\cdot{\rm SVD}_{R_3}\paran{\bP_3\cdot\calM_3\paran{\calY\times_1 \tilde \bG_1^{(t)\top} \times_2 \tilde \bG_2^{(t)\top}}}.
            \end{split}
        \end{equation}
    }  
   
   \tcc{3rd step: Projection estimate for tensor factor.}   
   Calculate , with $ \tilde \bG_j = \tilde \bG_j ^{t_{max}} (j=1, 2, 3)$,
    \begin{equation} \label{eqn:proj-F}
        \tilde \calF = (I_1I_2I_3)^{-1}\cdot \calY \times_1 \tilde \bG_1^\top \times_2 \tilde \bG_2^\top \times_3 \tilde \bG_3^\top.
    \end{equation}   

   \tcc{4th step: Orthogonal calibration.}
   Calculate 
   \begin{equation} \label{eqn:rotation}
       \hat \bO_m={\rm SVD}_{R_m}\big(\calM_m(\tilde\calF)\calM_m(\tilde\calF)^{\top}\big), \text{for each } m\in [3].
   \end{equation}
   
   Calculate the ultimate estimator by
   \begin{equation} \label{eqn:FandGm-final}
       \hat\calF=\tilde\calF\times_1\hat \bO_1^{\top}\times_2 \hat\bO_2^{\top} \times_3 \hat \bO_3^{\top}\quad {\rm and}\quad 
       \hat\bG_m=\tilde\bG_m\hat\bO_m, \text{for each } m\in [3].
   \end{equation}
 
 \tcc{5th step: Covariate sieve coefficient matrices.}
 Calculate 
 \begin{equation} \label{eqn:coeff-B}
     \hat\bB_m = \brackets{\bPhi_m^\top\bPhi_m}^{-1}\bPhi_m^\top \hat\bG_m.
 \end{equation}

  \tcc{6th step: Full and covariate-independent loading matrices.}
  Calculate 
  \begin{equation*}
      \hat\bQ_m = \calM_m\paran{\hat\calF \times_{j\neq m} (\hat\bG_j/\sqrt{I_j})},\quad \tilde\calY_m= \calY\times_{j\neq m} \bP_j. 
  \end{equation*}

  Calculate the full loading matrices by
  \begin{equation} \label{eqn:full-loading}
  \hat\bA_m = \calM_m\paran{\tilde\calY_m}\hat\bQ_m^\top \paran{\hat\bQ_m\hat\bQ_m^\top}^{-1}/\sqrt{I_m^-}
  \end{equation}

  Calculate the covariate-independent loading matrices by
  \begin{equation} \label{eqn:resid-loading}
      \hat\bGamma_m 
      = (\bI-\bP_m)\widehat\bA_m
  \end{equation}
   \vspace{-3ex}
   \caption{ Iteratively Projected SVD (IP-SVD) } 
   \label{alg-IP-SVD}
\end{algorithm}

\subsection{Estimating the Tucker ranks} \label{sec:estimate-tucker-ranks}

In this section, we discuss the problem of estimating the Tucker ranks $(R_1, R_2, R_3)$ when they are unknown.
Given $\calY = \calF\times_1\bA_1\times_2\bA_2\times_3\bA_3 + \calE$ with the identifiable condition in Assumption~\ref{assum:tfm-ic-1}, the mode-$1$ matricization of $\calY$ is
\begin{equation}
\calM_1(\calY) = \bA_1\calM_1(\calF)(\bA_2\otimes \bA_3)^{\top} + \calM_1(\calE).\label{eq:matriciation-factor-model}
\end{equation}
The first term in \eqref{eq:matriciation-factor-model} is of rank $R_1$ when $\bA_1\in\mathbb R^{I_1\times R_1}$ and $R_1 \ll I_1$. The second term in \eqref{eq:matriciation-factor-model} is a $I_1\times I_2I_3$ noise matrix with i.i.d entries. Viewing $\calM_1(\calF)(\bA_2\otimes \bA_3)^{\top}$ as a whole, equation~\eqref{eq:matriciation-factor-model} is a factor model and $R_1$ is the corresponding unknown number of factors to be determined.
There exist many approaches in consistently estimating the number of factors from the model \eqref{eq:matriciation-factor-model}. In particular, \cite{lam2012factor, ahn2013eigenvalue, fan2016projected} proposed to estimate number of factors by selecting the largest eigenvalue ratio of $\calM_1(\calY)[\calM_1(\calY)]^\top$.
Due to the noise term in \eqref{eq:matriciation-factor-model}, \cite{fan2016projected} pointed out it is better to work on the projected version of $\calM_1(\calY)$.

Suppose $\tilde \calY = \calY\times_1\bP_1\times_2\bP_2\times_3\bP_3$ is the projected version of $\calY$. Then with Assumption~\ref{assum:stefa-ic-1}, $\mathbb E[\calM_1(\tilde \calY)[\calM_1(\tilde\calY)]^\top] = I_2I_3\bG_1\calM_1(\calF)[\calM_1(\calF)]^\top\bG_1^\top + \mathbb E[\bP_1\calM_1(\calE)(\bP_2\otimes \bP_3)[\calM_1(\calE)]^\top\bP_1^\top]=I_2I_3\bG_1\calM_1(\calF)[\calM_1(\calF)]^\top\bG_1^\top + \sigma_\epsilon^2R_2R_3\bP_1\bone_{I_3\times I_3} \bP_1^\top$ has the same spectrum structure as $\mathbb E[\calM_1(\calY)[\calM_1(\calY)]^\top]$ but with a reduced noise term. Here $\sigma^2_\epsilon$ denotes the variance of the entries of $\calE$ and $\bone_{I_3\times I_3}$ is the $I_3\times I_3$ matrix with all entries equal to one.
Denote by $\lambda_k(\calM_m(\calY)[\calM_m(\calY)]^\top)$ the $k$-th largest eigenvalue of the mode-$m$ matricization of the projected tensor. The eigenvalue ratio estimator of $R_m$ is defined as
\begin{equation}\label{eq:hat_Rm}
\hat R_m = \argmax_{1\leqslant k\leqslant k_{max}}\dfrac{\lambda_k(\calM_m(\tilde\calY)[\calM_m(\tilde\calY)]^\top)}{\lambda_{k+1}(\calM_m(\tilde\calY)[\calM_m(\tilde\calY)]^\top)}
\end{equation}
where $k_{max}$ is an upper bound on the number of factors, such as the nearest integer of $\min\left\{I_m, \prod_{n\neq m}I_n\right\}/2$, say.

The theoretical foundation for this estimator is {\it partially}  provided in \cite{fan2016projected}. 
Specifically, {\it for each mode $m$}, as long as there exists an $\alpha\in(0, 1]$ such that all the $R_m$ eigenvalues of $\left(\prod_{n\neq m}I_n^{1-\alpha}\right)\calM_m(\calF)[\calM_m(\calF)]^\top$ are bounded between two positive constants $c_{min}$ and $c_{max}$. 
The consistency of $\hat R_m$ is provided,  in terms of $\PP[\hat R_m = R_m]\rightarrow 1$, under suitable conditions (e.g., sub-Gaussian noise and $J_m=o(I_m^{1/2})$ ). 
However, while $\hat R_m$ works reasonably well in simulation studies,  it may be {\it statistically sub-optimal} for STEFA because the multi-way tensor structure is under-exploited, i.e., the low-dimensional tensor-product structure of row space of $\calM_m(\tilde \calY)$ is ignored. 
A statistically more efficient approach is to also estimate $\hat R_1, \cdots, \hat R_M$ iteratively.
The idea is similar to the iterative procedure to estimate the loadings in the tensor factor models instead of the single-step estimation as in the vector factor models \citep{fan2016projected}. 

Specifically, we note that $\tilde\calY$ in \eqref{eq:hat_Rm} is a one-time projection onto the column space of $\bP_1,\cdots,\bP_M$. 
To make the estimation of $\hat R_1, \cdots, \hat R_M$ iterative, $\tilde\calY$ in \eqref{eq:hat_Rm} should be replaced by a projection of the tensor onto the column spaces of $\tilde{\bG}_1^{(t)},\cdots,\tilde{\bG}_{m-1}^{(t)}, \tilde{\bG}_{m+1}^{(t-1)}, \cdots, \tilde{\bG}_M^{(t-1)}$ when estimating $\hat R_{m}$ for $m=1,\cdots,M$.  
However, in this case, establishing the consistency theory jointly for all $\hat R_m$, i.e., $\PP[\cap_{m=1}^M \{\hat R_m=R_m\}]$ can be more challenging than that in the PCA setting \citep{fan2016projected},  due to the interplay between $R_m$'s and dependence among $\hat R_m$'s.
We leave the theoretical investigations of $\hat R_m$ for future work and suggest interested readers to refer to a very recent work \citep{han2022rank} on the rank determination for tensor factor model.

\subsection{Prediction} \label{sec:pred}

The STEFA model can be applied to predict unobserved outcomes from the available data.
We illustrate the procedure of prediction along the first mode under model \eqref{eqn:tucker-covari-1}.
Prediction along other modes can be done in a similar fashion.
The task here is to predict a new $I_1^{new}\times I_2\times I_3$ tensor $\calY^{new}$ with new covariate matrix $\bX_1^{new}$ whose rows are $I_1^{new}$ new covariate $\{\bx_{1, i_1}^{new} \}_{i_1\in[I_1^{new}]}$ along mode 1. 
Under the STEFA model \eqref{eqn:tucker-covari-1}, the tensor observation $\calY$ assumes the following structure
\begin{equation*} 
\calY = \underbrace{\calF \times_1 \bPhi_1(\bX_1)\bB_1 \times_2 \bA_2 \times_3 \bA_3}_{\text{sieve signal}} + \underbrace{\calF \times_1 \bLambda_1 \times_2 \bA_2 \times_3 \bA_3}_{\text{residual signal}}+ \calE,
\end{equation*}
where $\bPhi_1(\bX_1)\bB_1$ is the part explained by the sieve approximation of $\bX_1$ and $\bLambda_1 = \bR_1(\bX_1) + \bGamma_1$ contains the sieve residual and the orthogonal part.
In Section \ref{sec:est}, we obtain estimators $~\hat{\cdot}~$ for the unknowns on the right hand side.
Note that $\bLambda_1$ can be estimated as a whole whereas its component $\bR_1(\bX_1)$ and $\bGamma_1$ are not separable.
With new observation $\bX_1^{new}$, we estimate the sieve signal using
\[
\hat\calS^{new}_{sieve} = \hat\calF \times_1 \bPhi_1(\bX^{new})\,\hat\bB_1 \times_2 \hat\bA_2 \times_3 \hat\bA_3.
\]
For the residual part, we use the simple kernel smoothing over mode-1 using $\bX_1$ and $\bX_1^{new}$.
Specifically,
we have the residual signal estimator $\hat\calS_{resid} = \hat\calF\times_1 \hat\bLam_1 \times_2 \hat\bA_2 \times_3 \hat\bA_3$.
Define the kernel weight matrix $\bW \in \RR^{I_1^{new} \times I_1}$ with entry
\[
w_{ij} = \frac{ K_h(dist(\bx_{1,i}^{new},\bx_{1,j})) }{ \sum_{j = 1}^{I_1} K_h(dist(\bx_{1,i}^{new},\bx_{1,j})) }, \quad i \in [I_1^{new}] \text{ and } j \in [I_1].
\]
where $K_h(\cdot)$ is the kernel function, $dist(\cdot, \cdot)$ is a pre-defined distance function such as the Euclidean distance, and $\bx_{1,i}$ is the $i$-th row of $\bX_1$.
We estimate the new residual signal by
\begin{equation} \label{eqn:kernel-smoothing}
\hat\calS_{resid}^{new} = \hat\calS_{resid} \times_1 \bW,
\end{equation}
the derivation of which is given in Section B of the supplementary material. 
Finally, our prediction for new entries corresponding to new covariate matrix $\bX_1^{new}$ is given by
\begin{equation}  \label{eqn:pred-stefa}
\hat\calY^{new} = \hat\calS^{new}_{sieve} + \hat\calS_{resid}^{new}.
\end{equation}


\begin{remarkx}
    The identification condition Assumption \ref{assum:stefa-ic-1} is not restrictive in the sense that it is only used to help us separate the loadings and the factor, that is, fix a numerical solution corresponding to a specific linear transformation among multiple equivalent ones. 
    The signal part $\calS$ will not be affected by the specific linear transformation and thus the identification Assumption  \ref{assum:stefa-ic-1} will not affect the prediction.  
    Suppose the true decomposition of the signal part $\mathring{\calS} = \mathring{\calS}_{sieve} +  \mathring{\calS}_{resid}$ is 
   \begin{equation*}
   \calS_{sieve} = \mathring{\calF} \times_1 \bPhi_1(\bX_1) \mathring{\bB}_1 \times_2 \mathring{\bA}_2 \times_3 \mathring{\bA}_3, \quad\text{and}\quad
   \calS_{resid} = \mathring{\calF} \times_1 \mathring{\bLam}_1 \times_2 \mathring{\bA}_2 \times_3 \mathring{\bA}_3, 
    \end{equation*}
   where $\mathring{\calF}$, $\mathring{\bB}_1$, $\mathring{\bA}_2$, and $ \mathring{\bA}_3$ are the true components. 
   Our estimation targets are restricted by Assumption \ref{assum:tfm-ic-1} and \ref{assum:stefa-ic-1} on observed discrete rows of $\bX_1$ and they are linear transformations of their true counterparts. 
   That is, $\bB_1 := \mathring{\bB}_1 \bH_1$, $\bA_2 := \mathring{\bA}_2 \bH_2$,  $\bA_3 = \mathring{\bA}_3 \bH_3$, and $\calF = \mathring{\calF} \times_1  \bH_1^{-1}  \times_2 \bH_2^{-1} \times_3 \bH_3^{-1}$ for some invertible matrices $\bH_1$, $\bH_2$, and $\bH_3$. 
  Algorithm \ref{alg-IP-SVD} outputs one specific solution
  $\hat\calF, \hat\bB_1,  \hat\bGamma_1, \hat\bA_2, \hat\bA_3$ such that Assumption  \ref{assum:stefa-ic-1} is satisfied on the observed $\bX_1$ for $ \hat\bB_1$ and $\hat\bGamma_1$, and Assumption \ref{assum:tfm-ic-1} is satisfied for $\hat\bA_2$ and $\hat\bA_3$. 
  
  In Section \ref{sec:thm}, our theoretical results show that the estimators output by the Algorithm \ref{alg-IP-SVD} is close to the estimation targets that satisfy Assumption \ref{assum:tfm-ic-1} and \ref{assum:stefa-ic-1}. 
  As a result, 
  $\hat\bB_1 \approx \mathring{\bB}_1 \bH_1$, 
  $\hat\bA_2 \approx \mathring{\bA}_2 \bH_2$, 
  $\hat\bA_3 \approx \mathring{\bA}_3 \bH_3$, and 
  $\hat\calF \approx \mathring{\calF} \times_1 \bH_1^{-1} \times_2 \bH_2^{-1} \times_3 \bH_3^{-1}$. 
  For a new observation $\bX^{new}_1$, we have 
  \begin{align*}
      \hat \calS_{sieve}^{new} & = \hat \calF \times_1 \bPhi_1(\bX^{new}_1) \hat\bB_1 \times_2 \hat \bA_2 \times_3 \hat\bA_3 \\
      & \approx  (\mathring{\calF} \times_1  \bH_1^{-1}  \times_2 \bH_2^{-1} \times_3 \bH_3^{-1}) \times_1 ( \bPhi_1(\bX^{new}_1) \mathring{\bB}_1 \bH_1)  \times_2 ( \mathring{\bA}_2 \bH_2) \times_3 ( \mathring{\bA}_3 \bH_3) \\
      & = \mathring{\calF} \times_1 \bPhi_1(\bX^{new}_1) \mathring{\bB_1} \bH_1 \bH_1^{-1} 
      \times_2 (\mathring{\bA_2}\bH_2\bH_2^{-1}) 
      \times_3 (\mathring{\bA_3}\bH_3\bH_3^{-1}) \\
      & = \mathring{\calF} \times_1 \bPhi_1(\bX^{new}_1) \mathring{\bB}_1 \times_2 \mathring{\bA}_2 \times_3 \mathring{\bA}_3 \\
      & = \mathring{\calS}_{sieve}^{new}.
  \end{align*} 
  Here, the linear transformations $\bH_1$, $\bH_2$ and $\bH_3$ will depend on $\bX_1$.  
  But the key point here is that the respective $\bH_1$ and $\bH_1^{-1}$ transformation of $\mathring{\calF}$ and $\mathring{\bB_1}$ will canceled out and the signal part as a whole will not be affected by any specific $\bH_1$ or $\bX_1$. 
\end{remarkx}

\begin{remarkx}[Comparison to the MMC tensor regression]
    IP-SVD aims to estimate both the covariate-explainable and covariate-orthogonal components in the STEFA model while the objective of the MMC tensor regression \citep{xu2019generalized,hu2021generalized} is to estimate the reduced-rank coefficients in a tensor regression with observed independent variables. 
    For prediction, the covariate-explainable component in the STEFA model is predicted by Sieve approximation and the covariate-orthogonal component is predicted by kernel approximation, which are very different from the regression-based prediction in  \cite{xu2019generalized}. 
    We report a simulation in Appendix section \ref{sec:compare-xu-2019} to show cases when STEFA performance better in prediction.
    
\end{remarkx}

\section{Theoretical Results} \label{sec:thm}

In this section, we establish the statistical properties of the estimators in Algorithm \ref{alg-IP-SVD} assuming data is generated from model (\ref{eqn:tucker-covari-3}).
Lemma \ref{lem:init} and Corollary \ref{cor:optimalJ} provide error bounds of the column spaces spanned by $\tilde\bG_m^{(t)}$ for $0 \le t \le t_{max}$, which concerns with the estimation errors of the projected spectral initialization and the projected power iterations in Algorithm \ref{alg-IP-SVD}. 
Theorem \ref{thm:F} provides the estimation errors of the final estimators of $\hat\bG_m$ and $\hat\calF$ from their respective estimation targets $\bG_m$ and $\calF$ that satisfy identification condition Assumption \ref{assum:stefa-ic-1}. 
Theorem \ref{thm:Gamma} provides the errors for the covariate-independent loadings $\hat\bGamma_m$. 
We provide discussions after each theorem, revealing some interesting observations in the interaction of parametric, non-parametric estimations, and iterative tensor projection.

We impose two assumptions, respectively, on the smoothness of the loading functions and on tail behavior of the noise. 
The smoothness assumption is standard in the non-parametric literature, while the tail condition is weaker than what is usually assumed in the tensor decomposition literature.

\begin{assump}[Smooth loading functions] \label{assump:sieve}
    We assume that, for all tensor modes $m \in [M]$, 
    \begin{enumerate}[label=(\roman*)]
        \item 
        The loading functions $g_{m,r_m}(\bx_m)$, $\bx_m \in \calX_m \in \RR^{D_m}$ belong to a H{\"o}lder class $\calA^{\tau}_c(\calX_m)$ ($\tau$-smooth) defined by
        \[
        \calA^{\tau}_c(\calX_m) = \left \{g \in \calC^q(\calX_m): \underset{[\eta]\le q}{\sup} \; \underset{\bx \in \calX_m}{\sup} \left| D^{\eta}\, g(\bx) \right| \le c, \text{ and }
        \underset{[\eta]=q}{\sup} \; \underset{\bu, \bv \in \calX_m}{\sup} \frac{\left| D^{\eta}\, g(\bu) - D^{\eta}\, g(\bv) \right|}{\norm{\bu - \bv}^{\beta}_2}  \le c   \right \},
        \]
        for some positive number $c$, where $ \tau=q+\beta$ is assumed $\tau\geq 2$. Here, $\calC^q(\calX_m)$ is the space of all $q$-times continuously differentiable real-value functions on $\calX_m$.
        The differential operator $D^{\eta}$ is defined as $D^{\eta} = \frac{\partial^{[\eta]}}{\partial x_1^{\eta_1}\cdots\partial x_{d_m}^{\eta_{D_m}}}$ and $[\eta] = \eta_1 + \cdots + \eta_{D_m}$ for non-negative integers $\eta_1,\cdots,\eta_{D_m}$.
        
        \item The sieve coefficients $\bb_{m,r_m}= [b_{m,r_m,1} \; b_{m,r_m,2} \; \cdots \; b_{m,r_m,J_m}]^{\top}$ for all $1\leq r_m\leq R_m$, satisfy, as $J_m\to\infty$,
        $$
        \sup_{\bx\in \mathcal{X}_m} \Big|g_{m,r_m}(\bx)- \sum\nolimits_{j=1}^{J_m}b_{m,r_m,j}\phi_j(\bx)\Big|^2=O(J_m^{-\tau})
        $$
        where $\braces{\phi_j(\cdot)}_{j=1}^{J_m}$ is a set of basis functions, and $J_m$ is the sieve dimension.
    \end{enumerate}
\end{assump}

Assumption~\ref{assump:sieve}  imposes mild conditions on loading functions so that their sieve approximation errors are well controlled.
It is satisfied if the loading functions $g_{m,r_m}(\bx_m)$, $m\in[M]$, belong to the H{\"o}lder class \citep{tsybakov2008introduction}.
The basis functions that satisfy Assumption~\ref{assump:sieve} include polynomial, wavelet basis, and B-splines \citep{chen2007large}. 
To nonparametrically estimate  $g_{m,r_m}(\bx_m)$ without the curse of dimensionality when $\bx_m$ is multivariate, we could impose certain low-dimensional structure on  $g_{m,r_m}(\bx_m)$, such as an additive structure used in \cite{fan2016projected}. 
To emphasis the main theoretical founding, we use $\tau$ as a given parameter in the following theorems and avoid dissecting it from the perspective of nonparametric estimation.  

\begin{assump}[Sub-exponential noise] \label{assump:noise} 
Each entry $\eps_{\omega}$ of the noise tensor $\calE$ are i.i.d. sub-exponential random variables with $\EE(\eps_{\omega})= 0$  
and $\EE\exp(\eps_{\omega}/K_0)\leq e$ for some constant $K_0=O(1)$, for all $\omega\in[I_1]\times [I_2]\times [I_3]$.
\end{assump}
The independence condition in Assumption~\ref{assump:noise} is standard for the statistical analysis of tensor factor model \citep{richard2014statistical, zhang2018tensor, xia2019sup,han2020tensor} and tensor time series \citep{chen2019factor,han2020tensor}.
However, all these prior works assume the Gaussian or sub-Gaussian distributions of noise.
Our Assumption~\ref{assump:noise} is weaker, which requires only a sub-exponential tail on the noise. Note that Assumption~\ref{assump:noise} implies that ${\rm Var}(\eps_\omega)=O(1)$. 


We first present the estimation errors related to the iterates of covariate-relevant loadings $\tilde\bG_m^{(t)}$ for $0 \le t \le t_{max}$, which correspond to the rates of convergence of the eigen-space spanned by the columns of $\bG_m^{(t)}$.
For a clear presentation, the theorems are presented for the case of $M=3$. 
The results can be easily extended to higher order tensors with $M > 3$. 
 Recall that we write $I_m$'s for the tensor dimensions,  $J_m$'s for the sieve dimensions of covariate-relevant component,  and $R_m$'s for the Tucker ranks of covariate-independent component. 
We also assume that $I_1\geq I_2\geq I_3$ and $R_1\geq R_2\geq R_3$ for brevity of notations. 
The {\it signal strength} of $\calF$ is measured by 
$\lambda_{\min}:=\underset{m\in[M]}{\min}\sigma_{R_m}\big(\calM_m(\calF)\big)$, 
which is the smallest  singular value of all the matricizations of $\calF$. 
The condition number of $\calF$ is defined as 
$\kappa_0:=\underset{m\in[M]}{\max}\ \|\calM_m(\calF)\|/\lambda_{\min}$. 
Since the noise has a bounded variance under Assumption~\ref{assump:noise}, the signal strength $\lambda_{\min}$ is regarded as the {\it signal-to-noise ratio} (SNR). See a similar definition in \cite{zhang2018tensor}. 

\begin{lemma}[Projected spectral initialization and projected power iterations] \label{lem:init}
    Suppose that Assumptions \ref{assump:sieve} and \ref{assump:noise} hold 
    under model (\ref{eqn:tucker-covari-3}),  
    the condition number $\kappa_0=O(1)$, $J_1\asymp J_2\asymp J_3$, and $R_m\leq J_m$. 
    If $\sqrt{I_1I_2I_3}\lambda_{\min}\geq C_1\Big(\kappa_0\sqrt{R_1J_1}\log^2I_1+(R_1J_1J_2J_3)^{1/4}\log^2I_1\Big)$ and $R_m J_m^{-\tau}\leq C_1^{-1}$ for some large enough absolute constant $C_1>0$.  Then it holds with probability at least $1-7I_1^{-2}$  that
    \begin{align}\label{eq:init_err}
        I_m^{-1}\norm{ \tilde\bG_m^{(0)} \tilde \bG_m^{(0)\top}-\bG_m\bG_m^{\top}  }_{\rm F}
        \leq C_4\left(\frac{\sqrt{R_1J_1}\log^2I_1}{\lambda_{\min}\sqrt{I_1I_2I_3}}+\frac{\sqrt{R_1J_1J_2J_3}\log^4I_1}{\lambda_{\min}^2I_1I_2I_3}+\sqrt{R_m} J_m^{-\tau/2}\right)
    \end{align}
    for some absolute constant $C_4>0$. 
    Moreover, for all  $t=1,\cdots,t_{\max}$, it holds with probability at least $1-48I_1^{-2}$,
    \begin{align} \label{eq:pi_rate}
        \max_{m} \; I_m^{-1} \norm{ \tilde \bG_m^{(t)}\tilde \bG_m^{(t)\top}-\bG_m\bG_m^{\top} }_{\rm F} &\leq \frac{1}{2}\cdot\max_{m} \; I_m^{-1} \norm{\tilde \bG_m^{(t-1)}\tilde \bG_m^{(t-1)\top}-\bG_m\bG_m^{\top} }_{\rm F}\\
        +&2\sqrt{R_1}J_1^{-\tau/2}+C_4'\frac{\sqrt{J_1R_1+R_1R_2R_3}\log^{2}I_1}{\lambda_{\min}\sqrt{I_1I_2I_3}},\nonumber
    \end{align}
    where $C_4'>0$ is an absolute constant. 
    Therefore, after $t_{\max}=O(\log(\lambda_{\min}\sqrt{I_1I_2I_3/J_1})+\tau\cdot \log (J_1)+1)$ iterations, it holds with probability at least $1-48I_1^{-2}$.
    \begin{align}\label{eq:Gtmax-G}
        \max_m \; I_m^{-1} \norm{ \tilde \bG_m^{(t_{\max})}\tilde \bG_m^{(t_{\max})\top}-\bG_m\bG_m^{\top} }_{\rm F}
        \leq C_5'\frac{\sqrt{J_1R_1+R_1R_2R_3}\log^{2}I_1}{\lambda_{\min}\sqrt{I_1I_2I_3}}+2\sqrt{R_1}J_1^{-\tau/2},
    \end{align}
    where $C_5'>0$ is an absolute constant. 
\end{lemma}
Recall that $\tau$ characterizes the smoothness of the covariate-relevant loading functions.  As shown in Lemma~\ref{lem:init},  if $\tau$ is larger,  the estimation error decreases. 
The projected initialization $\tilde\bG_m^{(0)}$ in Algorithm \ref{alg-IP-SVD} is obtained by the projected PCA \citep{fan2016projected}. 
By Lemma~\ref{lem:init}, a warm initialization satisfying $I_m^{-1} \norm{\tilde \bG_m^{(0)}\tilde \bG_m^{(0)\top}-\bG_m\bG_m^{\top} } \leq 1/2$ is guaranteed as long as $\sqrt{I_1I_2I_3}\lambda_{\min}\geq  C_4'(J_1J_2J_3)^{1/4}\log^2 I_1+C_5'\sqrt{J_1}\log^2I_1$ for $R_1=O(1)$ and some absolute constants $C_4'$ and $C_5'$. 
Compared with the vanilla spectral initialization \citep{zhang2018tensor,xia2019sup, richard2014statistical} that requires $\sqrt{I_1I_2I_3}\lambda_{\min}\gg (I_1I_2I_3)^{1/4}$ and sub-Gaussian noise, our projected spectral initialization requires a substantially weaker condition on the signal strength when $J_m\ll I_m$. 
The logarithmic factors in Lemma~\ref{lem:init} emerge from the sub-exponential tail of noise distribution, which has never been studied in existing literature.
Moreover, the initialization error \eqref{eq:init_err} has two leading terms.
When the signal strength $\lambda_{\min}$ is only medium strong, that is, $\sqrt{I_1I_2I_3}\lambda_{\min}\gg (J_1J_2J_3)^{1/4}$ but $\sqrt{I_1I_2I_3}\lambda_{\min}\ll J_1$, the second term in \eqref{eq:init_err} dominates and the initialization error is at the order of $(R_1J_1J_2J_3)^{1/2}\log^4(I_1)/(\lambda_{\min}^2I_1I_2I_3)$.

The initialization obtained by projected PCA \citep{fan2016projected} is sub-optimal for tensor data and the IP-SVD refines it by projected power iteration. 
Equation \eqref{eq:pi_rate} shows that the error is decreasing after each mid-step projected power iteration. 
In the end, the error \eqref{eq:Gtmax-G} of the final estimator is at a smaller order of $(J_1R_1)^{1/2}\log^{2}(I_1)/\big(\lambda_{\min}(I_1I_2I_3)^{1/2}\big)$.

The estimation error in \eqref{eq:Gtmax-G} of the final estimator is a mixture of two terms. 
The first term can be viewed as a parametric rate and is related to the model complexity in approximating $\bG_m$ by the column space of $\bPhi(\bX_m)$. 
Similar to usual parametric settings, the dimension of the $J_1 \times R_1$ parameter matrix $\bB_1$ appears in the numerator of this first term. 
An interesting fact is that the parametric estimation error decreases when the signal strength $\lambda_{\min}$ increases, and increases when the Sieve dimension $J_m$ increases.  

The second term in the estimation error of \eqref{eq:Gtmax-G} can be viewed as a non-parametric rate and is related to functional approximation errors which relies crucially on the Sieve dimension. 
This rate is unaffected when signal strength $\lambda_{\min}$ changes, but decreases when the Sieve dimension $J_m$ increases. 
So there is a trade-off in choosing Sieve dimension in order to balance the parametric and non-parametric rates. 
The following result establishes the estimation error with the optimally-chosen Sieve dimension $J_m$. 

\begin{corollary}\label{cor:optimalJ}
Under the conditions of Lemma~\ref{lem:init} and $J_1=\lceil C_6\big(\log^{2}(I_1)/(\lambda_{\min}\sqrt{I_1I_2I_3})\big)^{-2/(\tau+1)}\rceil$, it holds that, for some absolute constants $C_6, C_7,C_8>0$, with probability at least $1-48I_1^{-2}$,
\begin{align*}
\max_m\; I_m^{-1} \norm{ \widetilde{\bG}_m^{(t_{\max})}\widetilde{\bG}_m^{(t_{\max})\top}-\bG_m\bG_m^{\top} }_{\rm F} 
\leq C_7\sqrt{R_1}\Big(\frac{\log^{2}I_1}{\lambda_{\min}\sqrt{I_1I_2I_3}}\Big)^{\frac{\tau}{\tau+1}}+C_8\frac{\sqrt{R_1R_2R_3}\log^{2}I_1}{\lambda_{\min}\sqrt{I_1I_2I_3}}.
\end{align*}
\end{corollary}

The first rate in Corollary~\ref{cor:optimalJ} dominates whenever $R_2, R_3=O(1)$.  
This rate is very typical in non-parametric regression \citep{chen2007large, tsybakov2008introduction} and it shows that the estimation error of $\widetilde{\bG}_m^{(t_{\max})}$ decreases when the true loading functions are smoother in Assumption~\ref{assump:sieve}.

Till now, we have shown that the space spanned by the columns of the loadings $\bG_m$ can be consistently estimated.  
Next, we show that the columns of $\bG_m$ and tensor factor $\calF$ can be determined up to a sign for the restricted estimation targets $\bG_m$ and $\calF$ that satisfy the identification condition Assumption \ref{assum:stefa-ic-1}. 
The concept of the eigengap of tensor $\calF$ is needed before we present those results. 
Here, we define 
\begin{equation*}
{\rm Egap}(\calF)=\min_{1\leq m\leq M}\Big\{\min_{1\leq j\leq R_m}\sigma_j\big(\calM_m(\calF)\big)-\sigma_{j+1}\big(\calM_m(\calF)\big)\Big\},
\end{equation*}
where we denote $\sigma_{R_m+1}\big(\calM_m(\calF)\big)=0$. 
Intuitively, ${\rm Egap}(\calF)$ represents the smallest gap of singular values of $\calM_m(\calF)$ for all $m\in[M]$.
The eigengap condition on ${\rm Egap}(\calF)$ is imposed to ensure that the order of singular values will not be violated by small perturbations. 

\begin{theorem}[Covariate-relevant loadings and tensor factor] \label{thm:F}
    Suppose that the signal strength satisfies $\sqrt{I_1I_2I_3}\lambda_{\min}\geq C_0\Big(\kappa_0\sqrt{R_1J_1}\log^2I_1+(R_1J_1J_2J_3)^{1/4}\log^2I_1\Big)$ under model (\ref{eqn:tucker-covari-3}), the conditions of Lemma~\ref{lem:init} and
    \begin{equation*}
        {\rm Egap}(\calF)\geq C_1 \sqrt{J_1R_1^2+R_1^2R_2R_3}\log^{2}(I_1)/\sqrt{I_1I_2I_3}+C_2\lambda_{\min}R_1J_1^{-\tau/2}
    \end{equation*}
    hold for some absolute constants $C_0, C_1, C_2>0$.
    Let $\hat\calF$ and $\hat \bG_m$ be the estimators after orthogonality calibration \eqref{eqn:FandGm-final}. 
    Then there exist diagonal matrices $\{\bS_m\}_{m\in[3]}$ whose diagonal entries are either $-1$ or $+1$ such that, with probability at least $1-49I_1^{-2}$,
    \begin{equation*}
        \max_{m\in[3]}\; I_m^{-1/2} \norm{\hat \bG_m-\bG_m\bS_m}_{\rm F}
        \leq C_7\frac{\sqrt{J_1R_1+R_1R_2R_3}\log^{2}I_1}{\lambda_{\min}\sqrt{I_1I_2I_3}}+C_8\sqrt{R_1} J_1^{-\tau/2}
    \end{equation*}
    and
    \begin{equation*}
    \|\hat\calF-\calF\times_1\bS_1\times_2\bS_2\times_3 \bS_3\|_{\rm F}\leq C_7'\frac{\sqrt{J_1R_1+R_1R_2R_3}\log^{2}I_1}{\sqrt{I_1I_2I_3}}+C_8'\lambda_{\min}\sqrt{R_1} J_1^{-\tau/2}
    \end{equation*}
where $C_7, C_8, C_7', C_8'>0$ are absolute constants.
\end{theorem}

Here the columns of factor loadings $\bG_m$ can be determined up to a sign which is common in matrix singular value decomposition. 
Similarly to Corollary~\ref{cor:optimalJ}, if we choose $J_1\asymp \lceil \big(\log^{2}(I_1)/(\lambda_{\min}\sqrt{I_1I_2I_3})\big)^{-2/(\tau+1)}\rceil$, Theorem~\ref{thm:F} implies that 
 \begin{equation*}
\|\hat\calF-\calF\times_1\bS_1\times_2\bS_2\times_3 \bS_3\|_{\rm F}\leq C_7'\sqrt{R_1}\lambda_{\min}^{\frac{1}{\tau+1}}\cdot\Big(\frac{\log^{2}I_1}{\sqrt{I_1I_2I_3}}\Big)^{\frac{\tau}{\tau+1}}+C_8\Big(\frac{R_1R_2R_3\log^4I_1}{I_1I_2I_3}\Big)^{1/2}.
 \end{equation*}
The second term on the right hand side is negligible if $\lambda_{\min}\sqrt{I_1I_2I_3}\geq C_8'(R_2R_3)^{\frac{\tau+1}{2}}\log^{2}I_1$. 
So the first term dominates when $R_1=O(1)$. 
Moreover, the first term decreases when $\tau$ increases implying that the core tensor can be more accurately estimated if the loading functions are smoother.

Finally, we bound the estimation error for the covariate-independent components.
\begin{theorem}[Covariate-independent loadings]\label{thm:Gamma}
    Suppose the conditions of Theorem~\ref{thm:F} hold under model (\ref{eqn:tucker-covari-3}).  Then, for all $m=1,2,3$, it holds with probability at least $1-50I_1^{-2}$ that 
    \begin{align*}
    \norm{\hat\bGamma_m-\bGamma_m\bS_m}_{\rm F} 
    \leq & \quad C_8\|\bGamma_m\|\cdot\Big( \frac{\sqrt{J_1R_1+R_1R_2R_3}\log^{2}I_1}{\lambda_{\min}\sqrt{I_1I_2I_3}}+\sqrt{R_1}J_1^{-\tau/2}\Big) \\
   &  +C_9\frac{\sqrt{R_1I_1+J_1R_1^2+R_1^2R_2R_3}\log^{3/2} I_1}{\lambda_{\min}\sqrt{I_2I_3}}
    \end{align*}
    where $\bS_m$ is defined as in Theorem~\ref{thm:F} and $C_8,C_9>0$ are some absolute constants. 
    By choosing  $J_1\asymp \lceil \big(\log^{2}(I_1)/(\lambda_{\min}\sqrt{I_1I_2I_3})\big)^{-2/(\tau+1)}\rceil$, we get, with probability at least $1-50I_1^{-2}$, that
    \begin{align}\label{eq:hatGamma-Gamma}
    \norm{\hat\bGamma_m-\bGamma_m\bS_m}_{\rm F}\
    \leq & \quad C_8'\|\bGamma_m\|\cdot\bigg(\sqrt{R_1}\Big(\frac{\log^{2}I_1}{\lambda_{\min}\sqrt{I_1I_2I_3}}\Big)^{\frac{\tau}{\tau+1}}+\frac{\sqrt{R_1R_2R_3}\log^{2}I_1}{\lambda_{\min}\sqrt{I_1I_2I_3}}\bigg) \\
     & +C_9'\frac{\sqrt{R_1I_1+J_1R_1^2+R_1^2R_2R_3}\log^{3/2} I_1}{\lambda_{\min}\sqrt{I_2I_3}}\notag
    \end{align}
for some absolute constants $C_8', C_9'>0$.
\end{theorem}

The error bound \eqref{eq:hatGamma-Gamma} involves two terms. 
The second term is similar to (except a logarithmic factor) the typical rate of tensor factor models \citep{zhang2018tensor,richard2014statistical} if $I_1\geq J_1R_1+R_1R_2R_3$. 
However, there is a crucial difference in the STEFA model since no condition is required for $\bGamma_m$ (such as orthogonality of its columns).
The first term in \eqref{eq:hatGamma-Gamma} emerges from the estimation error of covariate-relevant component $\bG_m$. 
For ease of exposition, assume $\|\bGamma_m\|_{\rm F}\asymp R_1^{1/2}\|\bGamma_m\|$ and $R_1=O(1)$. The rate \eqref{eq:hatGamma-Gamma} yields the relative error of $\hat\bGamma_m$ as
\begin{equation}\label{eq:rel_Gamma}
\frac{\|\hat\bGamma_m-\bGamma_m\bS_m\|_{\rm F}}{\|\bGamma_m\|_{\rm F}}\leq C_8'\Big(\frac{\log^{2}I_1}{\lambda_{\min}\sqrt{I_1I_2I_3}}\Big)^{\frac{\tau}{\tau+1}}+C_9'\frac{\sqrt{I_1}\log^{3/2}I_1}{\lambda_{\min}\sqrt{I_2I_3}\|\bGamma_m\|}.
\end{equation}
Therefore, the estimator $\hat\bGamma_m$ is consistent in relative Frobenius-norm error if $\lambda_{\min}\sqrt{I_1I_2I_3}\gg \log^2 I_1$ and $\lambda_{\min}\|\bGamma_m\|(I_2I_3)^{1/2}\gg I_1^{1/2}\log^{3/2}I_1$. The former condition is mild in view of the signal strength condition in Theorem~\ref{thm:F}. The latter condition relies on the magnitude of $\|\bGamma_m\|$, and it dominates the former one if $\|\bGamma_m\|\leq I_1\log^{-1/2}I_1$. Basically, if $\|\bGamma_m\|$ becomes smaller,  a larger signal strength $\lambda_{\min}$ is required to ensure the consistency of $\hat\bGamma_m$. 

{\it Comparison with HOOI}. Ignoring the covariate information and assuming the orthogonality of the columns of $\bGamma_m$, one can apply the higher-order orthogonal iteration (HOOI) algorithm to estimate $\bGamma_m$ (additional treatments are perhaps necessary to separate $\bGamma_m$ from the covariate-relevant component $\bG_m(\bX_m)$). It is proved in \cite{zhang2018tensor} that if the SNR satisfies $\lambda_{\min}\|\bGamma_1\|\|\bGamma_2\|\|\bGamma_3\|\geq C_0(I_1^{1/2}+(I_1I_2I_3)^{1/4})$, the HOOI algorithm outputs an estimator attaining, with high probability,  a relative Frobenius-norm error rate as
\begin{equation}\label{eq:hooi}
\frac{\|\hat\bGamma_m^{\hooi}-\bGamma_m\bO_m\|_{\rm F}}{\|\bGamma_m\|_{\rm F}}\leq C_8''\frac{\sqrt{I_1}}{\lambda_{\min}\|\bGamma_1\|\|\bGamma_2\|\|\bGamma_3\|}
\end{equation}
where $\bO_m$ is an orthogonal matrix that minimizes $\|\hat\bGamma_m^{\hooi}-\bGamma_m\bO\|_{\rm F}$. 
For ease of comparison, let us further assume $\|\bGamma_1\|\asymp \|\bGamma_2\|\asymp \|\bGamma_3\|$ and $I_1\asymp I_2\asymp I_3$. 
Comparing (\ref{eq:hooi}) to (\ref{eq:rel_Gamma}), if $\|\bGamma_1\|\gg I_1^{1/2}$, i.e., the covariate-independent component has a signal strength (characterized by $\|\bGamma_1\|$)  stronger than the covariate-relevant one (that is simply $I_1^{1/2}$ by Assumption~\ref{assum:stefa-ic-1}) ,  HOOI achieves a sharper error rate than our STEFA-based estimator.  On the other hand, STEFA can outperform HOOI when $\|\bGamma_1\|\ll I_1^{1/2}$. Nonetheless, STEFA still enjoys a major advantage over HOOI by exploiting the covariate information. 
Indeed, the auxiliary covariates can potentially reduce the SNR requirement. Note that our Theorem~\ref{thm:Gamma} suggests that an SNR condition $\sqrt{I_1I_2I_3}\lambda_{\min}\geq C_0 (J_1J_2J_3)^{1/4}$ suffices to estimate the covariate-independent component, while HOOI requires an SNR condition $\|\bGamma_1\|\|\bGamma_2\|\|\bGamma_3\|\lambda_{\min}\geq C_0 (I_1I_2I_3)^{1/4}$. Therefore, if $J_1, J_2, J_3\ll I_1$ and $\|\bGamma_1\|=O(I_1^{1/2})$, STEFA requires a weaker SNR condition. 

\section{Numerical studies} \label{sec:simu}

In this section, we use Monte Carlo simulations to assess the performances of the IP-SVD algorithm on the STEFA model under different settings. 
In all examples, the observation tensor $\calY$ is generated according to model~\eqref{eqn:tucker-covari-3}, of which the dimensions of the latent tensor factor and the covariates are fixed at $R_m=R=3$ and $D_m=D=2$.
We generate the noise tensor $\calE$ with each entry $\eps_{i_1 i_2 i_3} \sim \calN(0,1)$.
The core tensor $\cF$ is obtained from the core tensor of the Tucker decomposition of a $R_1\times R_2\times R_3$ random tensor with i.i.d. $\calN(0,1)$ entries.
The core tensor is further scaled such that $\lambda_{\min} \defeq \min_{m} \sigma_{R_m}(\calM_m(\calF)) = (I_{min})^\alpha$, where $I_{min}=\min\{I_1, I_2, I_3\}$ with some desired value of $\alpha$.
This characterization of signal strength was proposed in \cite{zhang2018tensor} and we focus on the low signal-to-noise ratio regime ($\alpha \leqslant  0.5$), where HOOI is known to have unsatisfactory performance.

The explanatory variable matrix $\bX_m\in\mathbb R^{I_m\times D_m}$ is generated from independent uniform distribution $\calU(0, 1)$. 
We generate $\bG_m = \brackets{ g_{m,r_m}(\bx_{m,i_m \cdot})}_{i_m r_m}$ by:
\begin{equation} \label{eq:sim-sieve}
g_{m, r_m}(\bx_{m,i_m \cdot}) = \xi_{m, r_m, 0} + \sum_{d_m=1}^{D_m}\sum_{j=1}^{J^*}  \xi_{m, r_m,d_m, j}\kappa^{j-1}P_{j}(2x_{m,i_m d_m}  -1), 
\end{equation}
where $\xi_{m, r_m, 0}$ and $\xi_{m, r_m,d_m, j} \sim \mathcal N(0, 1)$, $J^*$ is the true number of basis functions, $\kappa\in(0, 1)$ is the decay coefficient to make sure convergence of sequences as $J^*$ increases, and $P_j(\cdot)$ is the $j$-th Legendre polynomial defined on $[-1, 1]$.
Note that $J^*$ denotes the true sieve order used in simulation and the $J$ used in IP-SVD is not necessarily same as $J^*$.
The generation of $\bGamma_m$ will be specified later in each setting. Whenever a non-zero $\bGamma_m$ is generated, we orthonormalize the columns of $\bA_m = \bG_m + \bGamma_m$ such that $\bA_m^\top\bA_m$ is an identity matrix. 
Here we abuse Assumption~\ref{assum:stefa-ic-1} a little bit in order to control the signal-to-noise ratio through the magnitude of the core matrix $\calF$. 
The orthonormalized $\bA_m$ and the original one differ by a linear transformation of columns, which does not affect the Schatter q-$\sin\theta$ distance. 

In what follows, we vary $(I_1, I_2, I_3)$, $\alpha$, $\bG_m$ and $\bGamma_m$ to investigate the effects of different tensor dimensions, signal-to-noise ratios and semi-parametric assumptions on the accuracy of estimating factor, loadings and loading functions.
For the error of estimating the loading $\bA_m$, we report the average Schatten q-$\sin\theta$ norm ($q=2$):
\begin{equation*}
\ell_2(\hat\bA_m) := \norm{\sin\Theta\paran{\hat\bA_m, \bA_m}}_2, \quad m\in[3].
\end{equation*}
For the error of estimating the loading function $g_{m,r}(\bx)$, $m\in[M]$, we report
\begin{equation*}
\ell(\hat g_{m,r}) := \frac{\int \abs{\hat g_{m,r}(\bx) - g_{m,r}^\star(\bx)}^2 d \bx}{\int \abs{g_{m,r}^\star(\bx)}^2 d \bx}, \quad r\in[3].
\end{equation*}
For the error of the estimation of a tensor $\calY$, we report the relative mean squared error 
${\rm ReMSE}_{\calY} = \frac{\norm{\hat\calY-\calY}_F}{\norm{\calY}_F}$.
For the setting where all three modes share similar properties, we only report results for the $1$-st mode for conciseness.
All results are based on $100$ replications.

\paragraph{Effect of growing dimensions and signal-to-noise ratio.}
In the first experiment, we examine the effect of growing dimension $I$ and different values of $\alpha$.
We fix $R=3$, $J = J^*=4$, $\bGamma_m = \bzero$, and set $I_1 = I_2 = I_3 = I$.
We vary $I = \braces{100, 200, 300}$ and $\alpha=\braces{0.1,0.3,0.5}$.
The mean and standard deviation of $\ell_2(\hat\bA_1)$ are presented in Table \ref{tabl:growing-I-SNR-AF}. Since $I_1=I_2=I_3$, we only report the Shatten's $q$-$\sin\Theta$-norm for $\hat\bA_1$ as similar result holds for $\hat\bA_2$ and $\hat\bA_3$.
It is clear that the IP-SVD significantly improves upon HOOI in Shatten's $q$-$\sin\Theta$-norm ($q$=2) under all settings.
While both IP-SVD and HOOI perform better when $\alpha$ increases and worse when dimension $I$ increases, the IP-SVD is more favorably affected by increased $\alpha$ and less negatively affected by increased dimension $I$.
The error in estimating $g_{m,r}(\bx)$ for the first mode $m=1$ is reported in Table \ref{tabl:growing-I-SNR-G}, where the phenomenon is the same as those for $\ell_2(\hat\bA)$.
The supplementary material \cite[Section C]{chen2020supplement} also reports the same phenomenon for the unbalanced setting where $I_1$, $I_2$, and $I_3$ are different.

\begin{table}[htpb!]
\centering
\caption{The mean and standard deviation of the the average Schatten q-$\sin\theta$ loss $\ell_2(\hat\bA_1)$ and ${\rm ReMSE}_{\calY}$, from $100$ replications, under varying dimensions and signal-to-noise ratio.}
\small
\begin{tabular}{cc|ccc|ccc|ccc}
\hline
&\multicolumn{1}{c|}{$\alpha$} & \multicolumn{3}{c|}{0.1} & \multicolumn{3}{c|}{0.3} & \multicolumn{3}{c}{0.5}\\
&\multicolumn{1}{c|}{$I$}  & 100 & 200 & 300 & 100 & 200 & 300 & 100 & 200 & 300 \\
\hline
\multirow{2}{*}{\rotatebox[origin=c]{90}{IP-SVD}} & $\ell_2(\hat\bA_1)$
& \msd{1.305}{0.138} & \msd{1.303}{0.126} & \msd{1.292}{0.169} & \msd{0.866}{0.233} & \msd{0.621}{0.205} & \msd{0.574}{0.200} &\msd{0.274}{0.068} &\msd{0.195}{0.051} &\msd{0.152}{0.038}\\
& ${\rm ReMSE}_{\calY}$
& \msd{2.471}{0.519}  & \msd{2.382}{0.519} & \msd{2.281}{0.483}  & \msd{0.934}{0.283} & \msd{0.675}{0.212} & \msd{0.588}{0.179} & \msd{0.280}{0.065} & \msd{0.195}{0.044} & \msd{0.154}{0.035}\\
\hline
\multirow{2}{*}{\rotatebox[origin=c]{90}{\parbox[c]{3em}{HOOI}}} & $\ell_2(\hat\bA_1)$
& \msd{1.707}{0.012} & \msd{1.719}{0.007} & \msd{1.724}{0.004} & \msd{1.705}{0.012} & \msd{1.719}{0.006} & \msd{1.724}{0.004} & \msd{1.581}{0.189} & \msd{1.671}{0.122}  & \msd{1.691}{0.162}\\
& ${\rm ReMSE}_{\calY}$
& \msd{7.829 }{1.632}  & \msd{10.368}{2.133} & \msd{11.999}{2.379} & \msd{3.330 }{0.665} & \msd{3.652 }{0.798} & \msd{3.987 }{0.849} & \msd{1.548 }{0.323} & \msd{1.576 }{0.278} & \msd{1.556 }{0.269}\\
\hline
\end{tabular}

\label{tabl:growing-I-SNR-AF}
\end{table}

\begin{table}[htpb!]
    \centering
    \caption{Under varying dimensions and signal-to-noise ratio, the mean and standard deviation of the function approximation loss $\ell(\hat g_{m,r})$, for model $m=1$ and $r\in[3]$, from $100$ replications. This results for modes $m=2, 3$ are similar.
    }
        \begin{tabular}{cc|ccc|ccc|ccc}
            \hline
            \multirow{2}{*}{$I$} & \multirow{2}{*}{$R$} & \multicolumn{3}{c|}{$\alpha=0.1$} & \multicolumn{3}{c|}{$\alpha=0.3$} & \multicolumn{3}{c}{$\alpha=0.5$} \\ \cline{3-11}
            &  &  $\ell(\hat g_{1,1})$ & $\ell(\hat g_{1,2})$ & $\ell(\hat g_{1,3})$ & $\ell(\hat g_{1,1})$ & $\ell(\hat g_{1,2})$ & $\ell(\hat g_{1,3})$ & $\ell(\hat g_{1,1})$ & $\ell(\hat g_{1,2})$ & $\ell(\hat g_{1,3})$ \\ \hline
            100 & 3  & \msd{1.653}{1.028} & \msd{1.699}{0.749} & \msd{1.786}{0.740} & \msd{0.745}{1.224} & \msd{1.082}{1.141} & \msd{1.295}{1.014} & \msd{0.382}{1.098} & \msd{0.429}{1.078} & \msd{0.575}{1.259}\\
            200 & 3  & \msd{1.479}{0.861} & \msd{1.715}{0.653} & \msd{1.792}{0.682} & \msd{0.524}{1.119} & \msd{0.898}{1.193} & \msd{1.016}{1.119} & \msd{0.134}{0.669} & \msd{0.127}{0.558} & \msd{0.270}{0.900}\\
            300 & 3  & \msd{1.500}{0.929} & \msd{1.781}{0.725} & \msd{1.834}{0.669} & \msd{0.410}{0.916} & \msd{0.832}{1.220} & \msd{1.063}{1.299} & \msd{0.100}{0.548} & \msd{0.190}{0.778} & \msd{0.063}{0.392}\\ \hline
        \end{tabular}
    \label{tabl:growing-I-SNR-G}
\end{table}

\paragraph{Effect of the number of fitting basis.} \label{sec:J-effect}
In this experiment, we examine the effect of different choices of the number of fitting basis $J$.
Specifically, we fix $I_1=I_2=I_3=I=200$, $R=3$ and set $\bGamma_m = 0$.
We vary SNR by changing $\alpha = 0.3, 0.5$.
The loadings are simulated according to the additive sieve structure as in \eqref{eq:sim-sieve} with fixed $J^*=16$.
However, in the estimation of $\hat\bA_m$, we use different numbers of sieve orders $J=2, 4, 8, 16$.
The mean and standard deviation of $\ell_2(\hat\bA_1)$ and ${\rm ReMSE}_{\calY}$ are reported in Table~\ref{tabl:J-effect}.

A noteworthy observation is that increasing the sieve order $J$ does not consistently enhance the performance.
For both signal-to-noise strength in Table~\ref{tabl:J-effect}, $J=16$ does not achieve the best performance among all choices of $J$, even though the data is simulated with order 16.  This reflects well the bias and variance trade-off.
On one hand, increasing sieve order $J$ enhances the capability of $\bG_m$ in capturing the parametric dependence between $\bA_m$ and $\bX_m$. On the other hand, a large order $J$ increases the Frobenius norm of the projected noise $\norm{\calE\times_1\bP_1\times_2\bP_2\times_3\bP_3}_F$, which may result in a reduced signal-to-noise ratio.
Large value of $\alpha$ is more tolerant to this signal-to-noise decrease caused by large sieve order.
As shown in Table~\ref{tabl:J-effect}, the minimum error is obtained at $J=4$ when $\alpha=0.3$, while $J=8$ is the optimal one when $\alpha = 0.5$.
These observations align with findings in the realm of semiparametric studies.
For example, extensive spline bases often exhibit overfitting tendencies and are commonly employed alongside regularization techniques \citep{carroll2006discussion}.


\begin{table}[htpb!]
\centering
\renewcommand{\arraystretch}{1.2}
\caption{The average spectral and Frobenius Schatten q-$\sin\Theta$ loss for $\hat\bA_1$ and relative mean square errors for $\calY$ under various settings.}
\begin{tabular}{c|cccc|cccc}
\hline
& \multicolumn{4}{c|}{$\ell_2(\hat\bA_1)$}& \multicolumn{4}{c}{${\rm ReMSE}_{\calY}$}\\
\hline
J & 2 & 4 & 8 & 16 & 2 & 4 & 8 & 16\\
\hline

$\alpha=0.3$ &  \msd{1.024}{0.177} & \msd{0.910}{0.195}& \msd{1.093}{0.256} & \msd{1.486}{0.173} & \msd{0.886}{0.113} & \msd{0.872}{0.182} & \msd{1.154}{0.349}& \msd{1.781}{0.432}\\
\hline
$\alpha=0.5$ & \msd{0.881}{0.153} & \msd{0.503}{0.116}& \msd{0.327}{0.057} &\msd{0.445}{0.101}  & \msd{0.720}{0.080}  & \msd{0.467}{0.073} & \msd{0.303}{0.053} & \msd{0.398}{0.102}\\
\hline
\end{tabular}%

\label{tabl:J-effect}
\end{table}

\paragraph{Effect of the covariate-orthogonal loading.} \label{sec:simu-nonparam-loading}
In this experiment, we examine the effect of the covariate-orthogonal loading part $\bGamma_m$.
To simulate nonzero $\bGamma_m$ such that $\bA_m$ satisfies the identification condition, we first generate a matrix $\bLambda_m$ with each elements drawn from independent $\calN(0,1)$, project it to the orthogonal complement of $\bG_m$ and normalize each column.
Specifically, the $r$-th column of $\bGamma_m$ is obtained as
\[
\bgamma_{m, \cdot r} = \mu\cdot (\bI - \bP_{\bG_m})\blambda_{m, \cdot r}\big/\norm{(\bI - \bP_{\bG_m})\blambda_{m, \cdot r}},\quad\text{for }r = 1,\dots, R_m,
\]
where $\bP_{\bG_m}$ is the projection matrix of $\bG_m$ and $\blambda_{m, \cdot r}$ is the $r$-th column of $\bLambda_m$.
We add a scaling factor $\mu \geqslant 0$ to controls the amplitude of the orthogonal part.
Note that $\bA_m = \bG_m + \bGamma_m$ generated in this way is not necessarily an orthogonal matrix.
So a final QR decomposition is conducted on $\bA_m$ to orthonormalize the columns of $\bA_m$.
Again, we note that we orthonormalize $\bA_m$ just in order to control the overall signal-to-noise ratio. 
In the experiments, we fix $I_1=I_2=I_3=I=200$, $R=3$ and $\alpha=0.5$ and change the values of $\mu$.
The magnitude or the Frobenious norm of $\bGamma_m$ is controlled through the coefficient $\mu$. The errors under four different choices of $\mu$'s are reported in Table~\ref{tabl:gamma-effect}.
Note that in the simulation, $\bA_m = \bG_m +\bGamma_m$ is normalized such that the signal-to-noise ratio of the tensor $\calY$ can be controlled by the core tensor $\calF$. A larger value of $\mu$ indicates a smaller norm of the projected tensor $\tilde\calY$ and results in a decreased signal-to-noise ratio in the projected model. As demonstrated in Table~\ref{tabl:gamma-effect}, the error increases as $\mu$ increases.

\begin{table}[htpb!]
\centering
\renewcommand{\arraystretch}{1.2}
\caption{The means and standard deviations of $\hat\bA_1$ and ${\rm ReMSE}_{\calY}$ under various settings.}
\begin{tabular}{c|cccc}
\hline
$\mu$ & 0 & 0.01 & 0.1 & 1.0\\
\hline
$\ell_2(\hat\bA_1)$ & \msd{0.877}{0.101} & \msd{0.851}{0.125} & \msd{0.876}{0.117} &  \msd{1.285}{0.132}\\
\hline
${\rm ReMSE}_{\calY}$ & \msd{1.043}{0.274} & \msd{0.985}{0.271} & \msd{1.031}{0.296} & \msd{2.157}{0.543}\\
\hline
\end{tabular}%

\label{tabl:gamma-effect}
\end{table}

\paragraph{Effect of underlying $g_{m,r_m}(\cdot)$.}
In this experiment, we exam the potential impact of using the additive approximation \eqref{eqn:g-func-addants-vec} of $\bG_m$.
Under the setting $I_1=I_2=I_3=200$, $R=3$, $D=2$, $\alpha=0.3$, $\bGamma_m=0$ and $J=J^*=3$, we simulate $\bG_m$ according to the additive case \eqref{eq:sim-sieve} and plot the true function $g_{1,1}^\star(\bx)$ and the estimated function $\hat g_{1, 1}(\bx)$ in Figure~\ref{fig:additive}. As the additive assumption is valid for this case, the estimated function is pretty close the true one.
\begin{figure}[htpb!]
    \centering
    \includegraphics[width=0.35\textwidth]{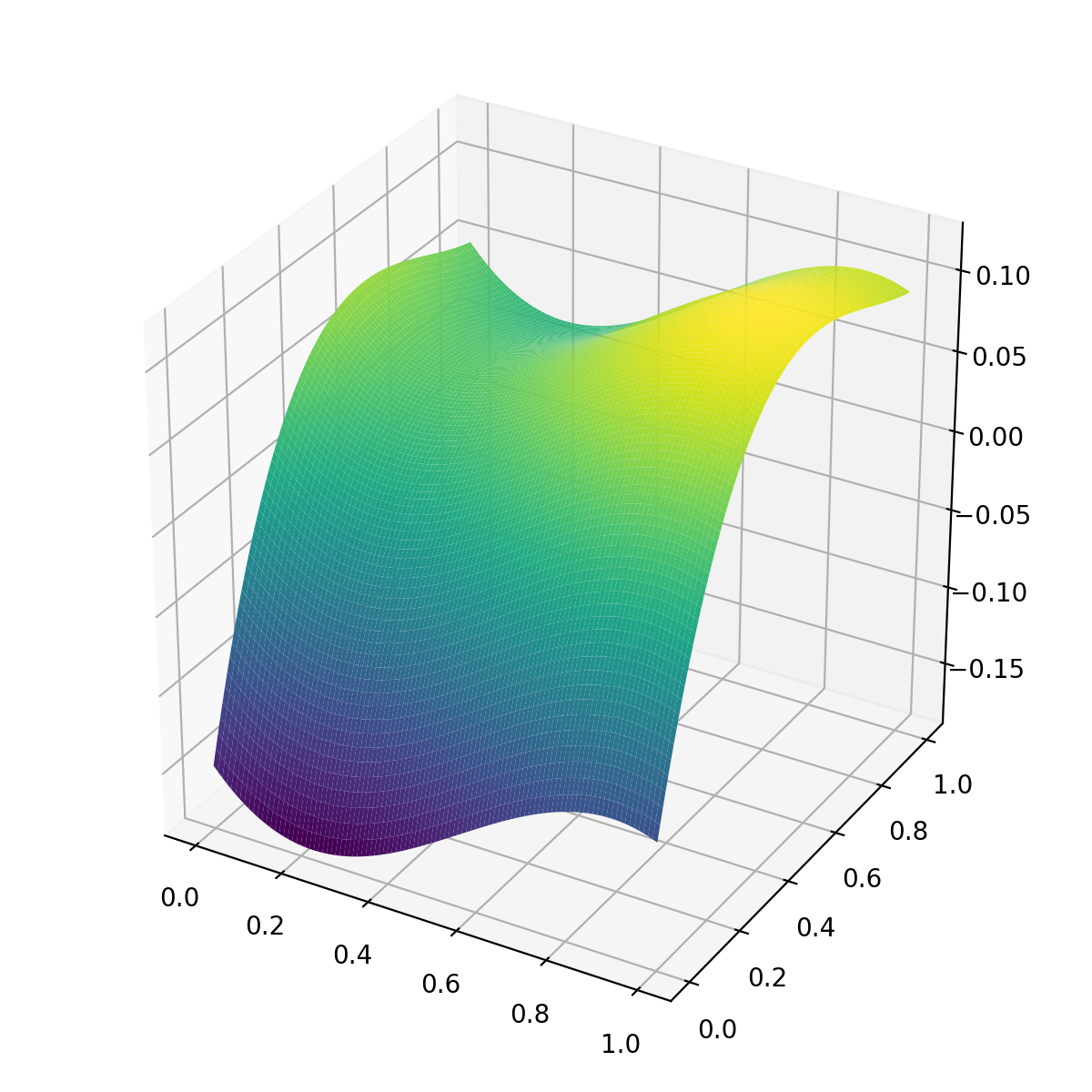}
    \hspace{2em}
    \includegraphics[width=0.35\textwidth]{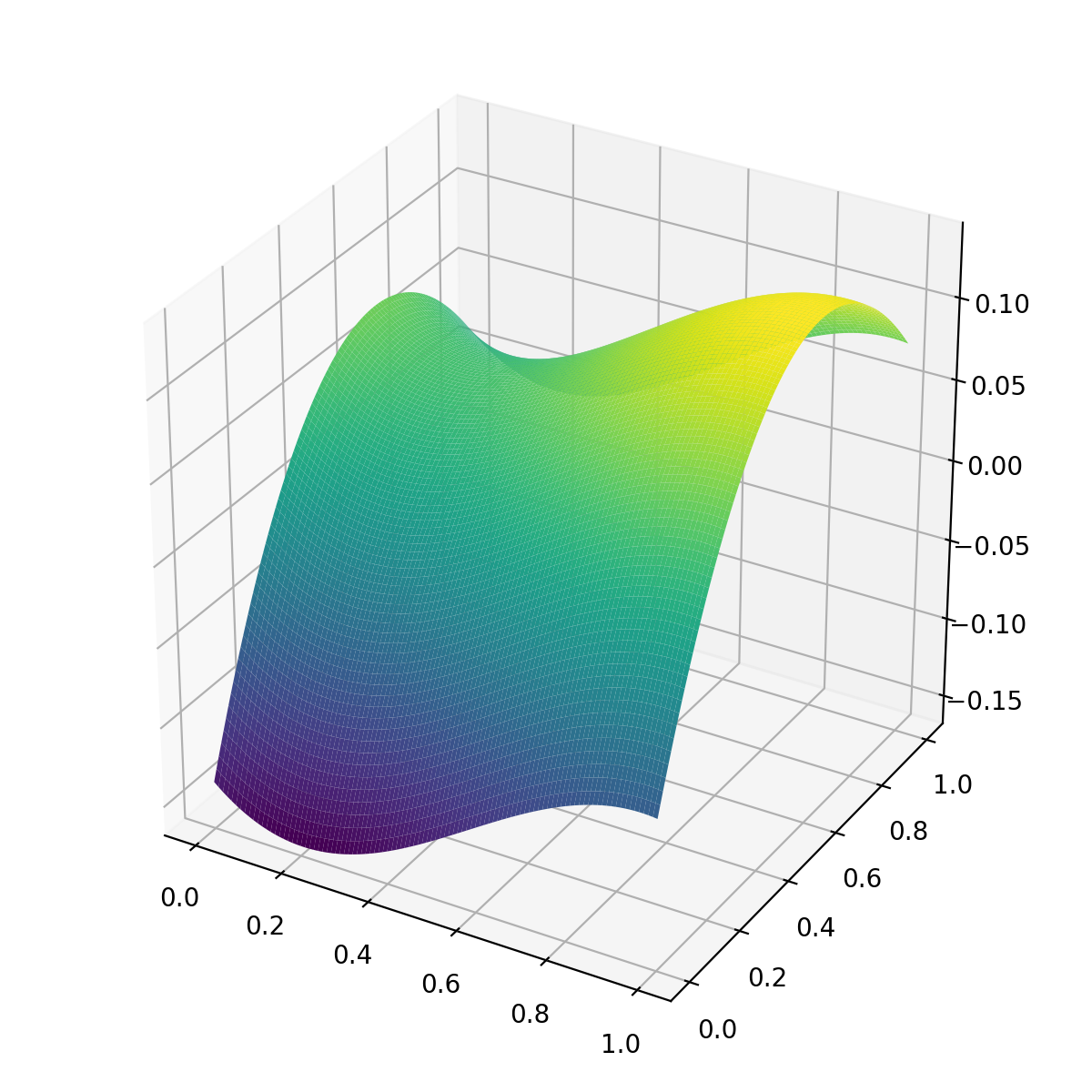}
    \caption{(Left) $g_{1, 1}^\star(\bx)$ generated under additive model. (Right) $\hat g_{1, 1}(\bx)$ estimated under additive assumption.}
    \label{fig:additive}
\end{figure}

Further, we simulate the data such that the additive assumption \eqref{eqn:g-func-addants-vec} is not valid.
Specifically, we generate $g_{m,r_m}(\bx_{m,i_m\cdot})$ in a multiplicative scheme such that
\begin{equation} \label{eqn:g_func_multiplicative}
g_{m,r_m}(\bx_{m,i_m\cdot}) = \prod_{d_m=1}^{D_m} g_{m,r_m,d_m}(x_{m,i_m d_m}).
\end{equation}
where $g_{m, r_m, d_m}$ is given by  \eqref{eq:sim-sieve}.
We conduct the IP-SVD procedure using the additive approximation \eqref{eqn:g-func-addants-vec}.
The true and estimated function of $g_{1, 1}(\bx)$ are plotted in Figure~\ref{fig:multiplicative-true} and \ref{fig:multiplicative-hat}, respectively.
The estimated function can capture some structures of the true function but misses other details as we approximate it with the additive form.
Figure~\ref{fig:multiplicative-p} depicts the projection of the true function $g_{1, 1}^\star$ to the additive sieve space used in IP-SVD.
The projection is supposed to be the best function estimate that can be obtained from the additive sieve basis.
Note that $\bA$ is identified up to an orthogonal matrix and so is $\bG$.
To address this potential problem of non-identifiability of $g_{1, 1}$, we calculate the best linear combination of $\hat g_{1, r}$, $r=1,\dots, R$, that is closest to $g_{1, 1}^*$ to mimic any potential orthogonal matrix applied to $\bG$.
The best linear combination is reported in Figure~\ref{fig:multiplicative-linear}.
As one can see, Figure~\ref{fig:multiplicative-p} and \ref{fig:multiplicative-linear} are almost identical to each other.
In conclusion, the projected Tucker under an additive basis assumption can ideally recover at most the linear (and additive) part of the true parametric component $\bG_m$.
The performance of this approximation depends on the deviation between the $\bG_m$ and its projected version $\bP_m\bG_m$.

To assess the performance of IP-SVD when the additive assumption \eqref{eqn:g-func-addants-vec} becomes invalid, we repeat the experiment in Table~\ref{tabl:growing-I-SNR-AF} with exactly the same settings except that the multiplicative scheme in \eqref{eqn:g_func_multiplicative} is used to generate $\bG_m$.
The errors in estimating $\bA_m$ and $\calY$ are reported in Table~\ref{tabl:growing-I-SNR-multiplicative}.
Comparing Table~\ref{tabl:growing-I-SNR-AF} with Table~\ref{tabl:growing-I-SNR-multiplicative}, we observe that even when the additive assumption in \eqref{eqn:g-func-addants-vec} is not valid, IP-SVD still performs better than HOOI.
But the improvement under misspecification is not as good as that under the valid additive assumption.
This shows empirically that even when the additive assumption is violated, IP-SVD in general performances better than HOOI as long as the sieve basis used in IP-SVD can partially explain the parametric part of $\bA_m$ with respect to $\bX_m$.

\begin{figure}[htpb!]
    \centering
    \begin{subfigure}[b]{0.24\textwidth}
        \centering
        \includegraphics[width=\linewidth]{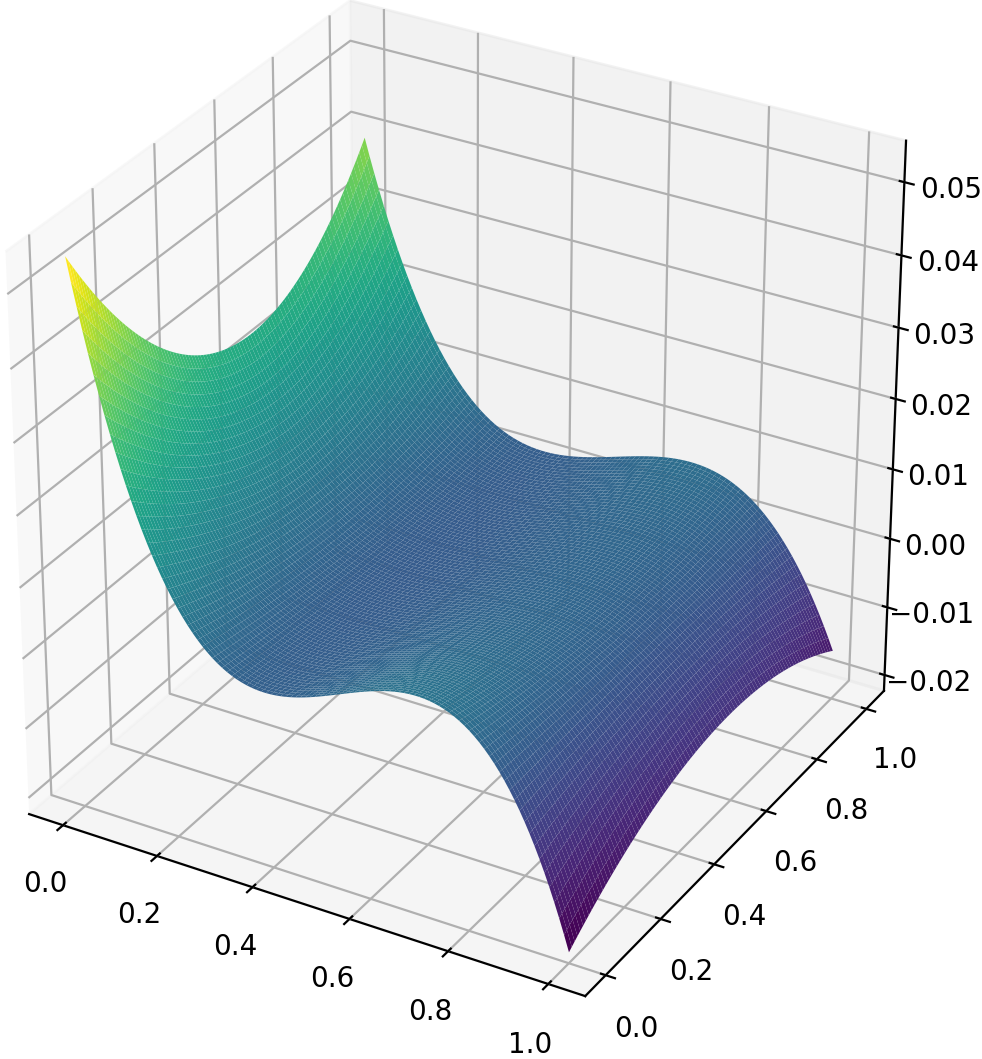}
        \caption{$g_{1,1}^\star(\bx)$}
        \label{fig:multiplicative-true}
    \end{subfigure}
    \begin{subfigure}[b]{0.24\textwidth}
        \centering
        \includegraphics[width=\linewidth]{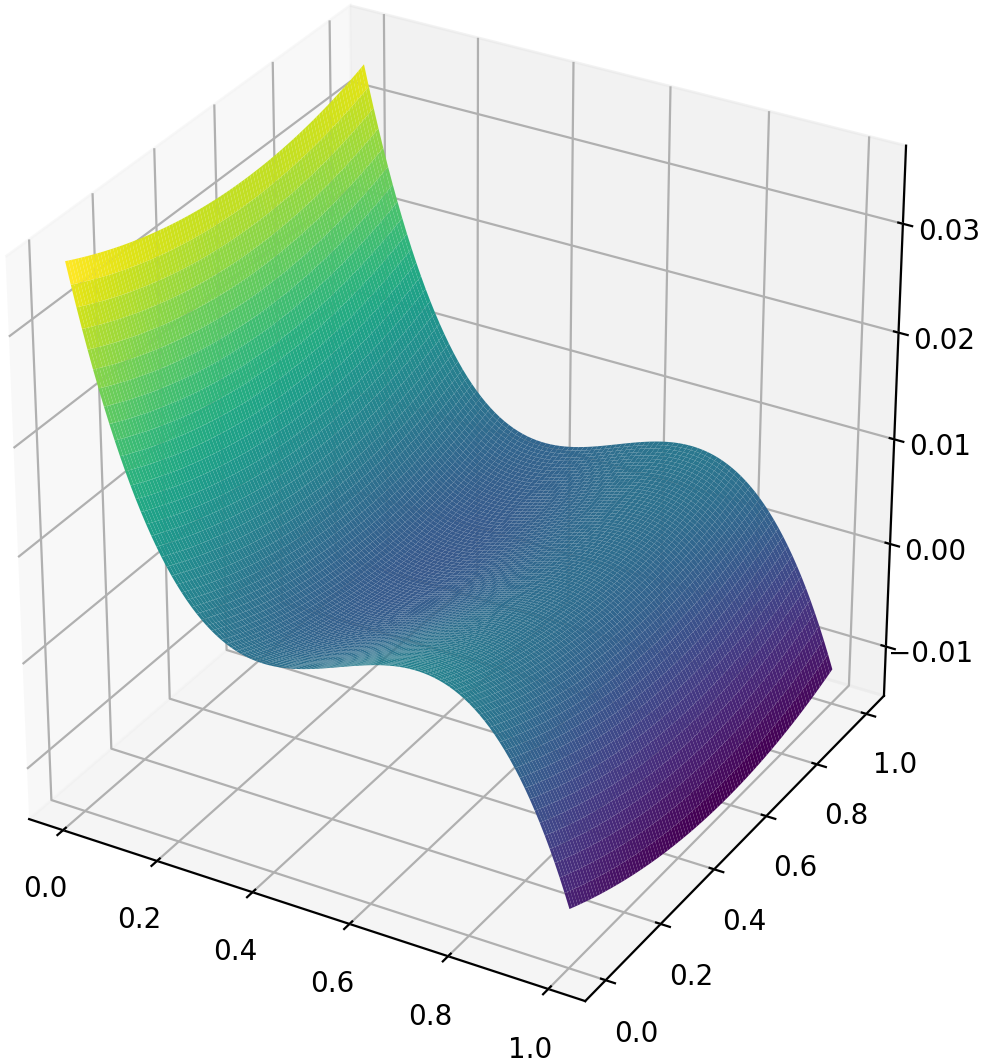}
        \caption{$\bP_1g_{1,1}(\bx)$}
        \label{fig:multiplicative-p}
    \end{subfigure}
    \begin{subfigure}[b]{0.24\textwidth}
        \centering
        \includegraphics[width=\linewidth]{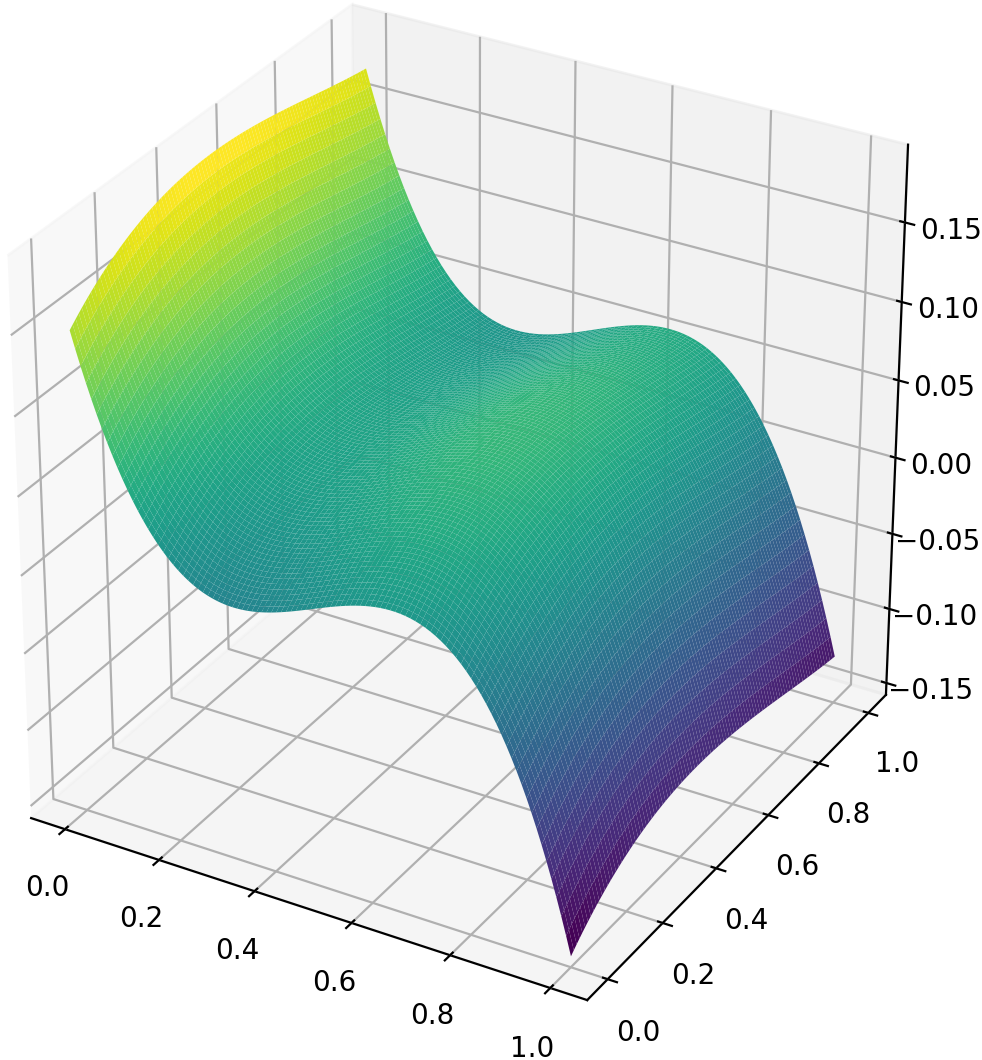}
        \caption{$\hat g_{1,1}(\bx)$}
        \label{fig:multiplicative-hat}
    \end{subfigure}
    \begin{subfigure}[b]{0.24\textwidth}
        \centering
        \includegraphics[width=\linewidth]{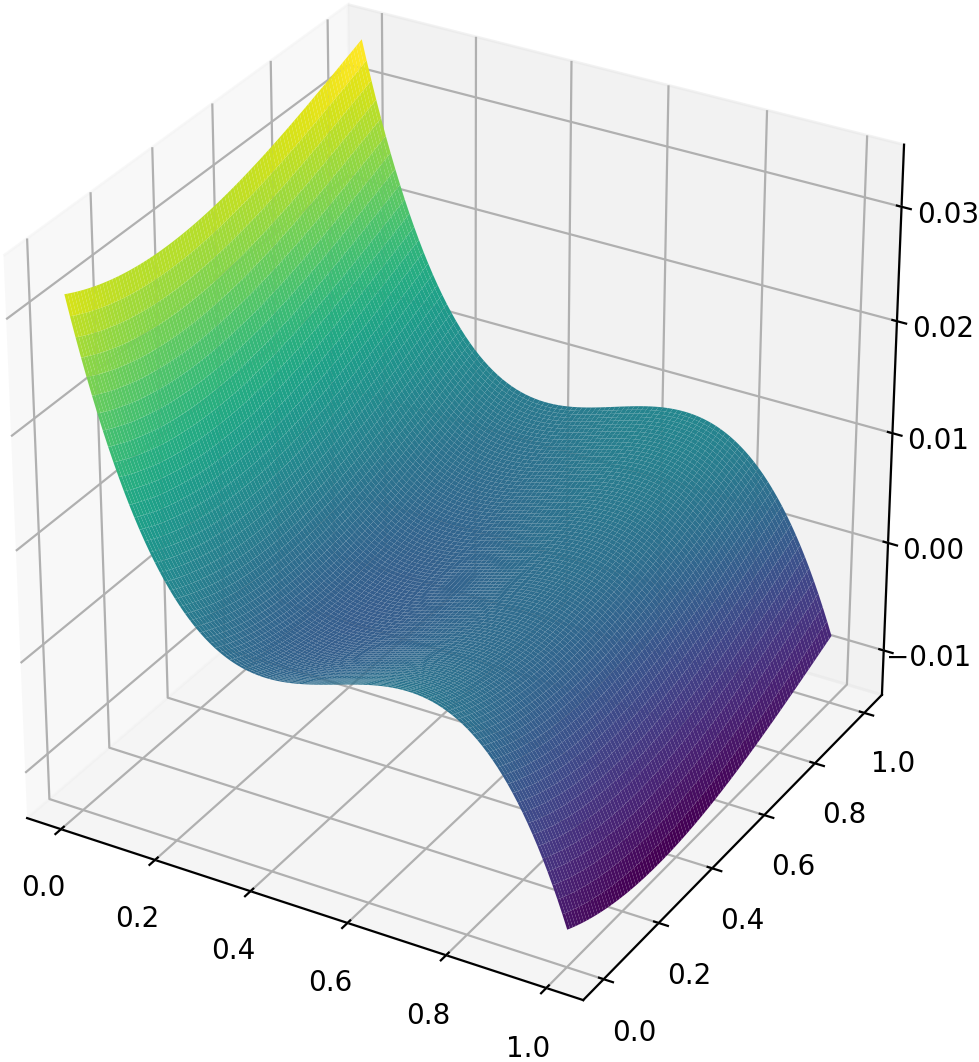}
        \caption{$\sum_{r=1}^{R}a_r\hat g_{1,r}(\bx)$}
        \label{fig:multiplicative-linear}
    \end{subfigure}
    \caption{(a) True function of $g_{1,1}^\star(\bx)$ (b) Projected version of $g_{1,1}^\star(\bx)$ (c) Estimated $g_{1,1}(\bx)$ (d) Best linear combination of $\hat g_{1, r}(\bx)$. }
    \label{fig:multiplicative}
\end{figure}

\begin{table}[htpb!]
\centering
\caption{Under varying dimensions and signal-to-noise ratio. The mean and standard deviation of $\ell_2(\hat\bA_1)$ and ${\rm ReMSE}_{\calY}$ from $100$ replications when the additive loading assumption is replaced with the multiplicative assumption.}
\small
\begin{tabular}{cc|ccc|ccc|ccc}
\hline
&\multicolumn{1}{c|}{$\alpha$} & \multicolumn{3}{c|}{0.1} & \multicolumn{3}{c|}{0.3} & \multicolumn{3}{c}{0.5}\\
&\multicolumn{1}{c|}{$I$}  & 100 & 200 & 300 & 100 & 200 & 300 & 100 & 200 & 300 \\
\hline
\multirow{2}{*}{\rotatebox[origin=c]{90}{IP-SVD}} & $\ell_2(\hat\bA_1)$
& \msd{1.430}{0.103} & \msd{1.450}{0.126} & \msd{1.438}{0.115} & \msd{1.225}{0.165} & \msd{1.182}{0.210} & \msd{1.132}{0.203} &\msd{0.853}{0.212} &\msd{0.796}{0.217} &\msd{0.820}{0.228}\\
& ${\rm ReMSE}_{\calY}$
& \msd{2.596}{0.521} & \msd{2.395}{0.579}  & \msd{2.375}{0.482} & \msd{1.189}{0.210} & \msd{1.095}{0.175}  & \msd{0.984}{0.157} & \msd{0.741}{0.117} & \msd{0.709}{0.117} & \msd{0.704}{0.132}\\
\hline
\multirow{2}{*}{\rotatebox[origin=c]{90}{\parbox[c]{3em}{HOOI}}} & $\ell_2(\hat\bA_1)$
& \msd{1.705}{0.012} & \msd{1.720}{0.006} & \msd{1.723}{0.005}  & \msd{1.705}{0.012}  & \msd{1.720}{0.005} & \msd{1.724}{0.004} & \msd{1.568}{0.213} & \msd{1.653}{0.168} & \msd{1.663}{0.188}\\
& ${\rm ReMSE}_{\calY}$
& \msd{8.092 }{1.686} & \msd{10.108}{2.596} & \msd{12.049}{2.701} & \msd{3.263 }{0.672} & \msd{3.874 }{0.721} & \msd{3.911 }{0.781} & \msd{1.528 }{0.332} & \msd{1.538 }{0.294} & \msd{1.513 }{0.300}\\
\hline
\end{tabular}%

\label{tabl:growing-I-SNR-multiplicative}
\end{table}

\section{Real data applications}  \label{sec:appl}


\subsection{Multi-variate Spatial-Temporal Data} \label{sec:appl1}

In this section, we illustrate the usefulness of the STEFA model and the IP-SVD algorithm on the Comprehensive Climate Dataset (CCDS) -- a collection of climate records of North America.
The dataset was compiled from five federal agencies sources by \cite{lozano2009spatial}\footnote{\url{http://www-bcf.usc.edu/~liu32/data/NA-1990-2002-Monthly.csv}}.
Specifically, we show that we can use the STEFA and IP-SVD to estimate interpretable loading functions, deal better with large noises and make more accurate predictions than the vanilla Tucker decomposition.

\begin{figure}[!hp]
\centering
\includegraphics[width=0.6\textwidth]{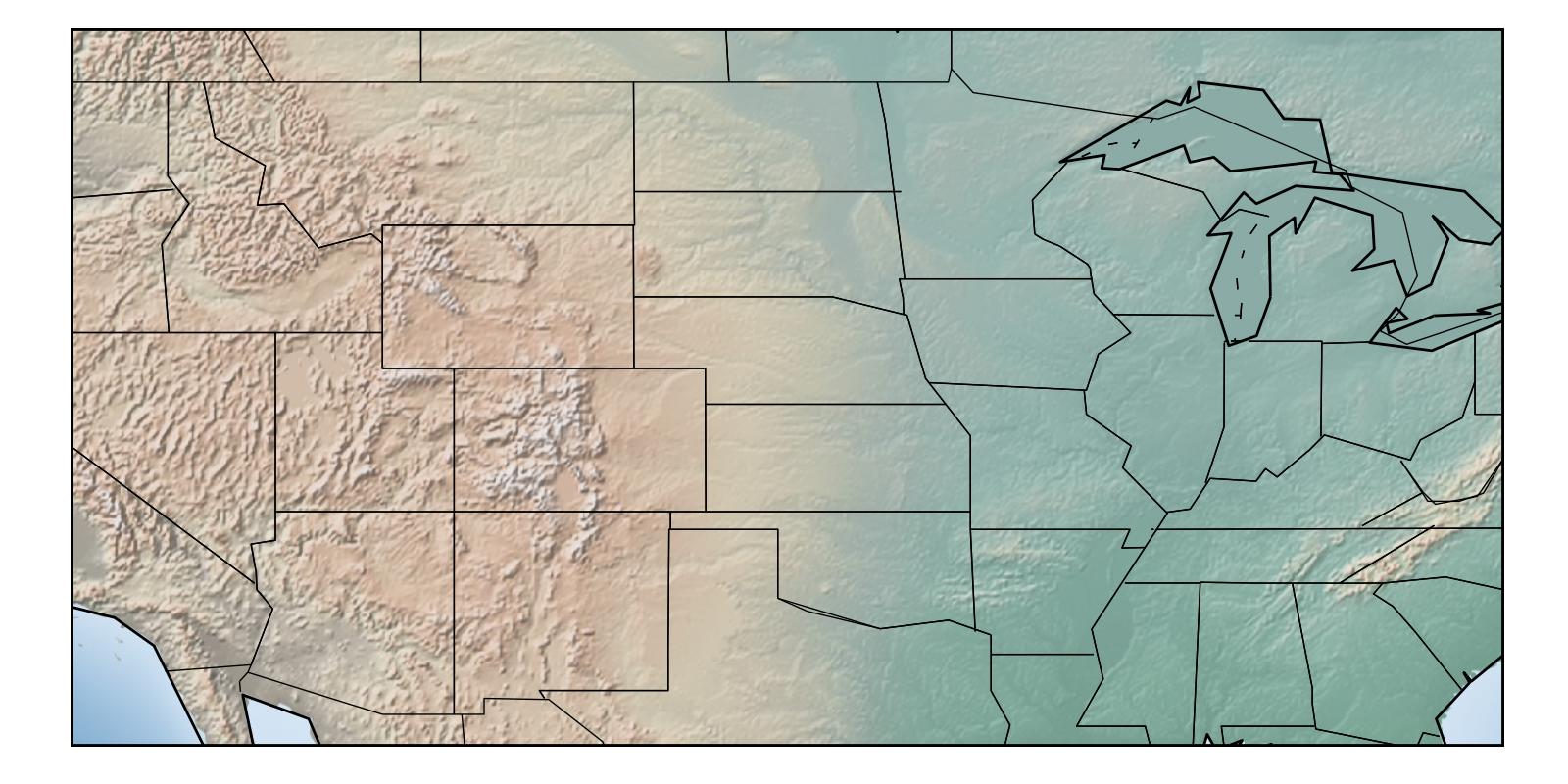}
\caption{The geological region for which the data is collected}\label{fig:climate-map}
\end{figure}

The data contains monthly observations of 17 climate variables from 1990 to 2001 on a $2.5 \times 2.5$ degree grid for latitudes in $(30.475, 50.475)$, and longitudes in $(-119.75, -79.75)$. 
Figure~\ref{fig:climate-map} plots the geological region which covers the majority of the continent of United States and the southern part of Canada.
The total number of observation locations is 125 and the whole time series spans from January, 1990 to December, 2001.
Due to the data quality,
we use only 16 measurements listed in Table \ref{table:CCDS_varlist} at each location and time point.
Thus, the dimensions our our dataset are 125 (locations) $\times$ 16 (variables) $\times$ 156 (time points).
Detailed information about data is given in \cite{lozano2009spatial}.

\begin{table}[htpb!]
    \centering
    \caption{Variables and data sources in the Comprehensive Climate Dataset (CCDS)}
    \label{table:CCDS_varlist}
    \resizebox{0.7\textwidth}{!}{%
        \begin{tabular}{l|c|l|c}
            \hline
            Variables (Short name) & Variable group & \multicolumn{1}{c|}{Type} & Source \\ \hline
            Methane (CH4) & $CH_4$ & \multirow{4}{*}{Greenhouse Gases} & \multirow{4}{*}{NOAA} \\
            Carbon-Dioxide (CO2) & $CO_2$ &  &  \\
            Hydrogen (H2) & $H_2$ &  &  \\
            Carbon-Monoxide (CO) & $CO$ &  &  \\ \hline
            Temperature (TMP) & TMP & \multirow{8}{*}{Climate} & \multirow{8}{*}{CRU} \\
            Temp Min (TMN) & TMP &  &  \\
            Temp Max (TMX) & TMP &  &  \\
            Precipitation (PRE) & PRE &  &  \\
            Vapor (VAP) & VAP &  &  \\
            Cloud Cover (CLD) & CLD &  &  \\
            Wet Days (WET) & WET &  &  \\
            Frost Days (FRS) & FRS &  &  \\ \hline
            Global Horizontal (GLO) & SOL & \multirow{4}{*}{Solar Radiation} & \multirow{4}{*}{NCDC} \\
            Direct Normal (DIR) & SOL &  &  \\
            Global Extraterrestrial (ETR) & SOL &  &  \\
            Direct Extraterrestrial (ETRN) & SOL &  &  \\ \hline
        \end{tabular}%
    }
\end{table}

We first focus on the spatial function structure of this data set.
The covariates $\bX \in \RR^{125\times2}$ of the spacial dimension contain the latitudes and longitudes of all sampling locations, which basically capture the spatial continuity of factor loadings on mode 1.
The semi-parametric form \eqref{eqn:tucker-covari-1} for this application is written as
\begin{equation} \label{eqn:appl1-model}
\calY = \calF \times_1 \paran{\bPhi_1(\bX)\bB_1 + \bR_1(\bX) + \bGamma_1} \times_2 \bA_2 \times_3 \bI + \calE.
\end{equation}
The first mode is the space dimension with loading matrix $\bA_1 = \bPhi_1(\bX)\bB_1 + \bR_1(\bX) + \bGamma_1$.
The second mode is the variable dimension with $\bA_2$ as the variable loading matrix.
The third mode is the time dimension which we do not compress.
So we use the identity matrix $\bI$ in place of $\bA_3$.
This is a matrix-variate factor model similar to  \cite{chen2020statistical} but incorporates covariate effects on the loading matrix in the spatial-mode. 
We normalized each time series to have a unit $\ell_2$ norm.

\paragraph{Climate variable and spatial factors.}
We use $R_1, R_2, R_3 = 6, 6, 156$ where the time mode is not compressed and the other two latent dimensions are chosen according to the literature \citep{lozano2009spatial,bahadori2014fast,chen2020multivariate}.
We use the Legendre basis functions of order 5 for $\bPhi_1(\bX)$ and number of basis $J = 11$. 
The slices of latent tensor factor $\bF_{r_1 : :}$, $r_1\in [6]$, correspond to six spatial factors and the slices $\bF_{: r_2 :}$, $r_2\in [6]$, correspond to the six climate variable factors. 
The meaning of the latent factors can be inferred from their corresponding variable loading matrix $\bA_2$ and spatial loading surfaces in $\bPhi_1(\bX)\bB_1$. 

Figure \ref{fig:appl-loading} (a) shows the heatmap of the varimax-rotated loading matrix $\bA_2$. 
It is clear that the corresponding first climate factor weighted mostly on the four greenhouse gases. 
Thus, the first climate factor can be interpreted as the greenhouse gas factor. 
Interestingly, this greenhouse gas factor also loads heavily on cloud cover (CLD), echoing with a recent scientific research on the observational evidence between greenhouse gas and cloud covers  \citep{ceppi2021observational}.
In a similar way, the second to sixth climate variable factor can be interpreted as temperature, precipitation (wet), frost, solar, and vapor factors, respectively. 
The top six climate factors explain approximately $82.26\%$ , $12.13\%$, $1.48\%$, $0.58\%$, $0.31\%$, $0.26\%$ of the variance along the second (climate variable) mode of the tensor.

\begin{figure}[htpb!]
    \centering
    \begin{subfigure}[b]{0.28\textwidth}
        \includegraphics[width=\textwidth]{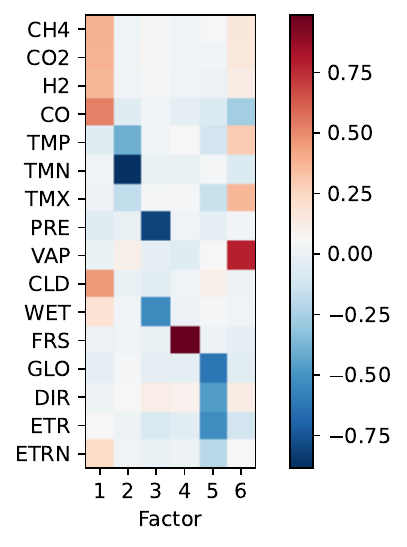}
        \caption{Variable Factors}
    \end{subfigure}
    \begin{subfigure}[b]{0.67\textwidth}
        \includegraphics[width=\textwidth]{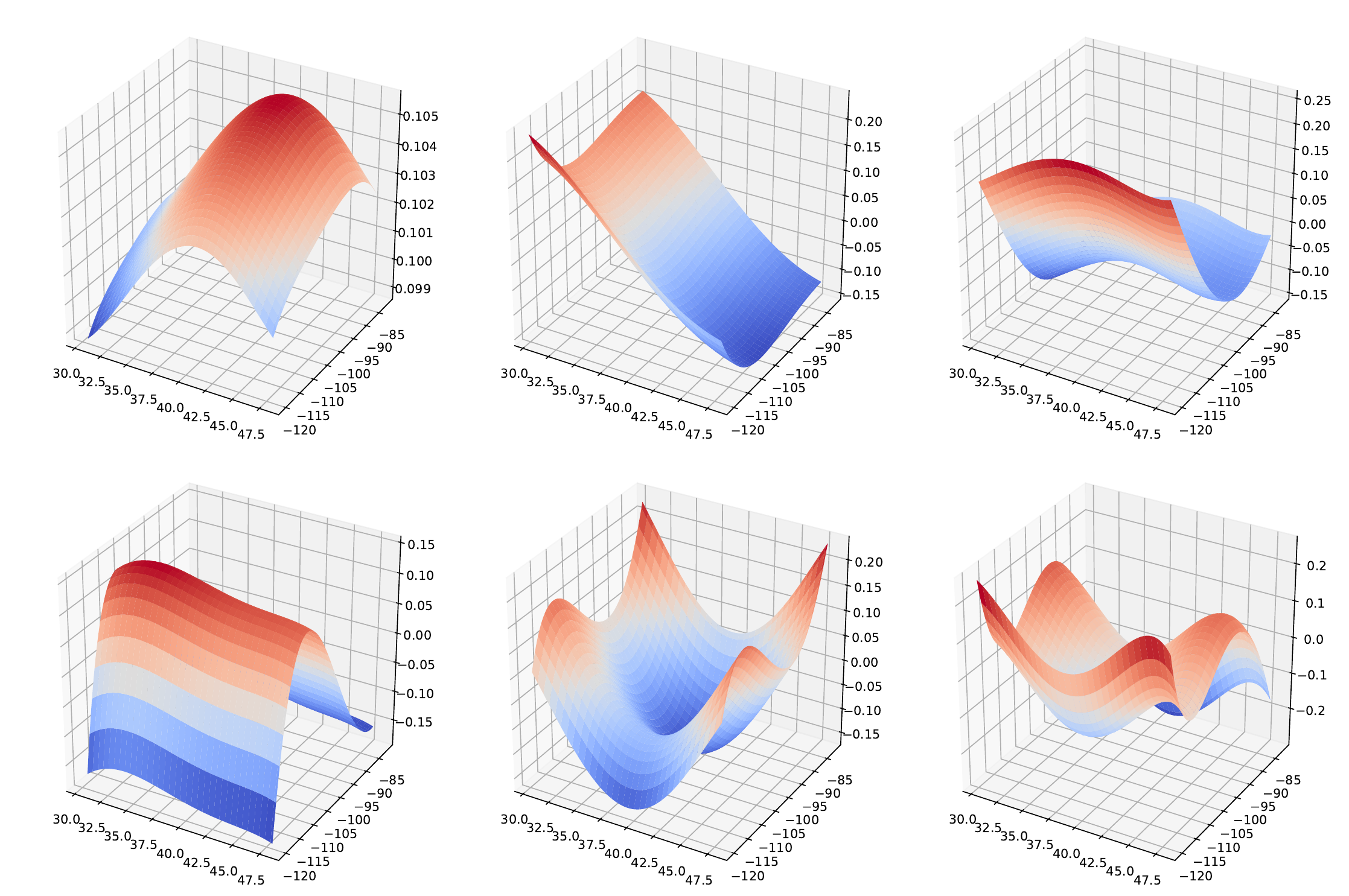}
        \caption{Spatial Factors}
    \end{subfigure}
    \caption{(a) Heat map plots the varimax-rotated $\hat\bA_2$. 
        The first six variable factors explain approximately $82.26\%$ , $12.13\%$, $1.48\%$, $0.58\%$, $0.31\%$, $0.26\%$ of the variance along the second mode of the tensor $\calY$. 
        (b) Six surfaces are the estimated space loading surfaces plotted from six columns of $\hat\bG_1(\bX)=\bPhi_1(\bX)\hat\bB_1$.
        From the top-left to the bottom right sub-figures correspond to the first to the sixth space loading functions with decreasing singular values.
        The coordinates of X and Y axis are aligned with the latitudes and longitudes in Figure~\ref{fig:climate-map}. 
    }
    \label{fig:appl-loading}
\end{figure}

Figure \ref{fig:appl-loading} (b) presents six estimated bi-variate spatial loading surfaces corresponding to the six columns of $\bPhi_1(\bX)\hat\bB_1$. 
The space loading surfaces captures the common spatial variances in 16 environmental variables and they are highly nonlinear.
More insights can be drawn by juxtaposing the discovered loading surfaces with the geological map in Figure~\ref{fig:climate-map} with aligned latitudes and longitudes.
The high value (red) region in the first loading surface corresponds to the Great Lakes region of U.S. and Canada, which was highly-populated and has a well-developed industry in the 90's. 
The second surface represents a south-to-north gradient and a coast-to-inland gradient. 
The third surface has high values in the mountain region of U.S. 
The discovered top three major loading surfaces have their sociological and geological correspondences.  Beyond those, the estimates are very noisy and interpretation gets hard.
The top sixth column of $\bPhi(\bX)_1\hat\bB_1$ explain approximately $93.16\%$ , $2.39\%$, $0.75\%$, $0.42\%$, $0.18\%$, $0.14\%$ of the variance along the first (spatial) mode of the tensor.  


\paragraph{Fitting real data with different noise levels.}
In this section, we compare the vanilla and projected Tucker decomposition by their performances in fitting signal with different levels of noise.
To generate different noise levels, we treat the estimated signal $\hat\calS_v$ and noise $\hat\calE_v$ from \textit{vanilla} Tucker decomposition as the true signal $\calS$ and noise $\calE$ and calibrate the real data with different noise amplifier $\alpha > 0$.
Specifically, the calibrated data is generated as $\calY = \hat\calS_v + \alpha \times \hat\calE_v$.
The setting $\alpha = 1$ corresponds to the original data.
We compare the relative mean square errors (ReMSE) of the signal estimator ${\rm ReMSE}_{\calS}={\|\calS - \hat\calS\|^2_F}/{\norm{\calS}^2_F}$ for vanilla and projected Tucker decomposition in Figure \ref{fig:appl-ReReSS}.
For the vanilla Tucker decomposition, we use the HOOI algorithm. 
For the projected Tucker decomposition, we use the same setting as previously, that is, we use the Legendre basis functions of order 5 for $\bPhi(\bX)$, number of basis $J = 11$ and latent dimensions $R_1, R_2, R_3 = 6, 6, 156$.
Two methods behave the same in the noiseless case where $\alpha=0$.
However, in the noisy setting where $\alpha > 0$, the IP-SVD outperforms the HOOI at all noise levels. 

\begin{figure}[htpb!]
    \centering
    \includegraphics[width=.5\textwidth]{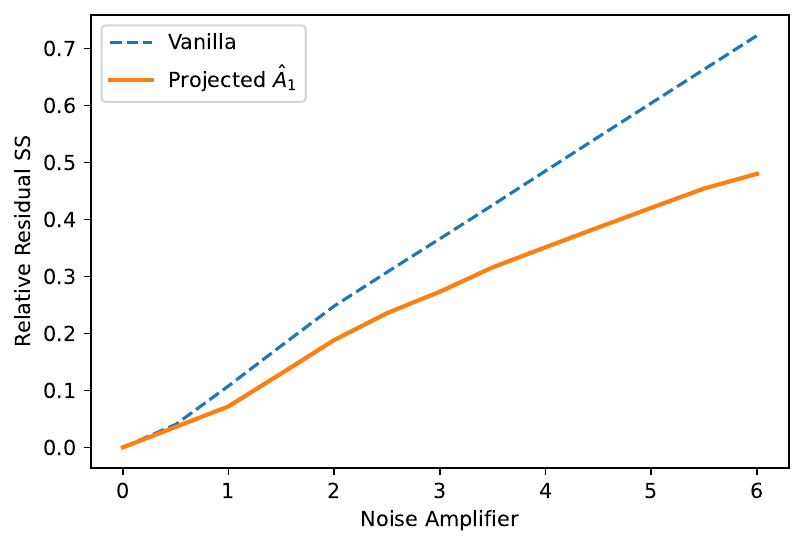}
    \caption{Relative mean square errors (ReMSE) by projected versus vanilla Tucker decomposition with different noise amplifiers.
        The relative residual SS of the signal part is defined as ${\|\calS - \hat\calS\|^2_F}/{\norm{\calS}^2_F}$ where $\hat\calS := \hat\calF\times_1\hat\bA_1\times_2\hat\bA_2$.
        The loading $\bA_1$, $\bA_2$ and factor $\calF$ are estimated by HOOI and IP-SVD, respectively, for vanilla and projected Tucker decomposition. 
        }
    \label{fig:appl-ReReSS}
\end{figure}

\paragraph{Spatial prediction.}
In this section, we compare the prediction performances of the methods based vanilla and projected Tucker decomposition.
The two prediction procedures are presented in Section~\ref{sec:pred}.
We randomly choose the training set to be $50\%$, $67\%$, and $75\%$ of the whole data set.
Table~\ref{table:appl-1-pred} shows the prediction errors, average over cross validations, of the two methods respectively.
It is clear that the STEFA model with projected Tucker decomposition outperforms the vanilla methods.
\begin{table}[htpb!]
    \centering
    \caption{Relative prediction error (averaged value by cross validation).
        For ease of display, the errors for Vanilla and Projected Tucker are reported as $100 \times$ the true value.}
    \label{table:appl-1-pred}
    \resizebox{0.5\textwidth}{!}{%
        \begin{tabular}{c|ccc}
            \hline
            Training set proportion & $50\%$ & $67\%$ & $75\%$ \\ \hline
            Vanilla & 3.52 & 3.48 & 3.05 \\
            Projected & 3.20& 3.23 & 3.01 \\ \hline
            Improvement & 9.0\% & 7.2\% & 1.3\% \\ \hline
        \end{tabular}%
    }
\end{table}

\paragraph{Temporal-mode compression.}
Now we consider fitting the real data with a more complex model where the mode corresponds to time is also compressed:
\begin{equation} \label{eqn:appl1-model2}
    \calY = \calF
    \times_1 \paran{\bPhi_1(\bX)\bB_1 + \bR_1(\bX) + \bGamma_1}
    \times_2 \bA_2
    \times_3 \paran{\bPhi_3(t)\bB_3 + \bR_3(t) + \bGamma_3} + \calE.
\end{equation}
For the space mode, we use the same setting as previously, that is, we use the Legendre basis functions of order 5 for $\bPhi_1(\bX)$ and the number of basis $J_1 = 11$.
For the time mode, we use the sinusoidal basis functions of order 12 for $\bPhi_3(t)$ and the number of basis $J_3 = 13$. 
Figure \ref{fig:time-loading} presents the first two columns of $\bPhi_3(t)\bB_3$ which explains approximately $80.69\%$ and $0.14\%$ of the variance along time mode of the tensor.
Each column of the loading matrix $\bPhi_3(\bt)\hat\bB_3$ can be interpreted from its temporal pattern.
The first time loading corresponds to the temporal mean since it is almost flat over time.
The second time loading corresponds to a linear trend component. 
This trend coincides with the annual greenhouse gas emission data from U.S. environment protection agency\footnote{https://cfpub.epa.gov/ghgdata/inventoryexplorer}, where the greenhouse gas emission has an overall increasing trend from 1992 to 2002 with local peaks around 1995 and 2000.
The other time loading dimensions are less prominent as they account for a small portion of variations. 
As a result, their corresponding interpretations are not obvious and we omit their plots here. 
 
\begin{figure}[htpb!]
    \centering
    \includegraphics[width=.9\textwidth]{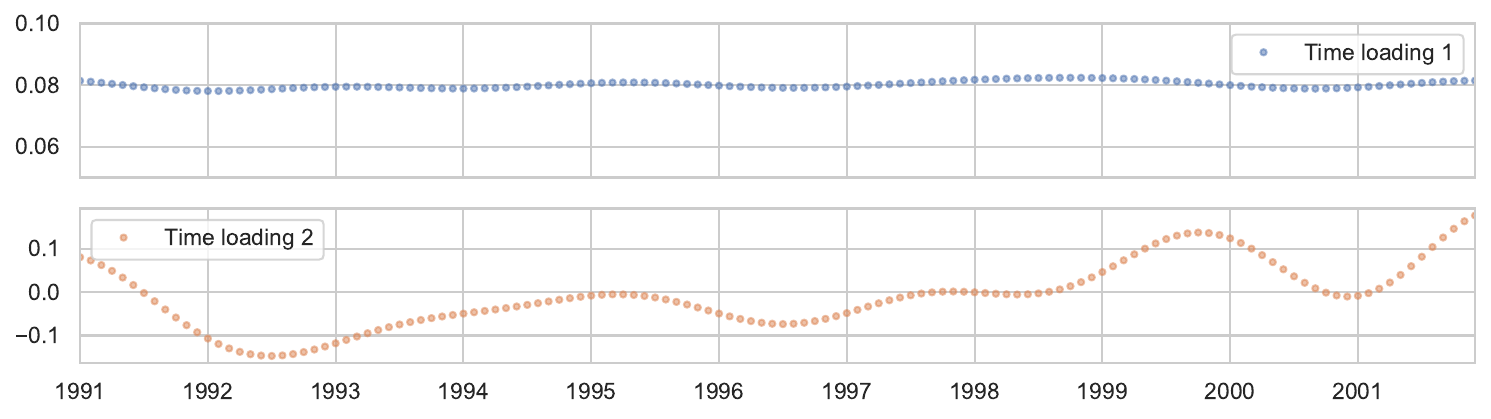}
    \caption{Top two functions of time that corresponds to the first two columns of $\bPhi_3(t)\bB_3$.}
    \label{fig:time-loading}
\end{figure}

\subsection{Human Brain Connection Data} \label{sec:hcp}

We illustrate another application of the STEFA model and IP-SVD to the human brain connection data \citep{desikan2006automated}.
This Human Connectome Project (HCP) dataset consists of brain structural networks collected from 136  individuals.
Each brain network is represented as a $68 \times 68$ binary matrix, each entry of which encodes the presence or absence of a fiber connection between $68$ brain regions.
Thus, the final observation $\calY$ is of dimension $68 \times 68 \times 136$.
Associated with $136$ individuals, there are $573$ features including ages, genders, and various measurements of their brains.
This dataset has been used in \cite{hu2021generalized} for tensor regression and it is available in the \proglang{R} package \pkg{tensorregress}.
We consider the instance of the STEFA model with $\bA_3 = \bPhi_3(\bX) \bB_3 + \bR_3(\bX) + \bGamma_3$.
The covariate $\bX$ contains five features: gender (65 females vs. 71 males), age 22-25 ($n=35$), age 26-30 ($n=58$), and age 31+ ($n=43$).
These categorical variables are coded using sum-to-zero contrasts and the lower rank is set as $(10, 10, 4)$, the same as those set in  \cite{hu2021generalized}.  As an illustration, we choose $\bPhi_3(\bX)$ generated by polynomial basis of order 1, which is a similar linear setting as that in \cite{hu2021generalized} with identity link function. 

\cite{hu2021generalized} consider a similar setting for tensor regression.
However, there are two major differences.
First, they consider a generalized linear model (GLM) with a low-Tucker-rank coefficient tensor that predict $\cal Y$ with given observations $\bX$ and their estimation is based on the maximum likelihood estimation. 
On the contrary, the STEFA model aims to discover the relationships between entries in $\calY$ (with the help of covariates $\bX$ when available) without using any distributional assumption. 
This relationship between entries in $\calY$ can also be used to predict values by interpolation as proposed in Section \ref{sec:pred}. 
Second, the STEFA model is comparable to MMC tensor regression only when the GLM linkage function is linear. 
In such setting, the form of the tensor regression model in \cite{hu2021generalized} is equivalent to assuming $\bA_3 = \bPhi_3(\bX) \bB_3$ with $\bPhi_3(\bX)=\bX$ in the STEFA model. 
It has two limitations: only linear components of $\bX$ is considered and the non-parametric residuals independent with $\bX$ are ignored.
In contrast, the STEFA model and the IP-SVD estimate $\bA_3$ which consists of covariate-relevant $\bPhi_3(\bX) \bB_3$, the residual component $\bR(\bX)$, as well as the orthogonal component $\bGamma_3$. 
We compare the relative mean squared error ${\rm ReMSE}_{\calY} = {\norm{\hat\calY - \calY}_F^2}/{\norm{\calY}_F^2}$  of the STEFA and MMC tensor regression with the identity link in Table \ref{table:appl-2-comp-mmc}. 
The relative mean squared error of the STEFA is smaller under all choices of basis functions. 
In fact, the STEFA model complements the work in \cite{hu2021generalized} since the orthogonal $\bGamma_3$ can also be included in their tensor regression models. 

\begin{table}[htpb!]
    \centering
    \caption{Relative mean squared error ${\rm ReMSE}_{\calY}= {\norm{\hat\calY - \calY}_F^2}/{\norm{\calY}_F^2}$ of the MMC with identity link and the STEFA. 
    Each columns corresponds to a type of basis function with its order in the parentheses.  
    The MMC and the STEFA use the same basis function in $\bPhi(\bX)$. 
    }
    \label{table:appl-2-comp-mmc}
    \resizebox{0.7\textwidth}{!}{%
        \begin{tabular}{c|ccc}
            \hline
             & Polynomials (1) & Polynomials (3)  & Legendre (5) \\ \hline
            MMC (Identity link) & $39.6\%$  & $39.5\%$ &$39.4\%$ \\ \hline
            STEFA& $38.7\%$ & $38.4\%$ &  $38.3\%$  \\ \hline
            Improvement & $2.3\%$ & $2.8\%$ & $2.8\%$  \\ \hline
        \end{tabular}%
    }
\end{table}

The covariate-relevant $\bPhi_3(\bX) \bB_3$ is determined by covariates specified by domain experts. 
Researchers may be curious to identify features affecting brain connectivity other than those already known in the field. 
Now we show that the residual  $\hat\bGamma_3  = \hat\bA_3 - \hat\bPhi_3(\bX)\hat\bB_3$ obtained by STEFA can be used to discover several features other than gender and age.
Analogous to the interpretation that the rows of the $136\times 4$ loading matrix $\hat\bA_3$ can be reviewed as the low-rank representation of $136$ subjects in the latent factor space, 
matrices $\hat\bGamma_3$ and $\hat\bPhi_3(\bX)\hat\bB_3$ can be interpreted, respectively, as covariate-independent and covariate-dependent low-rank representations. 
The idea to identify important features left in the residual component is to first use $\hat\bGamma_3$ in spectral clustering to divide $136$ subjects into four groups, and then use recursive feature elimination (RFE) to identify the top four important features that differentiate these four groups. 
The identified top four important features that are disparate across groups are the average thickness of the right transverse temporal gyri 
which has been shown to be correlated with human acoustic processing \citep{warrier2009relating}, 
the volumn of accumbens area 
which is a key structure in mediating emotional and motivation processing, modulating reward and pleasure processing, and serving a key limbic-motor interface \citep{cohen2009good,salgado2015nucleus},
the unadjusted negative emotion affect related to sadness, fear, and anger, 
and a personality raw score on being active or not. 
Section \ref{append:HCP} provides another illustration of using the STEFA and IP-SVD for explanatory data analysis to partitioning the brain connectivity according to the covariate-relevant loading $\bPhi_3(\bX)\bB_3$. 
These interesting discoveries from explanatory data analysis can be used as good starting points for the following more rigorous scientific researches. 

\section{Discussion} \label{sec:summ}

This paper introduces a high-dimensional Semiparametric TEnsor FActor (STEFA) model with nonparametric loading functions that depend on a few observed covariates.
This model is motivated by the fact that observed variables can partially explain the factor loadings, which helps to increase the accuracy of estimation and the interpretability of results.
We propose a computationally efficient algorithm IP-SVD to estimate the unknown tensor factor, loadings, and the latent dimensions.
The advantages of IP-SVD are two-fold.
First, unlike HOOI which iterates in the ambient dimension, IP-SVD finds the principal components in the covariate-related subspace whose dimension can be significantly smaller.
As a result, IP-SVD requires weaker SNR conditions for convergence.
Secondly, the projection also reduces the effect dimension size of stochastic noise and thus IP-SVD yields an estimate of latent factors with faster convergence rates.

While tensor data is everywhere in the physical world, statistical analysis for tensor data is still challenging. There are several interesting topics for future research.
First, it is important to develop non-parametric tests on whether observed relevant covariates have explaining powers on the loadings and whether they fully explain the loadings.
However, under the tensor decomposition setting, this is more challenging than a straightforward extension from \cite{fan2016projected}.
Second, we mentioned briefly that, when there are multiple observations, one can apply IP-SVD on the sample covariance tensor.
However, a more precise algorithm is needed.
Last but not the least, it is of great need to develop new methods to use STEFA in tensor regression or other tensor data related applications.

\section*{Acknowledgments}
We express our gratitude to the referees and the editors for their invaluable feedback, which greatly enhanced the quality of this paper.

\noindent
\textit{Conflict of interest:} None declared.

\section*{Funding}
Chen's research is supported in part by NSF Grant DMS-1803241.
Xia's research is supported in part by Hong Kong RGC ECS Grant 26302019 and GRF grant 16303320.  Fan's research was partially supported by NSF Grants DMS-1662139, DMS-1712591, DMS-2210833,  and ONR grant N00014-22-1-2340.

\section*{Data, Source Code, and Supplementary Material}
The Comprehensive Climate Dataset (CCDS) is available at {\url{http://www-bcf.usc.edu/~liu32/data/NA-1990-2002-Monthly.csv}.
The source code is available at \url{https://github.com/ElynnCC/STEFA-Code}.
Supplementary material is available online at {\it Journal of the Royal Statistical Society: Series B.}

\bibliographystyle{plainnat}
\bibliography{bibliography}       

\listoffigures

\end{document}


%
%

\begin{appendices}
\begin{center}
{\Large Supplementary Material of ``Semiparametric Tensor Factor Analysis by Iteratively Projected SVD''}

{Elynn Y. Chen$^1$, Dong Xia$^2$, Chencheng Cai$^3$ and Jianqing Fan$^{4,5}$\\ \normalsize
$^1$New York University, 
$^2$Hong Kong University of Science and Technology\\ \normalsize
$^3$Washington State University, $^4$ Fudan University, $^5$Princeton University}
\end{center}

\section{Major Theoretical Proofs}


\begin{proof}[Proof of Lemma~\ref{lem:tfm-ic}]
	Let $\calS = \tilde\calF\times_1\tilde\bA_1\times_2\cdots\times_M\tilde \bA_M$ be a Tucker decomposition of $\calS$ such that $\tilde\calF\in\mathbb R^{R_1\times \cdots\times R_M}$ is the core tensor, and for $m\in[M]$, $\tilde\bA_m\in\mathbb R^{I_m\times R_m}$ is the $m$-mode loading matrix satisfying Assumption~\ref{assum:tfm-ic-1}(i).
	
	Clearly, $\calS = \calF\times_1\bA_1\times_2\cdots\times_M\bA_M$ is a valid Tucker decomposition satisfying Assumption~\ref{assum:tfm-ic-1}(i) if and only if there exist orthogonal matrices $\bH_m\in\mathbb O^{R_m\times R_m}$ for all $m\in[M]$ such that $\bA_m = \tilde\bA_m\bH_m$ and $\calF = \tilde\calF\times_1\bH_1^\top\times_2\cdots\times_m\bH_M^\top$. Moreover, then $\calM_m(\calS)\calM_m(\calS)^\top/I_m$ has the same singular values with $\calM_m(\calF)\calM_m(\calF)^\top$
	
	
	
	
	Now we are in the place to prove the lemma. Recall that any Tucker decomposition of $\calS$ satisfying Assumption~\ref{assum:tfm-ic-1}(i) are indexed by the set of orthogonal matrices $(\bH_1,\cdots, \bH_M)$ in reference to the decomposition $\calS = \tilde\calF\times_1\tilde\bA_1\times_2\cdots\times_M\tilde \bA_M$. The corresponding core tensor $\calF$ is $\calF = \tilde \calF\times_1\bH_1^\top\times_2\cdots\times_M\bH_M^\top$. The mode-$m$ matricization of $\calF$ is $\calM_m(\calF) = \bH_m^\top\calM_m(\tilde\calF)\left[\bigotimes_{l\in[M], l\neq m}\bH_l^\top\right]$. Suppose for some $\bH_m$, Assumption~\ref{assum:tfm-ic-1}(i) is satisfied such that $\calM_m(\calF)\calM_m(\calF)^\top = \bH_m^\top \calM_m(\tilde\calF)\calM_m(\tilde\calF)^\top\bH_m=\bD_m$ for some diagonal matrix $\bD_m$ with non-zero decreasing diagonal entries. Then the diagonal entries of $\bD_m$ are the eigenvalues of $\calM_m(\tilde\calF)\calM_m(\tilde\calF)^\top$ and the columns in $\bH_m$ are the corresponding eigenvectors, because of the equality
	$\calM_m(\tilde\calF)\calM_m(\tilde\calF)^\top\bH_m=\bH_m\bD_m$.
	Note that $\bD_m$ has the same singular values with $\calM_m(\calS)\calM_m(\calS)^\top$. As a result, when the singular values of $\calM_m(\calS)\calM_m(\calS)^\top$ are distinct, the eigenvalues and eigenvectors of $\calM_m(\tilde\calF)\calM_m(\tilde\calF)^\top$ can be uniquely identified (up to a global sign), resulting in an unique $\bH_m$. Here, the uniqueness is up to a column-wise sign of $\bH_m$.
	
	In conclusion, starting from an arbitrary Tucker decomposition $\calS = \tilde\calF\times_1\tilde\bA_1\times_2\cdots\times_M\tilde \bA_M$ satisfying Assumption~\ref{assum:tfm-ic-1}(i), by choosing the columns of $\bH_m$ to be the eigenvectors of $\calM_m(\tilde\calF)\calM_m(\tilde\calF)^\top$ in descending order of eigenvalues, the Tucker decomposition with $\calF=\tilde\calF\times_1\bH_1^\top\times_2\cdots\times_M\bH_M^\top, \bA_1=\tilde \bA_1\bH_1,\cdots, \bA_M=\tilde \bA_M\bH_M$ is the unique Tucker decomposition satisfying both Assumptions 1(i) and 1(ii) when the eigenvalues of
	$\calM_m(\calS)\calM_m(\calS)^\top$ are distinct for all $m\in[M]$.
\end{proof}

\begin{proof}[Proof of Lemma~\ref{lem:init}] We prove the initialization error and convergence of IP-SVD separately. Without loss of generality, we assume $\EE \eps_{\omega}^2=1$ for all $\omega\in [I_1]\times [I_2]\times [I_3]$.
	
	We begin with the upper bound of the remainder term $\bR_m(\bX_m)$. By Assumption~\ref{assump:sieve}, for each function $m\in[M]$ and $1\leq r_m\leq R_m$, $|g_{m,r_m}(\bx_{i_m})-\bb_{m,r_m}^{\top}\bphi_m(\bx_{i_m})|=O(J_m^{-\tau/2})$ which bounds the $(i_m,r_m)$-th entry of $\bR_m(\bX_m)$. Therefore, a simple fact is
	\begin{equation}\label{eq:RjX}
		\|\bR_m(\bX_m)\|_{\rm F}^2/I_m=O\big(R_m\cdot J_m^{-\tau}\big)
	\end{equation}
	for all $m\in[M]$.
	
	{\bf Initialization error}.
	Without loss of generality, we only prove the upper bound of $\|\tilde \bG_1^{(0)}\tilde \bG_1^{(0)\top}-\bG_1\bG_1^{\top}\|_{\rm F}$. Recall that
	$
	\bG_1=\bPhi_1(\bX_1)\bB_1+\bR_1(\bX_1)
	$
	where, by the definition of $\bPhi_1(\bX_1)$, we have $\bPhi_1(\bX_1)^{\top}\bGamma_1={\bf 0}$. Let $\check{\bG}_1/\sqrt{I_1}$ denotes the top-$R_1$ left singular vectors of $\bPhi_1(\bX_1)\bB_1$. By Condition~(\ref{eq:RjX}) and Davis-Kahan theorem \citep{davis1970rotation} or spectral perturbation formula \citep{xia2019normal},
	\begin{align}\label{eq:check_err}
		\|\check{\bG}_1\check{\bG}_1^{\top}-\bG_1\bG_1^{\top}\|_{\rm F}/I_1=O\big(\sqrt{R_1}\cdot J_1^{-\tau/2}\big)
	\end{align}
	where we used the fact that $\sigma_{R_1}\big(\bPhi_1(\bX_1)\bB_1/\sqrt{I_1}\big)\geq 1-O(\sqrt{R_1}\cdot J_1^{-\tau/2})\geq 1/2$ and $\|\bR_1(\bX_1)\|_{\rm F}/\sqrt{I_1}=O(\sqrt{R_1}\cdot J_1^{-\tau/2})$.
	
	Recall that
	$
	\widetilde\calY=\calF\times_1(\bP_1\bG_1)\times_2 (\bP_2\bG_2)\times_3 (\bP_3\bG_3)+\calE\times_1 \bP_1\times_2 \bP_2\times_3 \bP_3
	$
	and as a result
	$$
	\calM_1(\widetilde\calY)=\bP_1\bG_1\calM_1(\calF)\big((\bP_2\bG_2)\otimes (\bP_3\bG_3)\big)^{\top}+\bP_1\calM_1(\calE)(\bP_2\otimes\bP_3)^{\top}
	$$
	where we denote $\otimes$ the kronecker product.
	The projector matrix $\bP_m\in\RR^{I_m\times I_m}$ with $\rank(\bP_m)=J_m$. Denote the eigen-decomposition of $\bP_m$ by
	$
	\bP_m=\bU_m\bU_m^{\top}\ {\rm where}\  \bU_m^{\top}\bU_m=\bI_{J_m}.
	$
	Therefore, we write
	$$
	\bP_1\calM_1(\calE)(\bP_2\otimes \bP_3)=\bU_1\big(\bU_1^{\top}\calM_1(\calE)(\bU_2\otimes \bU_3)\big)(\bU_2\otimes \bU_3)^{\top}.
	$$
	In addition, we write
	\begin{align*}
		\bP_1\bG_1\calM_1(\calF)\big((\bP_2\bG_2)\otimes (\bP_3\bG_3)\big)^{\top}=\bP_1\bPhi_1(\bX_1)\bB_1\calM_1(\calF)\big((\bP_2\bG_2)\otimes  (\bP_3\bG_3)\big)^{\top}\\
		+\bP_1\bR_1(\bX_1)\calM_1(\calF)\big((\bP_2\bG_2)\otimes (\bP_3\bG_3)\big)^{\top}
	\end{align*}
	where the left singular space of the first matrix is the same to column space of $\check\bG_1$. Denote the top-$R_1$ left singular vectors of $\bP_1\bG_1\calM_1(\calF)\big((\bP_2\bG_2)\otimes (\bP_3\bG_3)\big)^{\top}$ by $\mathring{\bG}_1/\sqrt{I_1}$. Again by Davis-Kahan theorem \citep{davis1970rotation} or spectral perturbation formula \citep{xia2019normal}, we have
	\begin{align}\label{eq:ring-check-err}
		\|\mathring{\bG}_1\mathring{\bG}_1^\top-\check{\bG}_1\check{\bG}_1^{\top}\|_{\rm F}/I_1=O(\kappa_0\sqrt{R_1}\cdot J_1^{-\tau/2})
	\end{align}
	where $\kappa_0$ is $\calF$'s condition number and we used the facts
	\begin{align*}
		\sigma_{R_1}\Big(\bP_1\bPhi_1(\bX_1)\bB_1\calM_1(\calF)\big((\bP_2\bG_2)\otimes (\bP_3\bG_3)\big)^{\top}\Big)\geq& \lambda_{\min}\sqrt{I_2I_3}\cdot \sigma_{R_1}\big(\bPhi_1(\bX_1)\bB_1\big)\\
		&\geq \lambda_{\min}\sqrt{I_1I_2I_3}/2
	\end{align*}
	and
	\begin{align*}
		\Big\|\bP_1\bR_1(\bX_1)\calM_1(\calF)\big((\bP_2\bG_2)\otimes (\bP_3\bG_3)\big)^{\top}\Big\|_{\rm F}=&O\big(\kappa_0\lambda_{\min}\sqrt{I_2I_3}\cdot \|\bR_1(\bX_1)\|_{\rm F}\big)\\
		&=O(\kappa_0\lambda_{\min}\sqrt{R_1I_1I_2I_3}\cdot J_1^{-\tau/2}).
	\end{align*}
	For notational simplicity, we write $\calM_1(\widetilde{\calY})=\bA_1+\bZ_1$ where
	$\bA_1=\bP_1\bG_1\calM_1(\calF)\big((\bP_2\bG_2)\otimes (\bP_3\bG_3)\big)^{\top}$ and $\bZ_1=\bP_1\calM_1(\calE)(\bP_2\otimes\bP_3)^{\top}$.
	Then, $\mathring{\bG}_1/\sqrt{I_1}$ are the top-$R_1$ left singular vectors of $\bA_1$ and are also the top-$R_1$ eigenvectors of $\bA_1\bA_1^{\top}$. Since $\tilde \bG_1^{(0)}/\sqrt{I_1}$ are the top-$R_1$ left singular vectors of $\calM_1(\widetilde{\calY})$, they are also the top-$R_1$ eigenvectors of $\calM_1(\widetilde{\calY})\calM_1^{\top}(\widetilde{\calY})$ whcih can be written as
	$
	\calM_1(\widetilde{\calY})\calM_1^{\top}(\widetilde{\calY})=\bA_1\bA_1^{\top}+\bA_1\bZ_1^{\top}+\bZ_1\bA_1^{\top}+\bZ_1\bZ_1^{\top}.
	$
	Observe that
	$$
	\bZ_1\bZ_1^{\top}=\bU_1 \big(\bU_1^{\top}\calM_1(\calE)(\bU_2\otimes \bU_3)\big)\big(\bU_1^{\top}\calM_1(\calE)(\bU_2\otimes \bU_3)\big)^{\top}\bU_1^{\top}.
	$$
	Then, we can write
	\begin{align*}
		\calM_1(\widetilde{\calY})\calM_1^{\top}(\widetilde{\calY})=\bA_1\bA_1^{\top}+J_2J_3\bP_1+\bA_1\bZ_1^{\top}+\bZ_1\bA_1^{\top}+\bZ_1\bZ_1^{\top}-J_2J_3\bP_1.
	\end{align*}
	By definition, the column space of $\mathring \bG_1$ is a subspace of the column space of $\bP_1$, and $\calP_{\mathring{\bG}_1}\bA_1\bA_1^{\top}\calP_{\mathring{\bG}_1}=\bA_1\bA_1^{\top}$ where
	$\calP_{\mathring{\bG}_1}=\mathring{\bG}_1\mathring{\bG}_1^{\top}/I_1$ denotes the orthogonal projection onto the column space of $\mathring{\bG}_1$.
	
	We write
	\begin{align}\label{eq:M1yM1y}
		\calM_1(\widetilde{\calY})\calM_1^{\top}(\widetilde{\calY})=\underbrace{\big(\bA_1\bA_1+J_2J_3\bP_1\big)}_{\bM}
		+(\bA_1\bZ_1^{\top}+\bZ_1\bA_1^{\top})
		+\bZ_1\bZ_1^{\top}-J_2J_3\bP_1.
	\end{align}
	Clearly, the top-$R_1$ left singular space of $\bM$ is the column space of $\mathring{\bG}_1$ and $\sigma_{R_1}(\bM)-\sigma_{R_1+1}(\bM)=\sigma_{R_1}(\bA_1\bA_1^{\top})\geq \lambda_{\min}^2\cdot I_1I_2I_3$. 
	
	The upper bounds on the spectral norm $\|\bA_1\bZ_1^{\top}+\bZ_1\bA_1^{\top}\|$ and $\|\bZ_1\bZ_1^{\top}-J_2J_3\bP_1\|$ are due to the following lemma whose proof is postponed to Appendix~\ref{sec:app_technical}.
	
	\begin{lemma}\label{lem:AZbound}
		Suppose that Assumption~\ref{assump:noise} holds. There exists an absolute constant $C_1>0$ such that with probability at least $1-2I_1^{-2}$,
		$$
		\|\bA_1\bZ_1^{\top}\|\leq C_4\kappa_0\lambda_{\min}(I_1I_2I_3)^{1/2}\cdot\sqrt{J_1}\log^2 I_1.
		$$
	\end{lemma}
	
	By Lemma~\ref{lem:AZbound}, the following bound holds with probability at least $1-2I_1^{-2}$
	$$
	\|\bA_1\bZ_1^{\top}+\bZ_1\bA_1^{\top}\|\leq C_4\kappa_0\lambda_{\min}(I_1I_2I_3)^{1/2}\cdot\sqrt{J_1}\log^2 I_1.
	$$
	Observe that
	\begin{align*}
		\bZ_1\bZ_1^{\top}&-J_2J_3\bP_1\\
		=&\bU_1\Big(\big(\bU_1^{\top}\calM_1(\calE)(\bU_2\otimes \bU_3)\big)\big(\bU_1^{\top}\calM_1(\calE)(\bU_2\otimes \bU_3)\big)^{\top}-J_2J_3\bI_{J_1}\Big)\bU_1^{\top}.
	\end{align*}
	\begin{lemma}\label{lem:ZZ_bound}
		Suppose that Assumption~\ref{assump:noise} holds. There exist absolute constants $C_3,C_4>0$ so that
		$$
		\PP\big(\|\bZ_1\bZ_1^{\top}-J_2J_3\bP_1\|\geq C_3\sqrt{J_1J_2J_3}\log^4I_1+C_4J_1\log^{5/2}I_1\big)\leq 5I_1^{-2}.
		$$
	\end{lemma}
	By Lemma~\ref{lem:ZZ_bound}, with probability at least $1-5I_1^{-2}$,
	$$
	\big\|\bZ_1\bZ_1^{\top}-J_2J_3\bP_1\big\|\leq C_3\sqrt{J_1J_2J_3}\log^4I_1+C_4J_1\log^{5/2}I_1.
	$$
	We now continue from (\ref{eq:M1yM1y}). By Davis-Kahan theorem \citep{davis1970rotation} and Lemma~\ref{lem:AZbound} and \ref{lem:ZZ_bound}, we get with probability at least $1-7I_1^{-2}$ that
	\begin{align*}
		\big\| \mathring{\bG}_1\mathring{\bG}_1^{\top}-\widetilde{\bG}_1^{(0)}\widetilde{\bG}_1^{(0)\top}\big\|_{\rm F}&\leq C_3\frac{\kappa_0\sqrt{R_1J_1}\log^2I_1}{\lambda_{\min}\sqrt{I_1I_2I_3}}
		&+C_4R_1^{1/2}\frac{(J_1J_2J_3)^{1/2}\log^{4}I_1+J_1\log^{5/2}I_1}{\lambda_{\min}^2I_1I_2I_3}
	\end{align*}
	for some absolute constants $C_3,C_4>0$. Denote the above event by $\mathfrak{E}_1$.
	Together with (\ref{eq:check_err}) and (\ref{eq:ring-check-err}), we get on event $\mathfrak{E}_1$ that
	\begin{align}
		\|\bG_1\bG_1^{\top}-&\widetilde{\bG}_1^{(0)}\widetilde{\bG}_1^{(0)\top}\|_{\rm F}/I_1\notag\\
		\leq&C_3'\left(\frac{\kappa_0\sqrt{R_1J_1}\log^2I_1}{\lambda_{\min}\sqrt{I_1I_2I_3}}+\frac{\sqrt{R_1J_1(J_1\vee J_2J_3)}\log^4I_1}{\lambda_{\min}^2I_1I_2I_3}+\kappa_0\sqrt{R_1}\cdot J_1^{-\tau/2}\right).\label{eq:check-init-err}
	\end{align}
	In view of (\ref{eq:check-init-err}), the initialization $\widetilde{\bG}_1^{(0)}$ is close to the truth as long as
	\begin{align*}
		\lambda_{\min}\sqrt{I_1I_2I_3}\geq C_1'\left(\kappa_0\sqrt{R_1J_1}\log^2I_1+\big(R_1J_1(J_1\vee J_2J_3)\big)^{1/4}\log^2I_1\right)
	\end{align*}
	and $J_1\gg \kappa_0^{2/\tau}R_1^{1/\tau}$.
	The proof is completed by assuming $\kappa_0=O(1)$ and $J_1\asymp J_2\asymp J_3$. Assuming warm initializations and iterates for $\widetilde{\bG}_m^{(t-1)}$, we prove the contraction property for $\widetilde{\bG}_m^{(t)}$.
	
	{\bf IP-SVD iterations}. Without loss of generality, we fix an integer value of $t$ and prove the contraction inequality (\ref{eq:pi_rate}) for $m=1$. For notation simplicity, we denote
	$$
	{\rm Err}_t=\max_{m=1,2,3} \|\tilde \bG_m^{(t)}\tilde \bG_m^{(t)\top}-\bG_m\bG_m^{\top}\|_{\rm F}/I_m.
	$$
	
	By projected power iteration in Section~\ref{sec:est}, the scaled singular vectors $\tilde \bG_1^{(t)}$ is obtained by
	$$
	\tilde \bG_1^{(t)}/\sqrt{I_1}={\rm SVD}_{R_1}\Big(\bP_1\calM_1\Big(\calY\times_2\tilde \bG_2^{(t-1)\top}\times_3 \tilde\bG_3^{(t-1)\top}\Big)\Big)
	$$
	Recall $\bA_m=\bG_m+\bGamma_m$ for $m=1,2,3$ and
	$$
	\calY=\underbrace{\calF\times_1(\bG_1+\bGamma_1)\times_2 (\bG_2+\bGamma_2)\times_3 (\bG_3+\bGamma_3)}_{\calS}+\calE.
	$$
	We then write
	\begin{align}\label{eq:P1Y1G2G3}
		\bP_1\calM_1\Big(\calY&\times_2 \tilde\bG_2^{(t-1)\top}\times_3 \tilde \bG_3^{(t-1)\top}\Big)\notag\\
		=& \bP_1\calM_1\Big(\calS\times_2 \tilde\bG_2^{(t-1)\top}\times_3 \tilde \bG_3^{(t-1)\top}\Big)+\bP_1\calM_1\Big(\calE\times_2 \tilde\bG_2^{(t-1)\top}\times_3 \tilde \bG_3^{(t-1)\top}\Big).
	\end{align}
	By the fact $\bP_1\bGamma_1={\bf 0}$, we obtain
	\begin{align*}
		\bP_1\calM_1\Big(\calS\times_2 \tilde\bG_2^{(t-1)\top}\times_3 \tilde \bG_3^{(t-1)\top}\Big)=\bP_1\bG_1\calM_1(\calF)\big((\bA_2^{\top}\tilde\bG_2^{(t-1)})\otimes (\bA_3^{\top}\tilde\bG_3^{(t-1)})\big).
	\end{align*}
	Observe that $\bP_1\bG_1=\bPhi(\bX_1)\bB_1+\bP_1\bR_1(\bX_1)$. By Condition (\ref{eq:RjX}), we get
	$$
	\sigma_{\min}\big(\bP_1\bG_1/\sqrt{I_1}\big)\geq \sigma_{\min}\big(\bPhi_1(\bX_1)\bB_1/\sqrt{I_1}\big)-O(\sqrt{R_1}\cdot J_1^{-\tau/2})\geq 1-O(\sqrt{R}_1\cdot J_1^{-\tau/2}).
	$$
	Recall that the column space of $\tilde \bG_m^{(t-1)}$ is a subspace of $\bPhi_m(\bX_m)$ for all $m=1,2,3$, implying that
	$
	\bA_m^{\top}\tilde\bG_m^{(t-1)}=\bG_m^{\top}\tilde\bG_m^{(t-1)}
	$
	and as a result
	$$
	\sigma_{\min}\big(\bA_m^{\top}\tilde\bG_m^{(t-1)}\big)=\sigma_{\min}\big(\bG_m^{\top}\tilde\bG_m^{(t-1)}\big)\geq I_m \sqrt{1-\|\tilde \bG_m^{(t-1)}\tilde \bG_m^{(t-1)\top}-\bG_m\bG_m^{\top}\|/I_m}\geq \sqrt{2}I_m/2
	$$
	where the last inequality is due to the fact $\|\tilde \bG_m^{(t-1)}\tilde \bG_m^{(t-1)\top}-\bG_m\bG_m^{\top}\|/I_m\leq 1/2$ which holds as long as the conditions of Lemma~\ref{lem:init} hold and initializations are warm in that $\|\tilde\bG_m^{(0)}\tilde\bG_m^{(0)\top}-\bG_m\bG_m^{\top}\|/I_m\leq 1/2$ for all $m\in[M]$. Therefore, we conclude that
	$$
	\sigma_{\min}\Big(\bP_1\calM_1\big(\calS\times_2 \tilde\bG_2^{(t-1)\top}\times_3 \tilde \bG_3^{(t-1)\top}\big)\Big)\geq \frac{\sqrt{I_1}I_2I_3}{3}\cdot \sigma_{R_1}\big(\calM_1(\calF)\big)\geq \frac{\lambda_{\min}\sqrt{I_1}I_2I_3}{3}.
	$$
	We now bound the operator norm of $\bP_1\calM_1\big(\calE\times_2 \tilde\bG_2^{(t-1)\top}\times_3 \tilde \bG_3^{(t-1)\top}\big)$. We write
	\begin{align}
		\bP_1\calM_1\big(\calE\times_2 &\tilde\bG_2^{(t-1)\top}\times_3 \tilde \bG_3^{(t-1)\top}\big)=\bP_1\calM_1(\calE\times_2(\bG_2\tilde{\bO}_2^{(t-1)})^{\top}\times_3 (\bG_3\tilde{\bO}_3^{(t-1)})^{\top}\big)\notag\\
		+&\bP_1\calM_1\big(\calE\times_2(\bG_2\tilde{\bO}_2^{(t-1)})^{\top}\times_3 (\tilde{\bG}_3^{(t-1)}-\bG_3\tilde{\bO}_3^{(t-1)})^{\top}\big)\notag\\
		&+\bP_1\calM_1\big(\calE\times_2(\tilde{\bG}_2^{(t-1)}-\bG_2\tilde{\bO}_2^{(t-1)})^{\top}\times_3 \tilde{\bG}_3^{(t-1)\top}\big) \label{eq:P1M1Et-1}
	\end{align}
	where $\tilde{\bO}_2^{(t-1)}=\arg\min_{\bO\in\OO^{R_2\times R_2}}\|\tilde{\bG}_2^{(t-1)}-\bG_2\bO\|_{\rm F}$ and $\tilde{\bO}_3^{(t-1)}=\arg\min_{\bO\in\OO^{R_3\times R_3}}\|\tilde{\bG}_3^{(t-1)}-\bG_3\bO\|_{\rm F}$. Clearly,
	\begin{align*}
		\big\|\bP_1\calM_1(\calE\times_2(\bG_2\tilde{\bO}_2^{(t-1)})^{\top}\times_3(\bG_3\tilde{\bO}_3^{(t-1)})^{\top}\big)\big\|=&\|\bP_1\calM_1(\calE)(\bG_2\otimes \bG_3)\|\\
		=&\|\bPhi_1(\bPhi_1^{\top}\bPhi_1)^{-1}\bPhi_1^{\top}\calM_1(\calE)(\bG_2\otimes\bG_3)\|
	\end{align*}
	where we abuse the notation and write $\bPhi_1=\bPhi_1(\bX_1)$.
	Similarly, as the proof of Lemma~\ref{lem:init}, denote $\bU_1$ the eigenvectors of $\bP_1$ so that $\bU_1^{\top}\bU_1=\bI_{J_1}$. Then,
	\begin{align*}
		\big\|\bP_1\calM_1(\calE\times_2(\bG_2\tilde{\bO}_2^{(t-1)})^{\top}\times_3(\bG_3\tilde{\bO}_3^{(t-1)})^{\top}\big)\big\|=\|\bU_1^{\top}\calM_1(\calE)(\bG_2\otimes \bG_3)\|
	\end{align*}
	where the matrix $\bU_1^{\top}\calM_1(\calE)(\bG_2\otimes \bG_3)$ has size $J_1\times (R_2R_3)$.
	
	\begin{lemma}\label{lem:UEG2G3}
		Suppose that Assumption~\ref{assump:noise} holds. There exist absolute constants $C_5,C_6>0$ so that
		$$
		\PP\left(\|\bU_1^{\top}\calM_1(\calE)(\bG_2\otimes \bG_3)\|/\sqrt{I_2I_3}\geq C_5\sqrt{J_1\log I_1}+C_6\sqrt{R_2R_3}\log^{2}I_1\right)\leq 2I_1^{-2}.
		$$
	\end{lemma}

	By Lemma~\ref{lem:UEG2G3}, we get that with probability at least $1-2I_1^{-2}$ that
	\begin{align}\label{eq:P1M1EG2G3}
		\big\|\bP_1\calM_1(\calE\times_2(\bG_2\tilde{\bO}_2^{(t-1)})^{\top}\times_3(\bG_3\tilde{\bO}_3^{(t-1)})^{\top}\big)\big\|/\sqrt{I_2I_3}=O\left(\sqrt{J_1\vee (R_2R_3)}\log^{2}I_1\right).
	\end{align}
	We now bound the second and third terms on RHS of (\ref{eq:P1M1Et-1}). Write
	\begin{align*}
		\big\|\bP_1\calM_1\big(\calE&\times_2(\bG_2\tilde{\bO}_2^{(t-1)})^{\top}\times_3 (\tilde{\bG}_3^{(t-1)}-\bG_3\tilde{\bO}_3^{(t-1)})^{\top}\big)\big\|\\
		=&\big\|\bU_1^{\top}\calM_1(\calE)\big(\bG_2\otimes(\tilde{\bG}_3^{(t-1)}-\bG_3\tilde{\bO}_3^{(t-1)})\big)\big\|.
	\end{align*}
	Recall $\bG_3=\bPhi_3\bB_3+\bR_3$ where we again abused the notation and dropped their dependences on $\bX_3$. Therefore,
	\begin{align}\label{eq:pi_eq1}
		\Big|\|\tilde{\bG}_3^{(t-1)}-\bG_3\tilde{\bO}_3^{(t-1)} \|_{\rm F}-\|\tilde\bG_3^{t-1}-\bPhi_3\bB_3\tilde \bO_3^{(t-1)}\|_{\rm F}\Big|/\sqrt{I_3}=O(\sqrt{R_3}\cdot J_3^{-\tau/2}).
	\end{align}
	We obtain
	\begin{align*}
		\big\|\bU_1^{\top}\calM_1(\calE)&\big(\bG_2\otimes(\tilde{\bG}_3^{(t-1)}-\bG_3\tilde{\bO}_3^{(t-1)})\big)\big\|\\
		\leq&\big\|\bU_1^{\top}\calM_1(\calE)\big(\bG_2\otimes(\tilde{\bG}_3^{(t-1)}-\bPhi_3\bB_3\tilde{\bO}_3^{(t-1)})\big)\big\|+\big\|\bU_1^{\top}\calM_1(\calE)\big(\bG_2\otimes \bR_3\big)\big\|.
	\end{align*}
	Note that the column space of $\tilde \bG_3^{(t-1)}$ belongs to the column space of $\bPhi_3$. Denote $\bU_3$ the left singular vectors of $\bPhi_3\bB_3$. Then,
	\begin{align}\label{eq:U1E1G2_DeltaG3}
		\big\|\bU_1^{\top}\calM_1(\calE)\big(\bG_2&\otimes(\tilde{\bG}_3^{(t-1)}-\bPhi_3\bB_3\tilde{\bO}_3^{(t-1)})\big)\big\|\notag\\
		&\leq \|\tilde\bG_3^{(t-1)}-\bPhi_3\bB_3\tilde \bO_3^{(t-1)}\|_{\rm F}\cdot \sup_{\bA\in \RR^{J_3\times R_3}, \|\bA\|_{\rm F}\leq 1}\|\bU_1^{\top}\calM_1(\calE)(\bG_2\otimes \bU_3 \bA)\|\notag\\
		=&O\big(\|\tilde\bG_3^{(t-1)}-\bPhi_3\bB_3\tilde \bO_3^{(t-1)}\|_{\rm F}\sqrt{I_2}\cdot \sqrt{J_1+J_3R_3+R_2R_3}\log^{3/2}I_1\big),
	\end{align}
	where the last inequality holds with probability at least $1-4I_1^{-2}$ and is due to Lemma~\ref{lem:UEsupGR}.
	
	\begin{lemma}\label{lem:UEsupGR}
		Suppose that $\calE$ has i.i.d entries satisfying Assumption~\ref{assump:noise}. Define $\mathfrak{B}(d_1,d_2):=\{\bA\in\RR^{d_1\times d_2}, \|\bA\|_{\rm F}\leq 1\}$. There exist absolute constants $C_1>0$ such that
		\begin{align}\label{eq:supUEGU3A}
			\PP\Big(\sup_{\bA\in \mathfrak{B}(J_3,R_3)} \big\|\bU_1^{\top}\calM_1(\calE)(\frac{\bG_2}{\sqrt{I_2}}\otimes \bU_3\bA) \big\|\geq C_1\sqrt{J_1+J_3R_3+R_2R_3}\log^{3/2}I_1\Big)\leq 4I_1^{-2}
		\end{align}
		and
		\begin{align}\label{eq:supUEU2AU3B}
			\PP\bigg(\sup_{\substack{\bA\in \mathfrak{B}(J_2,R_2)\\ \bB\in\mathfrak{B}(J_3,R_3)}} \big\|\bU_1^{\top}\calM_1(\calE)(\bU_2\bA\otimes \bU_3\bB) \big\|\geq C_2\sqrt{J_1+J_2R_2+J_3R_3+R_2R_3}\log^{3/2}I_1\bigg)\leq 4I_1^{-2}.
		\end{align}
	\end{lemma}
	Recall that $\bU_1, \bG_2, \bR_3$ are deterministic matrices. Following the same treatment as in the proof of Lemma~\ref{lem:AZbound}, we get with probability at least $1-2I_1^{-2}$,
	\begin{align}\label{eq:U1E1G2R3}
		\big\|\bU_1^{\top}\calM_1(\calE)\big(\bG_2\otimes \bR_3\big)\big\|\leq (R_3I_2I_3)^{1/2}J_3^{-\tau/2}\cdot\big(C_3\sqrt{J_1\log I_1}+C_4\sqrt{R_2R_3}\log^2I_1\big)
	\end{align}
	for some absolute constants $C_3,C_4>0$.
	
	Putting together (\ref{eq:U1E1G2_DeltaG3}) and (\ref{eq:U1E1G2R3}), we get with probability at least $1-6I_1^{-2}$ that
	\begin{align}\label{eq:P1M1EG2G3_delta}
		\big\|\bU_1^{\top}\calM_1(\calE)\big(\bG_2&\otimes(\tilde{\bG}_3^{(t-1)}-\bG_3\tilde{\bO}_3^{(t-1)})\big)\big\|\notag\\
		\leq&C_5({\rm Err}_{t-1}+\sqrt{R_3}\cdot J_3^{-\tau/2})\sqrt{I_2I_3}\cdot \sqrt{J_1+J_3R_3+R_2R_3}\log^{3/2}I_1.
	\end{align}
	Similarly, we can show that with probability at least $1-8I_1^{-2}$,
	\begin{align}\label{eq:P1M1EG2delta_G3}
		\|\bP_1\calM_1\big(\calE\times_2&(\tilde\bG_2^{(t-1)}-\bG_2\tilde{\bO}_2^{(t-1)})^{\top}\times_3 \tilde\bG_3^{(t-1)\top}\big)\|\notag\\
		\leq&C_6({\rm Err}_{t-1}+\sqrt{R_2}\cdot J_2^{-\tau/2})\sqrt{I_2I_3}\cdot \sqrt{J_1+J_3R_3+R_2R_3}\log^{3/2}I_1,
	\end{align}
	where we used the fact $\|\widetilde{\bG}_3^{(t-1)}\|_{\rm F}\leq \sqrt{R_3I_3}$ and the fact that the column space of $\widetilde{\bG}_3^{(t-1)}$ is a subspace of the column space of $\bU_3$.
	
	Therefore, by (\ref{eq:P1M1EG2G3}), (\ref{eq:P1M1EG2G3_delta}) and (\ref{eq:P1M1EG2delta_G3}), we conclude that with probability at least $1-16I_1^{-2}$ that
	\begin{align*}
		\big\|\bP_1\calM_1(\calE)(&\tilde\bG_2^{(t-1)}\otimes\tilde \bG_3^{(t-1)})\big\|\leq C_3\sqrt{I_2I_3}\cdot\sqrt{J_1+R_2R_3}\log^{2}I_1\\
		&+C_6({\rm Err}_{t-1}+\sqrt{R_2}J_2^{\tau/2}+\sqrt{R_3} J_3^{-\tau/2})\sqrt{I_2I_3}\cdot \sqrt{J_1+J_3R_3+R_2R_3}\log^{3/2}I_1.
	\end{align*}
	Now, we continue from (\ref{eq:P1Y1G2G3}). Recall that we denote $\mathring{\bG}_1/\sqrt{I_1}$ the top-$R_1$ left singular vectors of $\bP_1\bG_1$. As shown in the proof of initialization, we have $\|\mathring{\bG}_1\mathring{\bG}_1^{\top}-\bG_1\bG_1^{\top}\|_{\rm F}/I_1\leq 2\sqrt{R_1} J_1^{-\tau/2}$. Applying Daivs-Kahan theorem to (\ref{eq:P1Y1G2G3}), we get with probability at least  $1-16I_1^{-2}$ that
	\begin{align*}
		\|\tilde\bG_1^{(t)}&\tilde\bG_1^{(t)\top}-\mathring{\bG}_1\mathring{\bG}_1^{\top}\|_{\rm F}/I_1
		\leq C_4\frac{\sqrt{I_2I_3}\cdot\sqrt{J_1R_1+R_1R_2R_3}\log^{2}I_1}{\lambda_{\min}\sqrt{I_1}I_2I_3}\\
		&+C_5\frac{({\rm Err}_{t-1}+\sqrt{R_2}J_2^{-\tau/2}+\sqrt{R_3}J_3^{-\tau/2})\sqrt{I_2I_3}\cdot \sqrt{J_1R_1+J_3R_1R_3+R_1R_2R_3}\log^{3/2}I_1}{\lambda_{\min}\sqrt{I_1}I_2I_3}
	\end{align*}
	for some absolute constants $C_4,C_5>0$.
	
	Therefore, as long as $\lambda_{\min}\sqrt{I_1I_2I_3}\geq C_5'\sqrt{J_1R_1+J_3R_1R_3+R_1R_2R_3}\log^{3/2}I_1$ for a large enough absolute constant $C_5'>0$, we get with probability at least $1-16I_1^{-2}$ that
	\begin{align*}
		\|\tilde\bG^{(t)}_1&\tilde\bG^{(t)\top}_1-\bG_1\bG_1^{\top}\|_{\rm F}/I_1\\
		\leq& \frac{{\rm Err}_{t-1}}{2}+\frac{2\sqrt{R_1}J_1^{-\tau/2}+\sqrt{R_2} J_2^{-\tau/2}+\sqrt{R_3}J_3^{-\tau/2}}{2}+C_4'\frac{\sqrt{J_1R_1+R_1R_2R_3}\log^{2}I_1}{\lambda_{\min}\sqrt{I_1I_2I_3}}.
	\end{align*}
	In the same fashion, we can prove similar bounds of $\|\tilde\bG_m^{(t)}\tilde\bG_m^{(t)\top}-\bG_m\bG_m^{\top}\|_{\rm F}$ for all $m=1,2,3$. Therefore, with probability at least $1-48I_1^{-2}$,
	\begin{align}\label{eq:iter_cont}
		{\rm Err}_t\leq \frac{{\rm Err}_{t-1}}{2}+(\sqrt{R_1}J_1^{-\tau/2}+\sqrt{R_2}J_2^{-\tau/2}+\sqrt{R_3}J_3^{-\tau/2})+C_4'\frac{\sqrt{J_1R_1+R_1R_2R_3}\log^{2}I_1}{\lambda_{\min}\sqrt{I_1I_2I_3}},
	\end{align}
	which proves the first claim of Lemma~\ref{lem:init}. The same properties can be proved for all iterations and all hold on the same event.
	
	By the above contraction inequality in (\ref{eq:iter_cont}), after
	$$
	t_{\max}=O(\log(\lambda_{\min}\sqrt{I_1I_2I_3/J_1})+\tau\cdot \log (J_1)+1)
	$$
	iterations, we obtain
	$$
	{\rm Err}_{t_{\max}}=C_4''\frac{\sqrt{J_1R_1+R_1R_2R_3}\log^{2}I_1}{\lambda_{\min}\sqrt{I_1I_2I_3}}+\sqrt{R_1}J_1^{-\tau/2}+\sqrt{R_2}J_2^{-\tau/2}+\sqrt{R_3}J_3^{-\tau/2}
	$$
	which holds with probability at least $1-48I_1^{-2}$. The proof is concluded by noting that $J_1\asymp J_2\asymp J_3$ and $J_1\geq J_2\geq J_3$.
\end{proof}

In order to prove Theorem~\ref{thm:F}, we begin with proving the following result.
\begin{lemma}\label{lem:F}(Factor tensor)
	Suppose that  conditions of Lemma~\ref{lem:init} hold. Then, with probability at least $1-49I_1^{-2}$ that
	\begin{align*}
		\|\tilde{\calF}-\calF&\times_1\tilde\bO_1^{\top}\times_2 \tilde\bO_2^{\top}\times_3 \tilde \bO_3^{\top}\|_{\rm F}\\
		\leq&C_1\Big(\frac{\kappa_0\cdot\sqrt{J_1R_1+R_1R_2R_3}\log^{2}I_1}{\sqrt{I_1I_2I_3}}\Big)+2\kappa_0\lambda_{\min}\sqrt{R_1}J_1^{-\tau/2}.
	\end{align*}
	where $\tilde\bO_m$ is an orthogonal matrix which realizes the minimium $\min_{\bO} \|\tilde\bG_m-\bG_m\bO\|_{\rm F}$ and $C_1>0$ is an absolute constant.
\end{lemma}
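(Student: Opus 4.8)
The plan is to substitute the model $\calY = \calS + \calE$ with $\calS = \calF\times_1\bA_1\times_2\bA_2\times_3\bA_3$ and $\bA_m = \bG_m + \bGamma_m$ into the projection estimate $\tilde\calF = (I_1 I_2 I_3)^{-1}\calY\times_1\tilde\bG_1^\top\times_2\tilde\bG_2^\top\times_3\tilde\bG_3^\top$, and to split it into a signal contribution and a noise contribution. For the signal, the key simplification is that the columns of each $\tilde\bG_m$ lie in the range of $\bP_m$ while $\bP_m\bGamma_m=\bzero$ (as established in the proof of Lemma~\ref{lem:init}), so $\tilde\bG_m^\top\bGamma_m=\bzero$ and hence $\tilde\bG_m^\top\bA_m=\tilde\bG_m^\top\bG_m$. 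The signal part therefore equals $\calF\times_1\bM_1\times_2\bM_2\times_3\bM_3$ with $\bM_m:=\tilde\bG_m^\top\bG_m/I_m$, and the task reduces to comparing this with $\calF\times_1\tilde\bO_1^\top\times_2\tilde\bO_2^\top\times_3\tilde\bO_3^\top$.

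First I would control $\|\bM_m-\tilde\bO_m^\top\|_{\rm F}$. Writing $\tilde\bG_m=\bG_m\tilde\bO_m+\bE_m$, the standard relation between the Procrustes distance and the projection distance gives $\|\bE_m\|_{\rm F}/\sqrt{I_m}\le \tfrac{1}{\sqrt 2}\|\tilde\bG_m\tilde\bG_m^\top-\bG_m\bG_m^\top\|_{\rm F}/I_m =: \delta_m/\sqrt 2$, where $\delta_m$ is exactly the quantity bounded in Lemma~\ref{lem:init}. Since $\bG_m^\top\bG_m=I_m\bI_{R_m}$ one gets $\bM_m^\top=\tilde\bO_m+\bG_m^\top\bE_m/I_m$, whence $\|\bM_m-\tilde\bO_m^\top\|_{\rm F}\le\|\bG_m\|\,\|\bE_m\|_{\rm F}/I_m\lesssim\delta_m$. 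Expanding $\calF\times_1\bM_1\times_2\bM_2\times_3\bM_3-\calF\times_1\tilde\bO_1^\top\times_2\tilde\bO_2^\top\times_3\tilde\bO_3^\top$ as a telescoping sum of three single-mode replacements and bounding each summand by the contraction inequality $\|\calF\times_1\bN_1\times_2\bN_2\times_3\bN_3\|_{\rm F}\le\|\bN_1\|_{\rm F}\,\|\calM_1(\calF)\|\,\|\bN_2\|\,\|\bN_3\|$ (and its mode-$2$, mode-$3$ analogues), placing the Frobenius norm on the single replaced factor $\bN=\bM-\tilde\bO^\top$ and spectral norms on the remaining $\bM_j$ with $\|\bM_j\|\le 1+\delta_j=O(1)$, together with $\|\calM_m(\calF)\|\le\kappa_0\lambda_{\min}$, the signal error is $\lesssim\kappa_0\lambda_{\min}\max_m\delta_m$. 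Plugging in the Lemma~\ref{lem:init} bound for $\delta_m$ reproduces both the $\kappa_0\sqrt{(J_1R_1)\vee(R_1R_2R_3)\vee(R_1\alpha)}/\sqrt{I_1I_2I_3}$ and the $\kappa_0\lambda_{\min}\sqrt{R_1}\,J_1^{-\tau/2}$ terms of the claimed bound.

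The harder part is the noise term $(I_1I_2I_3)^{-1}\calE\times_1\tilde\bG_1^\top\times_2\tilde\bG_2^\top\times_3\tilde\bG_3^\top$, whose mode-$1$ matricization is $(I_1I_2I_3)^{-1}\tilde\bG_1^\top\calM_1(\calE)(\tilde\bG_2\otimes\tilde\bG_3)$. A naive Frobenius bound that only uses the sieve dimensions would contribute a spurious $\sqrt{J_2J_3}$ instead of the required $\sqrt{R_2R_3}$; the main obstacle is thus to exploit the closeness of $\tilde\bG_2,\tilde\bG_3$ to the deterministic rank-$R_m$ loadings $\bG_2,\bG_3$. I would bound the Frobenius norm by $\sqrt{R_1}$ times the spectral norm (the matrix is $R_1\times R_2R_3$ with $R_1\le R_2R_3$), use $\tilde\bG_1=\bP_1\tilde\bG_1$ to write $\|\tilde\bG_1^\top\calM_1(\calE)(\cdot)\|\le\sqrt{I_1}\,\|\bP_1\calM_1(\calE)(\tilde\bG_2\otimes\tilde\bG_3)\|$, and then reuse the projected-noise spectral estimate already derived in the power-iteration portion of Lemma~\ref{lem:init}: decomposing $\tilde\bG_2\otimes\tilde\bG_3$ via $\tilde\bG_m=\bG_m\tilde\bO_m+\bE_m$, the leading deterministic term $\bP_1\calM_1(\calE)(\bG_2\otimes\bG_3)$ has spectral norm $\lesssim\sqrt{I_2I_3}\sqrt{J_1\vee(R_2R_3)\vee\alpha}$ (by a covering argument as in Lemma~5 of \cite{zhang2018tensor}), while the cross terms involving $\bE_m$ are of smaller order under the stated SNR conditions.

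Assembling the pieces, the noise contribution is $\lesssim (I_1I_2I_3)^{-1}\sqrt{R_1I_1}\cdot\sqrt{I_2I_3}\sqrt{J_1\vee(R_2R_3)\vee\alpha}=\sqrt{(J_1R_1)\vee(R_1R_2R_3)\vee(R_1\alpha)}/\sqrt{I_1I_2I_3}$, which together with the signal bound yields the claim, all on the high-probability event of Lemma~\ref{lem:init} intersected with the event of the one additional noise concentration bound, at a total cost of $1-13e^{-c_0(J_1\vee\alpha)}$. I expect the only delicate bookkeeping to be tracking the $\kappa_0$ and $R_1$ powers through the telescoping and confirming that the $\bE_m$ cross terms in the noise estimate are genuinely dominated, both of which follow the template already set in the proof of Lemma~\ref{lem:init}.
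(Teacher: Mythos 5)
Your proposal is correct and follows essentially the same route as the paper's own proof: the same split of $\tilde\calF$ into signal and noise using $\tilde\bG_m^{\top}\bGamma_m=\bzero$, the same reduction of the signal error to $\max_m\|\tilde\bG_m^{\top}\bG_m/I_m-\tilde\bO_m^{\top}\|_{\rm F}$ (your Procrustes minimizer is exactly the polar factor of $\bG_m^{\top}\tilde\bG_m$ that the paper constructs) followed by a telescoping single-mode expansion, and the same $\sqrt{R_1}\times(\text{spectral norm})$ step for the noise. The one substep where you diverge is the spectral bound on $\calM_1\big(\calE\times_1\tilde\bG_1^{\top}\times_2\tilde\bG_2^{\top}\times_3\tilde\bG_3^{\top}\big)$: the paper writes $\tilde\bG_m=\bU_m\tilde\bB_m$ with $\bU_m$ the deterministic left singular vectors of $\bPhi_m(\bX_m)$ and takes a single uniform bound over the low-dimensional sets $\{\bB\in\RR^{J_m\times R_m}:\bB^{\top}\bB=I_m\bI_{R_m}\}$ via a packing argument, obtaining $O\big(\sqrt{I_1I_2I_3}\,\sqrt{J_1\vee(R_2R_3)\vee\alpha}\big)$ in one shot with no cross terms; you instead decompose $\tilde\bG_m=\bG_m\tilde\bO_m+\bE_m$ and recycle the cross-term estimates from the power-iteration step of Lemma~\ref{lem:init}. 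Your variant is workable but carries the extra obligation of verifying that $\big({\rm Err}_{t_{\max}}+\sqrt{R_1}J_1^{-\tau/2}\big)\sqrt{(J_1R_1)\vee\alpha}\lesssim\sqrt{J_1\vee(R_2R_3)\vee\alpha}$, which is the same domination check the paper performs inside Lemma~\ref{lem:init} under its SNR and $R_mJ_m^{-\tau}$ conditions and which the uniform-covering route sidesteps entirely; the probability accounting (the event of Lemma~\ref{lem:init} intersected with one additional noise concentration event) matches the paper's.
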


\begin{proof}[Proof of Lemma~\ref{lem:F}]
	Recall that $\tilde\calF=(I_1I_2I_3)^{-1}\cdot\calY\times_1 \tilde \bG_1^{\top}\times_2 \tilde\bG_2^{\top}\times_3 \tilde \bG_3^{\top}$ and so that
	\begin{align*}
		\tilde\calF=(I_1I_2I_3)^{-1}\cdot\calF\times_1(\tilde \bG_1^{\top}\bG_1)\times_2(\tilde \bG_2^{\top}\bG_2)\times_3(\tilde \bG_3^{\top}\bG_3)+(I_1I_2I_3)^{-1}\cdot\calE\times_1 \tilde\bG_1^{\top}\times_2 \tilde\bG_2^{\top}\times_3 \tilde\bG_3^{\top}
	\end{align*}
	where we used the fact $\tilde\bG_m^{\top} \bGamma_m={\bf 0}$ since the column space of $\tilde\bG_m$ is a subspace of the column space of $\bPhi_m(\bX_m)$. Recall that
	$$
	\bG_m^{\top}\big(\tilde\bG_m\tilde\bG_m^{\top}-\bG_m\bG_m^{\top}\big)\bG_m/I_m^2=\bG_m^{\top}\tilde\bG_m (\bG_m^{\top}\tilde\bG_m)^{\top}/I_m^2-\bI_{R_m}
	$$
	where $\bG_m^{\top}\tilde\bG_m$ is an $R_m\times R_m$ matrix. Then, by Lemma~\ref{lem:init}, with probability at least $1-48I_1^{-2}$ that
	$$
	\big\|\bG_m^{\top}\tilde\bG_m (\bG_m^{\top}\tilde\bG_m)^{\top}/I_m^2-\bI_{R_m} \big\|_{\rm F}\leq C_5\frac{\sqrt{J_1R_1+R_1R_2R_3}\log^{2}I_1}{\lambda_{\min}\sqrt{I_1I_2I_3}}+2\sqrt{R_1}J_1^{-\tau/2}.
	$$
	It implies that for all $m=1,2,3$, there exists an orthonormal matrix $\tilde \bO_m\in\OO^{R_m\times R_m}$ so that
	$$
	\|\tilde\bG_m^{\top}\bG_m/I_m-\tilde\bO_m^{\top}\|_{\rm F}=C_5\frac{\sqrt{J_1R_1+R_1R_2R_3}\log^{2}I_1}{\lambda_{\min}\sqrt{I_1I_2I_3}}+2\sqrt{R_1}J_1^{-\tau/2},
	$$
	which holds with the same probability. Therefore,
	\begin{align}\label{eq:tildeF-F-2terms}
		\tilde\calF-\calF\times_1\tilde\bO_1^{\top}\times_2 \tilde\bO_2^{\top}\times_3& \tilde \bO_3^{\top}=(I_1I_2I_3)^{-1}\cdot\calE\times_1 \tilde\bG_1^{\top}\times_2 \tilde\bG_2^{\top}\times_3 \tilde\bG_3^{\top}\notag\\
		+&\calF\big(\times_1(\tilde \bG_1^{\top}\bG_1/I_1)\times_2(\tilde \bG_2^{\top}\bG_2/I_2)\times_3(\tilde \bG_3^{\top}\bG_3/I_3)-\times_1 \tilde \bO_1^{\top}\times_2\tilde\bO_2^{\top}\times_3\tilde\bO_3^{\top}\big).
	\end{align}
	Observe that
	\begin{align*}
		\big\|\calF&\times_1(\tilde\bG_1^{\top}\bG_1/I_1-\tilde\bO_1^{\top})\times_2(\tilde \bG_2^{\top}\bG_2/I_2)\times_3(\tilde \bG_3^{\top}\bG_3/I_3)\big\|_{\rm F}\\
		&\leq \big\|(\tilde\bG_1^{\top}\bG_1/I_1-\tilde\bO_1^{\top})\calM_1(\calF)\big((\tilde \bG_2^{\top}\bG_2/I_2)\otimes(\tilde \bG_3^{\top}\bG_3/I_3)\big)\big\|_{\rm F}\\
		&\leq \|\calM_1(\calF)\|\cdot \|\tilde\bG_1^{\top}\bG_1/I_1-\tilde\bO_1^{\top}\|_{\rm F}\\
		&\leq C_5\frac{\kappa_0\cdot\sqrt{J_1R_1+R_1R_2R_3}\log^{2}I_1}{\sqrt{I_1I_2I_3}}+2\kappa_0\lambda_{\min}\sqrt{R_1}J_1^{-\tau/2}.
	\end{align*}
	As a result, we can show with probability at least $1-48I_1^{-2}$ that
	\begin{align}\label{eq:FtildeGG-O}
		\big\|\calF\big(&\times_1(\tilde \bG_1^{\top}\bG_1/I_1)\times_2(\tilde \bG_2^{\top}\bG_2/I_2)\times_3(\tilde \bG_3^{\top}\bG_3/I_3)-\times_1 \tilde \bO_1^{\top}\times_2\tilde\bO_2^{\top}\times_3\tilde\bO_3^{\top}\big)\big\|\notag\\
		\leq&C_5'\frac{\kappa_0\cdot\sqrt{J_1R_1+R_1R_2R_3}\log^{2}I_1}{\sqrt{I_1I_2I_3}}+6\kappa_0\lambda_{\min}\sqrt{R_1} J_1^{-\tau/2}.
	\end{align}
	Observe that the rank of $\calM_1\big(\calE\times_1 \tilde\bG_1^{\top}\times_2 \tilde\bG_2^{\top}\times_3 \tilde\bG_3^{\top}\big)$ is bounded by $R_1$. Similarly,
	\begin{align*}
		\big\|\calE\times_1 \tilde\bG_1^{\top}\times_2 \tilde\bG_2^{\top}\times_3 \tilde\bG_3^{\top}\big\|_{\rm F}&=\big\|\calM_1\big(\calE\times_1 \tilde\bG_1^{\top}\times_2 \tilde\bG_2^{\top}\times_3 \tilde\bG_3^{\top}\big) \big\|_{\rm F}\\
		\leq&\sqrt{R_1}\cdot \big\|\calM_1\big(\calE\times_1 \tilde\bG_1^{\top}\times_2 \tilde\bG_2^{\top}\times_3 \tilde\bG_3^{\top}\big) \big\|.
	\end{align*}
	\begin{lemma}\label{lem:EG1G2G3_tilde}
		Suppose that Assumption~\ref{assump:noise} holds and assume $J_1\asymp J_2\asymp J_3$ and $J_1\geq R_1\geq R_2\geq R_3$. There exist absolute constants $C_7>0$ so that,
		\begin{align*}
			\PP\Big(\big\|\calM_1\big(\calE\times_1 \tilde\bG_1^{\top}\times_2 \tilde\bG_2^{\top}\times_3 \tilde\bG_3^{\top}\big) \big\|/\sqrt{I_1I_2I_3}\geq C_7\sqrt{J_1R_1}\log^{3/2} I_1\Big)\leq I_1^{-2}.
		\end{align*}
	\end{lemma}
	By Lemma~\ref{lem:EG1G2G3_tilde}, with probability at least $1-I_1^{-2}$,
	\begin{align}\label{eq:MG1G2G3}
		\big\|\calM_1\big(\calE\times_1 \tilde\bG_1^{\top}\times_2 \tilde\bG_2^{\top}\times_3 \tilde\bG_3^{\top}\big) \big\|/\sqrt{I_1I_2I_3}\leq  C_1'\sqrt{J_1R_1}\log^{3/2}I_1.
	\end{align}
	for some absolute constant $C_1'>0$. 
	
	Putting together (\ref{eq:tildeF-F-2terms}), (\ref{eq:FtildeGG-O}) and (\ref{eq:MG1G2G3}), we conclude that with probability at least $1-49I_1^{-2}$ that
	\begin{align*}
		\|\tilde\calF-\calF\times_1\tilde\bO_1^{\top}&\times_2 \tilde\bO_2^{\top}\times_3 \tilde \bO_3^{\top}\|_{\rm F}\\
		\leq&C_8\frac{\kappa_0\cdot\sqrt{J_1R_1+R_1R_2R_3}\log^{2}I_1}{\sqrt{I_1I_2I_3}}+6\kappa_0\lambda_{\min}\sqrt{R_1}J_1^{-\tau/2},
	\end{align*}
	which proves Lemma~\ref{lem:F}.
\end{proof}

\begin{proof}[Proof of Theorem~\ref{thm:F}]
	Let $\hat\bO_m$ denote the left singular vectors of $\calM_m(\tilde\calF)$ for all $m\in[M]$, and $\tilde\bD_m$ denote the singular values of $\calM_m(\tilde\calF)$.  Similarly, denote $\bD_m$ the singular values of $\calM_m(\calF)$. Lemma~\ref{lem:F} implies, with probability at least $1-49I_1^{-2}$, that
	$$
	\|\bD_m-\tilde\bD_m\|\leq C_8\frac{\kappa_0\cdot\sqrt{J_1R_1+R_1R_2R_3}\log^{2}I_1}{\sqrt{I_1I_2I_3}}+6\kappa_0\lambda_{\min}\sqrt{R_1}J_1^{-\tau/2}
	$$
	and
	\begin{align*}
		\|\tilde\bO_1\calM_1(\tilde\calF)&\calM_1(\tilde\calF)^{\top}\tilde\bO_1^{\top}-\calM_1(\calF)\calM_1(\calF)^{\top}\|\\
		=&\|\calM_1(\tilde\calF)\calM_1(\tilde\calF)^{\top}-\tilde\bO_1^{\top}\calM_1(\calF)\calM_1(\calF)^{\top}\tilde\bO_1\|.
	\end{align*}
	Denote $\tilde\bH_1=\tilde\bO_1\hat\bO_1$ so that $\tilde\bO_1\calM_1(\tilde\calF)\calM_1(\tilde\calF)^{\top}\tilde\bO_1^{\top}-\calM_1(\calF)\calM_1^{\top}(\calF)=\tilde\bH_1\tilde\bD_1^2\tilde\bH_1^{\top}-\bD_1^2$ where we used Assumption~\ref{assum:tfm-ic-1}.
	Denote $\varepsilon_{\alpha}=C_8\kappa_0^2\lambda_{\min}\sqrt{J_1R_1+R_1R_2R_3}\log^{2}(I_1)/\sqrt{I_1I_2I_3}+6\kappa_0^2\lambda^2_{\min}\sqrt{R_1}J_1^{-\tau/2}$.
	Then, by Lemma~\ref{lem:F}, we obtain with probability at least $1-49I_1^{-2}$ that
	\begin{align}\label{eq:tildeD-D}
		\|\tilde\bH_1\tilde\bD_1^2\tilde\bH_1^{\top}-\bD_1^2\|_{\rm F}\leq \varepsilon_{\alpha}.
	\end{align}
	Note that for each $j=1,\cdots,R_1$, we obtain $\sigma_j(\bD_1^2)-\sigma_{j+1}(\bD_1^2)\geq \lambda_{\min}\cdot {\rm Egap}(\calF)$. Under the conditions of Theorem~\ref{thm:F}, it follows with probability at least $1-49I_1^{-2}$ that
	$$
	\min_{1\leq j\leq R_1}\sigma_j(\bD_1^2)-\sigma_{j+1}(\bD_1^2)\geq C_1\kappa_0^2\sqrt{R_1}\cdot \varepsilon_{\alpha}
	$$
	for a large enough constant $C_1>1$ implying that the order of eigenvalues of $\bD_1^2$ will be maintained in view of (\ref{eq:tildeD-D}). By applying the Davis-Kahan theorem to each isolated eigenvector of $\tilde\bH_1\tilde\bD_1^2\tilde\bH_1$, we can conclude that $\|\tilde\bh_j\tilde\bh_j^{\top}-\be_j\be_j^{\top}\|\leq 1/(2\kappa_0^2\sqrt{R_1})$ which holds for all $j=1,\cdots,R_1$ where $\tilde\bh_j$ denotes the $j$-th column of $\tilde\bH_1$ and $\be_j$ denotes the $j$-th canonical basis vector. Indeed, it holds as long as the ${\rm Egap}(\calF)$ is large enough as stated in the conditions of Theorem~\ref{thm:F}.
	It implies that there exists a $\tilde{s}_j\in\{\pm1\}$ so that $\|\tilde \bh_j \tilde{s}_j-\be_j\|\leq 1/\sqrt{2\kappa_0^4R_1}$ for each $j$. Denote $\tilde{\bS}_1={\rm diag}(\tilde{s}_1,\cdots,\tilde{s}_{R_1})$ so that
	$$
	\|\tilde\bH_1\tilde{\bS}_1-\bI_{R_1}\|_{\rm F}\leq \Big(\sum_{j=1}^{R_1}\|\tilde\bh_j\tilde{s}_j-\be_j\|^2\Big)^{1/2}\leq 1/(\sqrt{2}\kappa_0^2).
	$$
	Note that, on the same event, $\|\tilde\bH_1\tilde\bD_1^2\tilde\bH_1^{\top}-\tilde\bD_1^2\|_{\rm F}\leq \varepsilon_{\alpha}+\|\tilde\bD^2_1-\bD_1^2\|_{\rm F}\leq 2\varepsilon_{\alpha}$ where the last bound is due to Lemma~\ref{lem:F}. Since $\tilde\bD_1$ is a diagonal matrix, $\|\tilde\bH_1\tilde\bD_1^2\tilde\bH_1^{\top}-\tilde\bS_1\tilde\bD_1^2\tilde\bS_1\|_{\rm F}\leq 2\varepsilon_{\alpha}$. Write
	\begin{align*}
		\|\tilde\bH_1\tilde\bD_1^2\tilde\bH_1^{\top}-\tilde\bS_1\tilde\bD_1^2\tilde\bS_1^{\top}\|_{\rm F}\geq \|(\tilde\bH_1-\tilde\bS_1)\tilde\bD_1^2\tilde\bS_1^{\top}+\tilde\bS_1\tilde\bD_1^2(\tilde\bH_1-\tilde\bS_1)^{\top}\|_{\rm F}\\
		-\|(\tilde\bH_1-\tilde\bS_1)\tilde\bD_1^2(\tilde\bH_1-\tilde\bS_1)^{\top}\|_{\rm F}\geq 2\|\tilde\bH_1-\tilde\bS_1\|_{\rm F}\sigma_{\min}(\tilde\bD_1^2)-O\big(\|\tilde\bH_1-\tilde\bS_1\|_{\rm F}^2\sigma_{\max}(\tilde\bD_1^2)\big)\\
		\geq 2\|\tilde\bH_1-\tilde\bS_1\|_{\rm F}\sigma_{\min}(\tilde\bD_1^2)-O\big(\kappa_0^2\|\tilde\bH_1-\tilde\bS_1\|_{\rm F}^2\sigma_{\min}(\tilde\bD_1^2)\big)
	\end{align*}
	where the last inequality holds with probability at least $1-49I_1^{-2}$ as long as $\|\bD_1-\tilde\bD_1\|\leq \lambda_{\min}/4$ which is guaranteed by the lower bound on $\lambda_{\min}$. It implies that
	$$
	\|\tilde\bH_1\tilde\bD_1^2\tilde\bH_1^{\top}-\tilde\bS_1\tilde\bD_1^2\tilde\bS_1^{\top}\|_{\rm F}\geq (2-\sqrt{2})\|\tilde\bH_1-\tilde\bS_1\|_{\rm F}\sigma_{\min}(\tilde\bD_1^2)\geq \|\tilde\bH_1-\tilde\bS_1\|_{\rm F}\lambda_{\min}^2/5.
	$$
	Therefore, we conclude with probability at least $1-49I_1^{-2}$ that
	$$
	\|\tilde\bH_1-\tilde\bS_1\|_{\rm F}\leq 10\varepsilon_{\alpha}/\lambda_{\min}^2\leq C_7\frac{\kappa_0^2\sqrt{JR_1+R_1R_2R_3}\log^{2}I_1}{\lambda_{\min}\sqrt{I_1I_2I_3}}+C_8\kappa_0^2\sqrt{R_1}J_1^{-\tau/2}.
	$$
	As a result, $\hat\bG_1=\tilde\bG_1\hat\bO_1$ and then with probability at least $1-49I_1^{-2}$,
	\begin{align*}
		\|\hat\bG_1-\bG_1\tilde\bS_1\|_{\rm F}/\sqrt{I_1}\leq& \|\hat\bG_1-\bG_1\tilde\bH_1\|_{\rm F}/\sqrt{I_1}+\|\tilde\bH_1-\tilde\bS_1\|_{\rm F}\\
		=& \|\hat\bG_1-\bG_1\tilde\bO_1\hat\bO_1\|_{\rm F}/\sqrt{I_1}+\|\tilde\bH_1-\tilde\bS_1\|_{\rm F}\\
		=&\|\tilde\bG_1-\bG_1\tilde\bO_1\|_{\rm F}/\sqrt{I_1}+\|\tilde\bH_1-\tilde\bS_1\|_{\rm F}\\
		\leq&C_7\frac{\kappa_0^2\sqrt{JR_1+R_1R_2R_3}\log^{2}I_1}{\lambda_{\min}\sqrt{I_1I_2I_3}}+C_8\kappa_0^2\sqrt{R_1}J_1^{-\tau/2}
	\end{align*}
	where the last inequality is due to that $\tilde\bO_1$ realizes the minimum of $\min_{\bO}\|\tilde\bG_1-\bG_1\bO\|_{\rm F}$. Clearly, the bounds can be proved identically for all $\|\hat\bG_m-\bG_m\bS_m\|_{\rm F}/\sqrt{I_m}$.
	
	At last, recall that $\hat\calF=\tilde\calF\times_1 \hat\bO_1^{\top}\times_2 \hat\bO_2^{\top}\times_3 \hat\bO_3^{\top}$. We conclude that with probability at least $1-49I_1^{-2}$,
	\begin{align*}
		\|\hat\calF-\calF\times_1\bS_1&\times_2\bS_2\times_3\bS_3\|_{\rm F}=\|\hat\calF-\calF\times_1\tilde\bH_1^{\top}\times_2\tilde\bH_2^{\top}\times_3\tilde\bH_3^{\top}\|_{\rm F}+O(\kappa_0\varepsilon_{\alpha}/\lambda_{\min})\\
		=&\|\tilde\calF\times_1 \hat\bO_1^{\top}\times_2 \hat\bO_2^{\top}\times_3 \hat\bO_3^{\top}-\calF\times_1\tilde\bH_1^{\top}\times_2\tilde\bH_2^{\top}\times_3\tilde\bH_3^{\top}\|_{\rm F}+O(\kappa_0\varepsilon_{\alpha}/\lambda_{\min})\\
		=&\|\tilde\calF-\calF\times_1(\hat\bO_1\tilde\bH_1^{\top})\times_2(\hat\bO_2\tilde\bH_2^{\top})\times_3(\hat\bO_3\tilde\bH_3^{\top})\|_{\rm F}+O(\kappa_0\varepsilon_{\alpha}/\lambda_{\min})\\
		=&\|\tilde\calF-\calF\times_1\tilde\bO_1^{\top}\times_2\tilde\bO_2^{\top}\times_3\tilde\bO_3^{\top}\|_{\rm F}+O(\kappa_0\varepsilon_{\alpha}/\lambda_{\min})=O((\kappa_0\varepsilon_{\alpha}/\lambda_{\min})\\
		=&O\Big(\frac{\kappa_0^3\sqrt{JR_1+R_1R_2R_3}\log^2I_1}{\sqrt{I_1I_2I_3}}+\kappa_0^3\lambda_{\min}\sqrt{R_1}J_1^{-\tau/2}\Big),
	\end{align*}
	which proves Theorem~\ref{thm:F}.
\end{proof}

\begin{proof}[Proof of Theorem~\ref{thm:Gamma}]
	Without loss of generality, we prove the bound for $m=1$.
	Recall by definition that
	$$
	\hat\bGamma_1=\bP_1^{\perp}\hat\bA_1=\bP_{1}^{\perp}\calM_1(\calY\times_2\bP_2\times_2\bP_3)\calM_1(\hat{\calF}\times_2\hat\bG_2\times_3\hat\bG_3)^{\top}\big(\calM_1(\hat\calF)\calM_1^{\top}(\hat\calF)\big)^{-1}/(I_2I_3).
	$$
	Since the column space of $\hat\bG_m$ is a subspace of the column space of $\bP_m$ so that $\hat\bG_m^{\top}\bGamma_m={\bf 0}$, we can write
	\begin{align*}
		\calM_1(\calY\times_2\bP_2\times_3&\bP_3)\calM_1(\hat{\calF}\times_2\hat\bG_2\times_3\hat\bG_3)^{\top}=\calM_1(\calY)(\bP_2\otimes \bP_3)(\hat\bG_2\otimes\hat\bG_3)\calM_1(\hat\calF)^{\top}\\
		&=\calM_1(\calY)(\hat\bG_2\otimes\hat\bG_3)\calM_1(\hat\calF)^{\top}\\
		=&(\bG_1+\bGamma_1)\calM_1(\calF)\big((\bG_2^{\top}\hat\bG_2)\otimes (\bG_3^{\top}\hat\bG_3)\big)\calM_1(\hat\calF)^{\top}+\calM_1(\calE)(\hat\bG_2\otimes\hat\bG_3)\calM_1(\hat\calF)^{\top}
	\end{align*}
	and as a result
	\begin{align*}
		\hat\bGamma_1
		=&\bP_{1}^{\perp}\bGamma_1\calM_1(\calF)\big((\bG_2^{\top}\hat\bG_2)\otimes (\bG_3^{\top}\hat\bG_3)\big)\calM_1(\hat\calF)^{\top}\big(\calM_1(\hat\calF)\calM_1^{\top}(\hat\calF)\big)^{-1}/(I_2I_3)\\
		&+\bP_{1}^{\perp}\calM_1(\calE)(\hat\bG_2\otimes\hat\bG_3)\calM_1(\hat\calF)^{\top}\big(\calM_1(\hat\calF)\calM_1^{\top}(\hat\calF)\big)^{-1}/(I_2I_3).
	\end{align*}
	Under the conditions of Lemma~\ref{lem:init} and by Theorem~\ref{thm:F}, we conclude with probability at least $1-49I_1^{-2}$ that $\sigma_{\min}\big(\calM_1(\hat\calF)\big)\geq \lambda_{\min}/2$. Now, it suffices to bound the spectral norm $\bP_{1}^{\perp}\calM_1(\calE)(\hat\bG_2\otimes\hat\bG_3)\calM_1(\hat\calF)^{\top}\big(\calM_1(\hat\calF)\calM_1^{\top}(\hat\calF)\big)^{-1}$. Since the column spaces of $\hat\bG_2$ and $\hat \bG_3$ are the subspaces of column spaces of $\bPhi_2(\bX_2)$ and $\bPhi_3(\bX_3)$, respectively, we have
	\begin{align*}
		\big\| &\bP_{1}^{\perp}\calM_1(\calE)(\hat\bG_2\otimes\hat\bG_3)\calM_1(\hat\calF)^{\top}\big(\calM_1(\hat\calF)\calM_1^{\top}(\hat\calF)\big)^{-1}/(I_2I_3)\big\|\\
		\leq&\big\| \calM_1(\hat\calF)^{\top}\big(\calM_1(\hat\calF)\calM_1^{\top}(\hat\calF)\big)^{-1}\big\|/(I_2I_3)\cdot\sup_{\substack{\bA_2\in\mathfrak{B}(J_2,R_2), \bA_3\in\mathfrak{B}(J_3,R_3)\\ \bB\in\mathfrak{B}(R_2R_3,R_1)}}\|\bP_1^{\perp}\calM_1(\calE)\big((\bU_2\bA_2)\otimes(\bU_3\bA_3)\big)\bB\|
	\end{align*}
	where $\bU_m$ are the left singular vectors of $\bPhi_m(\bX_m)$ and $\mathfrak{B}(d_1,d_2)=\{\bB\in\RR^{d_1\times d_2}: \|\bB\|\leq 1\}$. The following lemma is needed whose proof is reproducible by the proof of Lemma~\ref{lem:EG1G2G3_tilde}.
	
	\begin{lemma}\label{lem:P1E1supA}
		Suppose that Assumption~\ref{assump:noise} holds and assume $J_1\asymp J_2\asymp J_3$ and $J_1\geq R_1\geq R_2\geq R_3$. There exist an absolute constant $C_9>0$ so that with probability at least $1-I_1^{-2}$,
		$$
		\sup_{\substack{\bA_2\in\mathfrak{B}(J_2,R_2), \bA_3\in\mathfrak{B}(J_3,R_3)\\ \bB\in\mathfrak{B}(R_2R_3,R_1)}}\|\bP_1^{\perp}\calM_1(\calE)\big((\bU_2\bA_2)\otimes(\bU_3\bA_3)\big)\bB\|\leq C_9\sqrt{I_1+J_1R_1+R_1R_2R_3}\log^{3/2} I_1.
		$$
	\end{lemma}
	By Lemma~\ref{lem:P1E1supA}, we get with probability at least $1-I_1^{-2}$ that
	\begin{align*}
		\Big\|\bP_{1}^{\perp}&\calM_1(\calE)(\hat\bG_2\otimes\hat\bG_3)\calM_1(\hat\calF)^{\top}\big(\calM_1(\hat\calF)\calM_1^{\top}(\hat\calF)\big)^{-1}/(I_2I_3) \Big\|_{\rm F}\\
		&\leq C_9'\frac{\sqrt{I_1R_1+J_1R_1^2+R_1^2R_2R_3}\log^{3/2} I_1}{\lambda_{\min}\sqrt{I_2I_3}}
	\end{align*}
	where we used the fact $\big\|\calM_1(\hat\calF)^{\top}\big(\calM_1(\hat\calF)\calM_1^{\top}(\hat\calF)\big)^{-1}\big\|\leq C_1'\lambda_{\min}^{-1}$ by Theorem~\ref{thm:F} and conditions of Lemma~\ref{lem:init}.
	
	Since $\bP_{1}^{\perp}\bGamma_1=\bGamma_1$, we get
	\begin{align*}
		\bP_{1}^{\perp}\bGamma_1&\calM_1(\calF)\big((\bG_2^{\top}\hat\bG_2)\otimes (\bG_3^{\top}\hat\bG_3)\big)\calM_1(\hat\calF)^{\top}\big(\calM_1(\hat\calF)\calM_1^{\top}(\hat\calF)\big)^{-1}/(I_2I_3)\\
		=&\bGamma_1\calM_1(\calF)\big((\bG_2^{\top}\hat\bG_2)\otimes (\bG_3^{\top}\hat\bG_3)\big)\calM_1(\hat\calF)^{\top}\big(\calM_1(\hat\calF)\calM_1^{\top}(\hat\calF)\big)^{-1}/(I_2I_3)\\
		=&\bGamma_1\calM_1(\calF)\big(\bS_2\otimes \bS_3\big)\calM_1(\hat\calF)^{\top}\big(\calM_1(\hat\calF)\calM_1^{\top}(\hat\calF)\big)^{-1}\\
		&+O\Big(\kappa_0\|\bGamma_1\|\cdot (\|\hat\bG_2^{\top}\bG_2/I_2-\bS_2\|_{\rm F}+\|\hat\bG_3^{\top}\bG_3/I_3-\bS_3\|_{\rm F})\Big)
	\end{align*}
	where the last term is bounded in terms of Frobenius norm and $\bS_2, \bS_3$ are defined as in Theorem~\ref{thm:F}. Meanwhile,
	$$
	\big\|\bS_1^{\top}\calM_1(\hat\calF)-\calM_1(\calF)\big(\bS_2^{\top}\otimes\bS_3^{\top}\big)\big\|_{\rm F}\leq \|\hat\calF-\calF\times_1\bS_1\times_2\bS_2\times_3\bS_3\|_{\rm F}.
	$$
	Therefore, by Theorem~\ref{thm:F}, with probability at least $1-49I_1^{-2}$ that
	\begin{align*}
		\Big\|\bGamma_1\calM_1(\calF)\big(\bS_2^{\top}\otimes \bS_3^{\top}\big)\calM_1(\hat\calF)^{\top}\big(\calM_1(\hat\calF)\calM_1^{\top}(\hat\calF)\big)^{-1}-\bGamma_1\bS_1^{\top}\Big\|_{\rm F} \\
		=O\big(\lambda_{\min}^{-1}\|\bGamma_1\|\cdot \|\hat\calF-\calF\times_1\bS_1\times_2\bS_2\times_3\bS_3\|_{\rm F}\big)\\
		=O\Big(\|\bGamma_1\|\cdot \frac{\kappa_0^3\sqrt{J_1R_1+R_1R_2R_3}\log^{2}I_1}{\lambda_{\min}\sqrt{I_1I_2I_3}}\Big)+O\big(\|\bGamma_1\|\cdot \kappa_0^3\sqrt{R_1}J_1^{-\tau/2}\big).
	\end{align*}
	Finally, we get with probability at least $1-50I_1^{-2}$ that
	\begin{align*}
		\|\hat\bGamma_1&-\bGamma_1\bS_1^{\top}\|_{\rm F}\\
		=&O\bigg(\|\bGamma_1\|\cdot\Big( \frac{\kappa_0^3\sqrt{J_1R_1+R_1R_2R_3}\log^{2}I_1}{\lambda_{\min}\sqrt{I_1I_2I_3}}+\kappa_0^3\sqrt{R_1}J_1^{-\tau/2}\Big)\bigg)
		+O\Big(\frac{\sqrt{R_1I_1+J_1R_1^2+R_1^2R_2R_3}\log^{3/2} I_1}{\lambda_{\min}\sqrt{I_2I_3}}\Big)
	\end{align*}
	which concludes the proof of Theorem~\ref{thm:Gamma} in view of $J_m\leq I_m$.
\end{proof}

\section{Number of Basis Functions} \label{append:n-basis}

Determination of the number of basis functions is an important task in non-parametric and semi-parametric estimations. 
It is more challenging in the STEFA model. 
According to our analysis in ``Effect of the number of fitting basis'' in the simulation section, 
the interactions between the true number of basis, the working number of basis, the signal-to-noise ratio, and the relative mean squared errors is not straightforward. 
Specifically, Table~\ref{tabl:J-effect} shows that increasing the sieve order $J$ does not always improve the performance and $J=16$ does not achieve the best performance among all choices of $J$, even though the data is simulated with order 16. 

To start a formal investigation of this challenging problem, we can first take the perspective of a regression problem: 
$$\calM_m(\calY) = \bPhi_m\bB_m + \bE_m,$$
where $\bPhi_m$ is an ensemble of basis functions, $\bB_m$ is the coefficients and $\bE_m$ is the residual. 
An potential data-driven way to determine the sieve degree is to construct an F-test based on the statistics $(\|P_{\Phi_m}\calM_m(\calY)\|_F^2 - \|P_{\Phi_m'}\calM_m(\calY)\|_F^2)/\|P^\perp_{\Phi_m}\calM_m(\calY)\|_F^2$ when comparing two choices of sieve degrees (corresponding to $\Phi_m$ and the reduced one $\Phi'_m$). However, strictly speaking, the residual $\bE_m$ is not of multivariate Gaussian and the coefficient matrix $\bB_m$ is restricted to a certain low rank structure due to the other modes in tensor $\calY$. 
The proper test for this sieve determination needs further investigation and is beyond the current main streamline of this paper. 

In the simulation and real data analysis sections, we choose the sieve degree $J$ in an ad-hoc way in the simulation. 
For instance the degree used in polynomial basis and B-spline basis are chosen to accommodate one's expectation on the smoothness of the function. 
The impact of such ad-hoc choices of sieve degree $J$ was investigated in Table~3 in the simulation section, where an obvious bias-variance trade-off was observed. 
In real data analyses, we choose the $J$ that minimizes the relative mean squared errors. 

Here, we present some additional empirical results of selecting the number of basis through relative mean squared errors (ReMSE). 
In this simulation, we consider a three-way tensor with fixed dimensions $I_1=I_2=I_3=100$, whose signal part can be decomposed to a Tucker decomposition with rank $R=(3, 3, 3)$.  
We fix the signal-to-noise ratio $\alpha=1.5$, and simulate the parametric part of loading within the manifold space of Legendre function of a two-dimensional $\bX_m, m=1,2,3$. 
The magnitude of $\bGamma_m$ is controlled by $\mu$ as in Section 5. 
We consider three different magnitudes of $\bGamma_m$'s and four different numbers of true basis $J$. 
For each combination of $(\mu, J)$, we simulate for $100$ times and report the average number of selected basis in Table~\ref{tabl:order-selection} for four different methods, which minimize 
(a) in-sample ${\rm ReMSE}_{\calY}$;
(b) in-sample ${\rm ReMSE}_{\calS}$;
(c) out-of-sample ${\rm ReMSE}_{\calY}$; and 
(d) out-of-sample ${\rm ReMSE}_{\calS}$, respectively. 
We note that 
the ReMSE with respect to signal ${\rm ReMSE}_{\calS}$ is only available in simulation environment and the ReMSE with respect to observed tensors ${\rm ReMSE}_{\calY}$ is more practical in real data analysis. 

Table~\ref{tabl:order-selection} shows that selecting number of basis by minimizing in-sample ReMSE usually leads to an over-estimation of $J$. 
However, selecting number of basis by minimizing out-of-sample ReMSE usually produces more accurate results. 
Comparing the last two columns, we notice that, as long as we use the out-of-sample ReMSE, it does not really matter whether we use ReMSE with respect to observed noisy $\calY$ or the true signal $\calS$ for all different combinations of magnitudes of $\bGamma_m$ and number of true basis $J$. 
This observation provides empirical support to using out-of-sample ${\rm ReMSE}_{\calY}$ to select $J$ in real applications where ${\rm ReMSE}_{\calS}$ can not be calculated. 
Moreover, by comparing estimated $\hat J$ by minimizing out-of-sample ReMSE across different true $(\mu, J)$ combinations, we observe that out-of-sample $\hat J$ tends to give an under-estimation of the true number of basis for the purpose of robustness when $J$ is large.  

\begin{table}[htpb!]
	\centering
		\begin{tabular}{cc|cc|cc}
			\hline
			\multicolumn{2}{c|}{Truth} &  \multicolumn{2}{c|}{Average $\hat J$ by minimizing In-Sample} &  \multicolumn{2}{c}{Average $\hat J$ by minimizing Out-of-Sample}  \\
			\hline
			$\mu$ & $J$ & ${\rm ReMSE_{\calY}}$ & ${\rm ReMSE_{\calS}}$  & ${\rm ReMSE_{\calY}}$  & ${\rm ReMSE_{\calS}}$  \\
			\hline 
			0.1 & 2 & 3.97 & 3.31 & 2.00  &2.00 \\
			0.1 & 4 & 8.00  & 6.52 & 4.00  & 4.00 \\
			0.1 & 8 & 16.00 &  13.70 &  8.02 & 8.02\\
			0.1 & 16 & 32.00 &  28.32 & 11.38 & 11.38\\
			\hline
			0.3 & 2 & 3.97 & 3.76 & 2.00  & 2.00\\
			0.3 & 4 & 7.99 & 7.62 & 4.00  & 4.00\\
			0.3 & 8 & 16.00 & 15.60  & 7.81 & 7.81\\
			0.3& 16 & 32.00 & 31.60  & 8.35 & 8.34\\
			\hline
			0.5 & 2 & 3.99 & 3.96 & 2.00  & 2.00\\
			0.5&4 & 8.00  &7.90 &  4.00  &4.00\\
			0.5 & 8 & 15.99& 15.93 & 6.79 & 6.78\\
			0.5&16& 32.00  & 31.86 & 6.88 & 6.88\\
			\hline
		\end{tabular}
	\caption{Average number of $\hat J$ of selected basis by the four methods over 100 repetitions.}
	\label{tabl:order-selection}
\end{table}

\section{Kernel Smoothing with Tensor Factor Model}

In this section, we derive the kernel smoothing formula \eqref{eqn:kernel-smoothing} under the vanilla tensor factor model.
Under this setting, the relevant covariates $\bX_1$ is still available for the 1-st mode and we would like to predict a new tensor $\calY^{new}\in\RR^{I_1^{new}\times I_2\times I_3}$ with new covariates $\bX_1^{new}\in\RR^{I_1^{new}\times D_1}$.
However, we do not use the STEFA model to incorporate $\bX_1$ in the model.
Instead, we use an algorithm for solving noisy Tucker decomposition \eqref{eqn:tensor-factor-model} and obtain an estimator of the signal part $\hat\calS =  \hat\calF \times_1 \hat\bA_1 \times_2 \hat\bA_2 \times_3 \hat\bA_3$.
The informative covariates $\bX_1$ and $\bX_1^{new}$ are used non-parametrically.

Recall that we defined the kernel weight matrix $\bW \in \RR^{I_1^{new} \times I_1}$ with entry
\[
w_{ij} = \frac{ K_h(dist(\bx_{1,i\cdot}^{new},\bx_{1,j\cdot})) }{ \sum_{i = 1}^{I_1} K_h(dist(\bx_{1,i\cdot}^{new},\bx_{1,j\cdot})) }, \quad i \in [I_1^{new}] \text{ and } j \in [I_1].
\]
where $K_h(\cdot)$ is the kernel function, $dist(\cdot, \cdot)$ is a pre-defined distance function such as the Euclidean distance, and $\bx_{1,i\cdot}$ is the $i$-th row of $\bX_1$.


For each row of $\bX_1^{new}$, we will predict a tensor slice $\bY_i^{new}\in\RR^{I_2\times I_3}$.
Let $\by_i^{new} = \vec(\bY_i^{new})$,  $\bY \defeq \calM_1(\calY)^\top = \brackets{\by_1 \cdots \by_{I_1}}$, consisting of the signal part $\bS \defeq \calM_1(\calS)^\top = \brackets{\bs_1 \cdots \bs_{I_1}}$ and the noise part $\bE \defeq \calM_1(\calE)^\top = \brackets{\be_1 \cdots \be_{I_1}}$.
Define $\bSigma_y \defeq \E{\bY^\top\bY}$ and $\bSigma^{new}_i \defeq \E{\bY^\top\by^{new}_i}$.
The best linear predictor for $\hat\by^{new}_i$ based on $\bY$ is
\begin{equation} \label{eqn:hat-y-0-ls}
	\hat\by^{new}_i = \bY\cdot\bSigma_y^{-1} \bSigma^{new}_i.
\end{equation}
With knowledge of covariates $\bX_1$, it is possible to estimate $\bSigma_y^{-1}$ and $\bSigma^{new}_i$ from $\bX_1$ and $\bx^{new}_{i\cdot}$.
However, in practice it involves inverting a $I_1\times I_1$ matrix $\bSigma_y$ which may be computational costly when $I_1$ is large.
The computational burden can be relieved by taking advantage of the Tucker low-rank structure.

To estimate $\bSigma^{new}_i$, we note that $\bSigma^{new}_i = \E{(I_2I_3)^{-1}\bY^\top\bs^{new}_i}$ where $\bs^{new}_i$ is the signal part of $\by^{new}_i$.
Thus, it can be estimated by $\hat\bSigma^{new}_i = (I_2I_3)^{-1}\bY^\top\hat\bs^{new}_i$.
We use kernel predictors for $\hat\bs^{new}_i$, that is,
\begin{equation} \label{eqn:s0-est}
	\hat\bs^{new}_i = \frac{\sum_{j = 1}^{I_1}\hat\bs_{j} K_h(dist(\bx_{1,i\cdot}^{new},\bx_{1,j\cdot}))}{\sum_{i = 1}^{I_1} K_h(dist(\bx_{1,i\cdot}^{new},\bx_{1,j\cdot}))} = \sum_{j = 1}^{I_1} w_{ij}\hat\bs_{j}.
\end{equation}
With careful calculation, we have a simpler expression for $\hat\by^{new}_i$.
First, we have $\hat\bs^{new}_i=\hat\bS\bw_{i\cdot}$ and
\begin{equation}  \label{eqn:y-0-hat-simple}
	\hat\by^{new}_i = \bY\hat\bSigma_y^{-1}\hat\bSigma^{new}_i =
	\bY\hat\bSigma_y^{-1} \cdot \bY^\top\hat\bS\bw_{i\cdot}
	= \bY\hat\bSigma_y^{-1}\bY^\top\bY\hat\bA_1\hat\bA_1^\top\bw_{i\cdot} = \bY\hat\bA_1\hat\bA_1^\top\bw = \hat\bS\bw_{i\cdot}.
\end{equation}
Equation \eqref{eqn:y-0-hat-simple} shows that, under the tensor factor model, we do not need to actually calculate $\hat\bSigma_y^{-1}$ to obtain the best linear predictor \eqref{eqn:hat-y-0-ls}.
Kernel smoothing formula \eqref{eqn:kernel-smoothing} is obtained by applying \eqref{eqn:y-0-hat-simple} to each $i$-th row of $\bX^{new}_1$ and stacking the resulting tensor slices $\hat\bY^{new}_i$ along the first mode for $i \in I^{new}_1$.

\section{More Simulation Results}  
\subsection{Inequal Dimensions}
In this section, we consider the setting where tensor $\calY$ has different dimensions, that is, $I_1, I_2, I_3$ are not equal.
We fix $R=3$, $I_1 = 100$ but vary $\alpha$ and $I_2, I_3 \ge I_1$.
The ReMSE of estimating the loading matrices $\bA_m$ and the tensor $\cal Y$ are reported in Table \ref{tabl:unbalanced}.

Although the dimensions for the three modes are artificially designed to be different in this simulation, no significant difference between $\ell_2(\hat\bA_m)$, $m\in[3]$ is observed. The error in estimating the loading matrices of the three modes appears to be symmetric. With a fixed signal-to-noise ratio coefficient $\alpha$ and a fixed $I_{min}=I_1=100$, the performance of both projected Tucker and vanilla Tucker decomposition is not sensitive to the other two dimensions $I_2, I_3$.

\begin{table}[htpb!]
	\centering
	\resizebox{\textwidth}{!}{%
		\begin{tabular}{ccc|c|c|c|c|c|c|c|c}
			\hline
			\multirow{2}{*}{$\paran{I_1,I_2,I_3}$} & \multirow{2}{*}{$R$} & \multirow{2}{*}{$\alpha$} & \multicolumn{4}{c|}{IP-SVD} & \multicolumn{4}{c}{HOOI} \\ \cline{4-11}
			&  &  & $\ell_2(\hat\bA_1)$ & $\ell_2(\hat\bA_2)$ & $\ell_2(\hat\bA_3)$ & $\mathrm{ReMSE}_\calY$ & \multicolumn{1}{c|}{$\ell_2(\hat\bA_1)$} & \multicolumn{1}{c|}{$\ell_2(\hat\bA_2)$} & \multicolumn{1}{c|}{$\ell_2(\hat\bA_3)$} & $\mathrm{ReMSE}_\calY$ \\ \hline
			(100,100,200) & 3 & 0.3 & \msd{0.805}{0.260} & \msd{0.805}{0.242} & \msd{0.820}{0.253} & \msd{0.885}{0.283} & \msd{1.703}{0.014} & \msd{1.703}{0.016} & \msd{1.718}{0.007} & \msd{3.647}{0.782}\\
			(100,100,400) & 3 & 0.3 & \msd{0.824}{0.227} & \msd{0.850}{0.219} & \msd{0.859}{0.234} & \msd{0.930}{0.284} & \msd{1.704}{0.012} & \msd{1.703}{0.015} & \msd{1.725}{0.004} & \msd{4.329}{0.958}\\
			(100,200,200) & 3 & 0.3 & \msd{0.840}{0.223} & \msd{0.828}{0.213} & \msd{0.782}{0.208} & \msd{0.903}{0.264} & \msd{1.706}{0.014} & \msd{1.718}{0.006} & \msd{1.719}{0.006} & \msd{4.072}{0.802}\\
			(100,200,400) & 3 & 0.3 & \msd{0.840}{0.222} & \msd{0.857}{0.239} & \msd{0.853}{0.221} & \msd{0.935}{0.259} & \msd{1.705}{0.011} & \msd{1.718}{0.005} & \msd{1.725}{0.003} & \msd{4.711}{0.910}\\ \hline
			(100,100,200) & 3 & 0.5 & \msd{0.264}{0.073} & \msd{0.278}{0.076} & \msd{0.274}{0.067} & \msd{0.279}{0.071} & \msd{1.641}{0.172} & \msd{1.635}{0.177} & \msd{1.655}{0.167} & \msd{1.715}{0.348}\\
			(100,100,400) & 3 & 0.5 & \msd{0.274}{0.065} & \msd{0.282}{0.068} & \msd{0.271}{0.071} & \msd{0.280}{0.060} & \msd{1.695}{0.048} & \msd{1.688}{0.047} & \msd{1.715}{0.039} & \msd{1.981}{0.288}\\
			(100,200,200) & 3 & 0.5 & \msd{0.258}{0.061} & \msd{0.277}{0.062} & \msd{0.262}{0.068} & \msd{0.268}{0.063} & \msd{1.677}{0.095} & \msd{1.686}{0.117} & \msd{1.685}{0.124} & \msd{1.825}{0.323}\\
			(100,200,400) & 3 & 0.5 & \msd{0.273}{0.078} & \msd{0.270}{0.069} & \msd{0.262}{0.068} & \msd{0.271}{0.066} & \msd{1.692}{0.074} & \msd{1.704}{0.071} & \msd{1.712}{0.068} & \msd{2.063}{0.373}\\ \hline
		\end{tabular}%
	}
	\caption{Unbalanced tensor dimensions. The average spectral and Frobenius Schatten q-$\sin\Theta$ loss ($q=2$) for $\hat\bA_m$, $m\in[3]$ and average Frobenius loss for $\calY$ under various settings.}
	\label{tabl:unbalanced}
\end{table}

\subsection{Comparison to the MMC Linear Tensor Regression}   \label{sec:compare-xu-2019}

In this section, we compare our approach (IP-SVD) to the MMC tensor regression method of \citet{xu2019generalized} on a linear tensor model. The $100\times100\times 100$ observed tensor $\calY$ is generated in the same way as in Section~\ref{sec:simu} with a core tensor of $1\times 1\times 1$. That is we set $I_1=I_2=I_3=100$ and $R_1=R_2=R_3=1$. Covariates $\bX_m, m=1,2,3$ of $100\times 1$ are randomly sampled from a uniform distribution on $[0, 1]$ i.i.d.. The parametric loading matrix $\bG_m$ is quadratic with respect to $\bX_m$ such that $[\bG_m]_i \propto 1 + [\bX_m]_i + [\bX_m]_i^2$. The non-parametric loading matrix $\bGamma_m$ is added in a similar way to Section~\ref{sec:simu-nonparam-loading}. 
In summary, the observed tensor is generated according to 
\begin{equation}
	\calY = \calF \times_1\bA_1\times_2\bA_2\times\bA_3 +\calE,\label{eq:additional-simu-model}
\end{equation}
where 
\begin{align*}
	\bA_m &= \bG_m + \mu\bGamma_m,\\
	\bG_m &= \bZ_m / \|\bZ_m\|,\\
	\bZ_m &= 1 + \bX_m + \bX_m^2,
\end{align*}
and $\calE$ is a $I_1\times I_2\times I_3$ tensor with i.i.d. standard Gaussian entries, $\calF$ is a $1\times1\times1$ (a scalar) of value $I^\alpha$ and $\bX_m$ has i.i.d. Uniform(0,1) entries.
Again, we use $\alpha$ to control the signal-to-noise ratio and use $\mu$ to control the relative strength of non-parametric loading parts. 

Three models are used to fit $\calY$. 
IP-SVD(NP) model denotes the IP-SVD approach that we fit $\calY$ with correctly specified ranks and sieve orders and with the {\it non-parametric (NP)} loading parts as in \eqref{eq:additional-simu-model}. 
IP-SVD(P) model is a similar model of IP-SVD(NP) except that we ignore the non-parametirc part $\bGamma$ and only fits the {\it parametric (P)} part $\bA_m=\bG_m$. 
MMC-LTR model stands for the multiple-mode-covariate linear regression model from \citet{xu2019generalized} where each $\bA_m$ is assumed to be linear in $\bX_m$ (and therefore, misspecifies the model with low sieve ranks). 
We report the relative MSE (ReMSE, averaged over 100 repetitions) of the three methods for different signal-to-noise ratios (varying $\alpha$) and for different non-parametric components (varying $\mu$)  in Table~\ref{tabl:compare-to-LTR}. 

\begin{table}[htpb!]
	\centering
	\resizebox{.78\textwidth}{!}{%
		\begin{tabular}{c|ccc|ccc}
			\hline
			$\alpha$ & \multicolumn{3}{c|}{2}& \multicolumn{3}{c}{1}\\
			\hline
			$\mu$ & 1 & 0.1 & 0 & 1 & 0.1 & 0 \\
			\hline
			IP-SVD(NP) 
			&\msd{0.004}{1.5e-4}  & \msd{0.002}{6.7e-5} & \msd{0.002}{6.6e-5} 
			&\msd{0.343}{0.019} & \msd{0.174}{0.007} & \msd{0.172}{0.007} 
			\\
			\hline
			IP-SVD(P) 
			&\msd{0.931}{0.002}  &\msd{0.170}{0.001} &\msd{2.6e-4}{6.7e-5} 
			&\msd{0.932}{0.002}  &\msd{0.172}{0.001} & \msd{0.026}{0.007} 
			\\
			\hline
			MMC-LTR 
			& \msd{0.937}{0.008}  & \msd{0.271}{0.112} & \msd{0.199}{0.138}
			&\msd{0.937}{0.008} &\msd{0.272}{0.112} &\msd{0.201}{0.138} 
			\\
			\hline
		\end{tabular}%
	}
	\caption{Mean and standard deviation of the Relative MSE for the three approaches under different signal-to-noise ratios and different strength of non-parametric parts in 100 repetitions.}
	\label{tabl:compare-to-LTR}
\end{table}

%
%


When the relative strength of the non-parametric loading $\bGamma_m$ is strong ($\mu = 1$),  we have IP-SVD(NP) $>$ IP-SVD(P) $>$ MMC-LTR, where $>$ means ``performs better than", for both settings of moderate and strong signal-to-noise ratio. 
When the relative strength of the non-parametric loading $\bGamma_m$ is weak ($\mu=0.1$) or non-existing ($\mu=0$), the IP-SVD still performs better than MMC-LTR. 
The IP-SVD(NP) has a disadvantage relative to IP-SVD(P) under this setting, especially when the model is fully parametric ($\mu=0$). 
However, when the signal-to-noise ratio is strong $\alpha = 2$, IP-SVD(NP) still performs the best in face of the weak relative strength of the non-parametric loading $\bGamma_m$  ($\mu=0.1$). 

We conclude at least for the specific setting with linear linkage function, nonlinear loading factors and non-parametric loading parts, IP-SVD outperforms the tensor regression method due to IP-SVD's capability as a semiparametric model with sieve expansions. 
At the same time, we acknowledge that the tensor regression method can handle more complicated linkage functions such as logistic model and probit model. 
In general, the STEFA model as a unsupervised method is a complement to the supervised MMC tensor regression model \citep{xu2019generalized}.

\section{Proofs of Technical Lemmas}\label{sec:app_technical}

\begin{proof}[Proof of Lemma~\ref{lem:AZbound}]
	By the definitions of $\bA_1$ and $\bZ_1$, we write
	\begin{align}\label{eq:A1Z1_bound}
		\bA_1\bZ_1^{\top}=\bP_1\bG_1\calM_1(\calF)\big((\bP_2\bG_2)\otimes (\bP_3\bG_3)\big)^{\top}(\bU_2\otimes \bU_3)\big((\bU_2\otimes \bU_3)^{\top}\calM_1^{\top}(\calE)\bU_1\bU_1^{\top}\big)
	\end{align}
	where $\bU_m\bU_m^{\top}=\bP_m$, $\bU_m\in\RR^{I_m\times J_m}$ and $\bU_m^{\top}\bU_m=\bI_{J_m}$. It suffices to prove the upper bound of $\|\bB_1(\bU_2\otimes \bU_3)^{\top}\calM_1^{\top}(\calE)\bU_1\|$ where $\bB_1=\bA_1(\bU_2\otimes \bU_3)$ is an $I_1\times (J_2J_3)$ deterministic matrix.
	
	Denote $\bE_1=\calM_1^{\top}(\calE)\bU_1\in\RR^{(I_2I_3)\times J_1}$. By Assumption~\ref{assump:noise}, $\bE_1=(\be_{1,1},\cdots, \be_{1,I_2I_3})^{\top}$ has i.i.d. rows and each row is a $J_1$-dimensional centered sub-exponential random vector in that
	\begin{align}\label{eq:e_subexp}
		\sup\nolimits_{\|\bv\|\leq 1}\PP\big(|\langle \bv, \be_{1,j}\rangle|\geq t\big)\leq \exp(-Ct)\quad \textrm{for all } j\in[I_2I_3]
	\end{align}
	for any $t>1$. Meanwhile, $\EE(\be_{1,j}\be_{1,j}^{\top})=\bI_{J_1}$ for all $j\in[I_2I_3]$. By (\ref{eq:e_subexp}), there exists an absolute constant $C_1>0$ so that
	\begin{align}\label{eq:small_prob}
		\PP\Big(\max_{j\in [I_2I_3], k\in[J_1]} \big|e_{1,j}(k)\big|\geq C_1\log(I_2I_3J_1)\Big)\leq \frac{1}{(I_2I_3J_1)^4}.
	\end{align}
	Denote the above event by $\mathfrak{E}_0$. We obtain $\max_j\|\be_{1,j}\|\leq C_1\sqrt{J_1}\log(I_2I_3J_1)$ on $\mathfrak{E}_0$. Now, we denote $\delta_1= C_1\sqrt{J_1}\log(I_2I_3J_1)$.

	Denote $\{\tilde{\bu}_{23,j}\}_{j=1}^{I_2I_3}$ the columns of $(\bU_2\otimes\bU_3)^{\top}$. We write
	\begin{align*}
		\bB_1(\bU_2\otimes \bU_3)^{\top}\bE_1=&\sum_{j=1}^{I_2I_3} \bB_1 \tilde{\bu}_{23,j}\be_{1,j}^{\top}\\
		=&\sum_{j=1}^{I_2I_3} \bB_1 \tilde{\bu}_{23,j}\be_{1,j}^{\top}\mathbbm{1}(\|\be_{1,j}\|\leq \delta_1)+\sum_{j=1}^{I_2I_3} \bB_1 \tilde{\bu}_{23,j}\be_{1,j}^{\top}\mathbbm{1}(\|\be_{1,j}\|> \delta_1),
	\end{align*}
	where the second term on the RHS is simply $0$ on event $\mathfrak{E}_0$. It suffices to bound the first term, which is a sum of independent random matrices. Write
	\begin{align}
		\Big\|\sum_{j=1}^{I_2I_3} \bB_1 \tilde{\bu}_{23,j}\be_{1,j}^{\top}\mathbbm{1}(\|\be_{1,j}\|\leq \delta_1) \Big\|\leq& \Big\|\sum_{j=1}^{I_2I_3} \bB_1 \tilde{\bu}_{23,j}\Big(\be_{1,j}^{\top}\mathbbm{1}(\|\be_{1,j}\|\leq \delta_1)-\EE \be_{1,j}^{\top}\mathbbm{1}(\|\be_{1,j}\|\leq \delta_1)\Big) \Big\|\nonumber\\
		&+\Big\|\sum_{j=1}^{I_2I_3} \bB_1\tilde{\bu}_{23,j}\EE \be_{1,j}^{\top}\mathbbm{1}(\|\be_{1,j}\|\leq \delta_1) \Big\|\label{eq:con_opt1}.
	\end{align}
	Since $\EE \be_{1,j}=0$ and $\|\bB_1\|\leq \|\bA_1\|\leq \kappa_0\lambda_{\min}\sqrt{I_1I_2I_3}$, together with (\ref{eq:small_prob}), we get
	\begin{align*}
		\Big\|\sum_{j=1}^{I_2I_3} \bB_1\tilde{\bu}_{23,j}&\EE \be_{1,j}^{\top}\mathbbm{1}(\|\be_{1,j}\|\leq \delta_1) \Big\|=\Big\|\sum_{j=1}^{I_2I_3} \bB_1\tilde{\bu}_{23,j}\EE \be_{1,j}^{\top}\mathbbm{1}(\|\be_{1,j}\|> \delta_1) \Big\|\\
		\leq& \|\bB_1\|\cdot \sum_{j=1}^{I_2I_3}\EE\|\be_{1,j}\|\mathbbm{1}(\|\be_{1,j}\|> \delta_1) =I_2I_3\|\bB_1\|\cdot \EE\|\be_{1,1}\|\mathbbm{1}(\|\be_{1,1}\|> \delta_1)\\
		\leq& I_2I_3\|\bB_1\|\cdot \EE^{1/2}\big(\|\be_{1,1}\|^2\big)\cdot \PP^{1/2}\big((\|\be_{1,1}\|> \delta_1\big)\\
		\leq& I_2I_3\kappa_0\lambda_{\min}\sqrt{I_1I_2I_3}\cdot C_2\sqrt{J_1}\cdot \frac{1}{(I_2I_3J_1)^2}\leq C_2\frac{\kappa_0\lambda_{\min}\sqrt{I_1I_2I_3}}{I_2I_3J_1}.
	\end{align*}
	Now it suffices to prove the upper bound of first term in RHS of (\ref{eq:con_opt1}), which is the spectral norm of the sum of independent random matrices. By definition,
	$$
	\Big\|\bB_1 \tilde{\bu}_{23,j}\Big(\be_{1,j}^{\top}\mathbbm{1}(\|\be_{1,j}\|\leq \delta_1)-\EE \be_{1,j}^{\top}\mathbbm{1}(\|\be_{1,j}\|\leq \delta_1)\Big)\Big\|\leq 2\delta_1\kappa_0\lambda_{\min}\sqrt{I_1I_2I_3},\quad \forall j\in[I_2I_3]
	$$
	and
	\begin{align*}
		\Big\|\sum_{j=1}^{I_2I_3}\EE&\bB_1\tilde{\bu}_{23,j}\be_{1,j}^{\top}\be_{1,j}\tilde{\bu}_{23,j}^{\top}\bB_1^{\top} \cdot  \mathbbm{1}(\|\be_{1,j}\|\leq \delta_1)\Big\|\\
		\leq & \Big\|\sum_{j=1}^{I_2I_3}\EE\bB_1\tilde{\bu}_{23,j}\be_{1,j}^{\top}\be_{1,j}\tilde{\bu}_{23,j}^{\top}\bB_1^{\top} \Big\|+\Big\|\sum_{j=1}^{I_2I_3}\EE\bB_1\tilde{\bu}_{23,j}\be_{1,j}^{\top}\be_{1,j}\tilde{\bu}_{23,j}^{\top}\bB_1^{\top}\cdot  \mathbbm{1}(\|\be_{1,j}\|> \delta_1)\Big\|\\
		\leq & J_1I_1I_2I_3\kappa_0^2\lambda_{\min}^2+\sum_{j=1}^{I_2I_3}\|\tilde{\bu}_{23,j}\|^2 \kappa_0^2\lambda_{\min}^2I_1I_2I_3\cdot \EE \|\be_{1,j}\|^2 \mathbbm{1}(\|\be_{1,j}\|> \delta_1)\\
		\leq& J_1\kappa_0^2\lambda_{\min}^2I_1I_2I_3+J_2J_3\kappa_0^2\lambda_{\min}^2I_1I_2I_3\cdot C_2J_1\cdot \frac{1}{(I_2I_3J_1)^2}\leq 2J_1\kappa_0^2\lambda_{\min}^2I_1I_2I_3,
	\end{align*}
	where the last inequality holds since $I_m\geq J_m$. Similarly, we have
	\begin{align*}
		\Big\|\sum_{j=1}^{I_2I_3}\EE&\be_{1,j}\tilde{\bu}_{23,j}^{\top}\bB_1^{\top}\bB_1\tilde{\bu}_{23,j}\be_{1,j}^{\top} \cdot  \mathbbm{1}(\|\be_{1,j}\|\leq \delta_1)\Big\|\\
		\leq&\Big\|\sum_{j=1}^{I_2I_3}\EE\be_{1,j}\tilde{\bu}_{23,j}^{\top}\bB_1^{\top}\bB_1\tilde{\bu}_{23,j}\be_{1,j}^{\top} \Big\|+\Big\|\sum_{j=1}^{I_2I_3}\EE\be_{1,j}\tilde{\bu}_{23,j}^{\top}\bB_1^{\top}\bB_1\tilde{\bu}_{23,j}\be_{1,j}^{\top}\cdot  \mathbbm{1}(\|\be_{1,j}\|>\delta_1)\Big\|\\
		\leq& \|\bB_1\|_{\rm F}^2+\Big\|\sum_{j=1}^{I_2I_3}\EE\be_{1,j}\tilde{\bu}_{23,j}^{\top}\bB_1^{\top}\bB_1\tilde{\bu}_{23,j}\be_{1,j}^{\top}\cdot  \mathbbm{1}(\|\be_{1,j}\|>\delta_1)\Big\|\\
		\leq& R_1\kappa_0^2\lambda_{\min}^2I_1I_2I_3+\sum_{j=1}^{I_2I_3}\tilde{\bu}_{23,j}^{\top}\bB_1^{\top}\bB_1\tilde{\bu}_{23,j}\EE \langle\be_{1,1},\bv \rangle^2 \cdot  \mathbbm{1}(\|\be_{1,1}\|>\delta_1)\big]\leq 2R_1\kappa_0^2\lambda_{\min}^2I_1I_2I_3,
	\end{align*}
	where $\bv$ is any fixed vector in $\RR^{J_1}$ with unit norm.
	
	Then, by matrix Bernstein inequality (\cite{tropp2012user}), with probability at least $1-I_1^{-2}$,
	\begin{align*}
		\Big\|\sum_{j=1}^{I_1} \bB_1 \Big(\be_{1,j}\mathbbm{1}(\|\be_{1,j}\|\leq \delta_1)-\EE \be_{1,j}\mathbbm{1}(&\|\be_{1,j}\|\leq \delta_1)\Big)\tilde{\bu}_j^{\top} \Big\|\\
		&\leq C_1\kappa_0\lambda_{\min}(I_1I_2I_3)^{1/2}\big(\sqrt{J_1\log I_1}+\delta_1\log I_1\big),
	\end{align*}
	where we assumed $R_m\leq J_m$.
	Since $\delta_1\asymp \sqrt{J_1}\log(I_1)$ and $I_1>J_1$, we get with probability at least $1-2I_1^{-2}$ that
	\begin{align*}
		\big\|\bB_1(\bU_2\otimes \bU_3)^{\top}\bE_1\bU_1\bU_1^{\top} \big\|\leq C_3\kappa_0\lambda_{\min}(I_1I_2I_3)^{1/2}\cdot\sqrt{J_1}\log^2 I_1
	\end{align*}
	for some absolute constant $C_3>0$.
	
	Therefore, by (\ref{eq:A1Z1_bound}), we get with probability  at least $1-2I_1^{-2}$ that
	$$
	\|\bA_1\bZ_1^{\top}\|\leq C_4\kappa_0\lambda_{\min}(I_1I_2I_3)^{1/2}\cdot\sqrt{J_1}\log^2 I_1
	$$
	for some absolute constant $C_4>0$, which proves Lemma~\ref{lem:AZbound}.
\end{proof}

\vspace{1cm}

\begin{proof}[Proof of Lemma~\ref{lem:ZZ_bound}]
	Similarly as the proof of Lemma~\ref{lem:AZbound}, we denote $\bE_2=\calM_1(\calE)(\bU_2\otimes \bU_3)$ the $I_1\times (J_2J_3)$ matrix with independent rows $\big(\be_{2,i}^{\top}\big)_{i=1}^{I_1}$. Here, $\be_{2,i}$ is a sub-exponential random vector with $\EE \be_{2,i}\be_{2,i}^{\top}=\bI_{J_2J_3}$ and $\EE \be_{2,i}={\bf 0}$.
	
	Similarly, we denote $\{\tilde{\bu}_i\}_{i=1}^{I_1}$ the columns of $\bU_1^{\top}$. Then, we write
	\begin{align*}
		\bU_1^{\top}\calM_1(\calE)(\bU_2\otimes \bU_3)=\sum_{i=1}^{I_1} \tilde{\bu}_i\be_{2,i}^{\top}
	\end{align*}
	and as a result
	\begin{align}
		\big(\bU_1^{\top}\calM_1&(\calE)(\bU_2\otimes \bU_3)\big)\big(\bU_1^{\top}\calM_1(\calE)(\bU_2\otimes \bU_3)\big)^{\top}-J_2J_3\bI_{J_1}\notag\\
		=&\Big(\sum_{i=1}^{I_1} \tilde{\bu}_i\be_{2,i}^{\top}\Big)\Big(\sum_{i=1}^{I_1} \tilde{\bu}_i\be_{2,i}^{\top}\Big)^{\top}-J_2J_3\bI_{J_1}\notag\\
		=&\Big(\sum_{i=1}^{I_1}\tilde{\bu}_i\be_{2,i}^{\top}\be_{2,i}\tilde{\bu}_i^{\top}-J_2J_3\bI_{J_1}\Big)+\sum_{1\leq i_1\neq i_2\leq I_1}\tilde{\bu}_{i_1}\be_{2,i_1}^{\top}\be_{2,i_2}\tilde{\bu}_{i_2}^{\top}. \label{eq:ZZbound_2terms}
	\end{align}
	Observe that $\EE\sum_{i=1}^{I_1}\tilde{\bu}_i\be_{2,i}^{\top}\be_{2,i}\tilde{\bu}_i^{\top}=J_2J_3\bI_{J_1}$. Note that
	\begin{align*}
		\sum_{i=1}^{I_1}\tilde{\bu}_i\be_{2,i}^{\top}\be_{2,i}\tilde{\bu}_i^{\top}-J_2J_3\bI_{J_1}=\sum_{i=1}^{I_1}\|\be_{2,i}\|^2 \tilde{\bu}_i\tilde{\bu}_i^{\top}-J_2J_3\bI_{J_1}.
	\end{align*}
	Similarly as proof of Lemma~\ref{lem:AZbound}, denote $\delta_1=C_1\sqrt{J_2J_3}\log I_1$ so that $\PP(\max_{i}\|\be_{2,i}\|\geq \delta_1)\leq (I_1J_2J_3)^{-3}$. Following the same treatment there, we can show that
	\begin{align}\label{eq:ue2e2u}
		\PP\Big(\Big\| \sum_{i=1}^{I_1}\tilde{\bu}_i\be_{2,i}^{\top}\be_{2,i}\tilde{\bu}_i^{\top}-J_2J_3\bI_{J_1}\Big\|\geq (J_2J_3)^{1/2}\big(C_2\sqrt{J_1\log I_1}+C_3\log^2 I_1\big)\Big)\leq I_1^{-2}
	\end{align}
	with $C_2,C_3>0$ being absolute constants, where we used the facts (needed in matrix Bernstein inequality), with $\be_1$ being the first column of $\calM_1(\calE)^{\top}$,
	\begin{align*}
		\Big\|\sum_{i=1}^{I_1}\EE\big(&\|\be_{2,i}\|^2-J_2J_3\big)^2\|\tilde{\bu}_i\|^2\tilde{\bu}_i\tilde{\bu}_i^{\top} \Big\|=\EE\big(\|\be_{2,1}\|^2-J_2J_3\big)^2\Big\|\sum_{i=1}^{I_1} \|\tilde{\bu}_i\|^2\tilde{\bu}_i\tilde{\bu}_i^{\top}\Big\|\\
		\leq& \EE\big(\|\be_{2,1}\|^2-J_2J_3\big)^2=\EE \|\be_{2,1}\|^4-(J_2J_3)^2\stackrel{\bU_{23}=\bU_2\otimes \bU_3}{=}\EE\big(\be_1^{\top}\bU_{23}\bU_{23}^{\top}\be_1\big)^2-(J_2J_3)^2\\
		\stackrel{\bU_{23} = (\bu_{23,j})_{j=1}^{J_2J_3}}{=}&\EE\Big(\sum_{j=1}^{J_2J_3}\langle \be_1, \bu_{23,j}\rangle^2\Big)^2-(J_2J_3)^2\\
		=&\sum_{j=1}^{J_2J_3}\EE\langle \be_1, \bu_{23,j}\rangle^4+\sum_{1\leq j\neq j'\leq J_2J_3} \EE \langle \be_1, \bu_{23,j}\rangle^2\langle \be_1, \bu_{23,j'}\rangle^2-(J_2J_3)^2\\
		=&\sum_{j=1}^{J_2J_3}\EE\langle \be_1, \bu_{23,j}\rangle^4+\sum_{1\leq j\neq j'\leq J_2J_3} \|\bu_{23,j}\|^2\| \bu_{23,j'}\|^2-(J_2J_3)^2\\
		=&\sum_{j=1}^{J_2J_3}\Big(\EE\langle \be_1, \bu_{23,j}\rangle^4-\|\bu_{23,j}\|^4\Big)+\|\bU_{23}\|_{\rm F}^2\|\bU_{23}\|_{\rm F}^2-(J_2J_3)^2\\
		=&\sum_{j=1}^{J_2J_3}\Big(\EE\langle \be_1, \bu_{23,j}\rangle^4-\|\bu_{23,j}\|^4\Big)=O(J_2J_3).
	\end{align*}
	We now deal with the second term in RHS of (\ref{eq:ZZbound_2terms}). Observe that
	$$
	g(\bE_2)=\sum_{1\leq i_1\neq i_2\leq I_1}\tilde{\bu}_{i_1}\tilde{\bu}_{i_2}^{\top}\langle \be_{2,i_1}, \be_{2,i_2}\rangle
	$$
	is a generalized U-statistic. Let $\{\tilde \be_{2,i}\}_{i=1}^{I_1}$ be an independent copy of $\{\be_{2,i}\}_{i=1}^{I_1}$. By the tail probability of decoupling of U-statistics (\cite[Theorem 3.4.1]{de2012decoupling}), for all $t>0$,
	\begin{align}\label{eq:Ustat_decouple}
		\PP\Big(\Big\| \sum_{1\leq i_1\neq i_2\leq I_1}\tilde{\bu}_{i_1}\tilde{\bu}_{i_2}^{\top}\langle \be_{2,i_1}, \be_{2,i_2}\rangle\Big\|\geq t\Big)\leq C_1\cdot\PP\Big(\Big\| \sum_{1\leq i_1\neq i_2\leq I_1}\tilde{\bu}_{i_1}\tilde{\bu}_{i_2}^{\top}\langle \be_{2,i_1}, \tilde\be_{2,i_2}\rangle\Big\|\geq C_2t\Big)
	\end{align}
	for some absolute constants $C_1,C_2>0$. Clearly,
	\begin{align}\label{eq:i1i2bound_decouple}
		\Big\|  \sum_{1\leq i_1\neq i_2\leq I_1}\tilde{\bu}_{i_1}\tilde{\bu}_{i_2}^{\top}\langle \be_{2,i_1}, \tilde\be_{2,i_2}\rangle\Big\|\leq \big\| \big(\bU_1^{\top}\bE_2\big)\big(\bU_1^{\top}\tilde{\bE}_2\big)^{\top} \big\|+\Big\|  \sum_{i=1}^{I_1}\tilde{\bu}_{i}\tilde{\bu}_{i}^{\top}\langle \be_{2,i}, \tilde\be_{2,i}\rangle \Big\|.
	\end{align}
	By a similar treatment as in the proof of Lemma~\ref{lem:AZbound}, we get
	$$
	\PP\big(\|\bU_1^{\top}\bE_2\|\geq C_3\sqrt{J_1\log I_1}+C_4\sqrt{J_2J_3}\log^2 I_1\big)\leq I_1^{-2}.
	$$
	Denote $\mathfrak{E}_1$ the above event. To get a sharp upper bound for $\|(\bU_1^{\top}\bE_2)(\bU_1^{\top}\tilde{\bE}_2)^{\top}\|$, we fix $\tilde{\bE_2}$ and recall the definition $\bU_1^{\top}\bE_2=\bU_1^{\top}\calM_1(\calE)(\bU_2\otimes \bU_3)$. Define $\bE_1=\bU_1^{\top}\calM_1(\calE)\in\RR^{J_1\times (I_2I_3)}$, which has i.i.d. columns $\{\be_{1,i}\}_{i=1}^{I_2I_3}$. Denote $\{\tilde{\bu}_{23,i}^{\top}\}_{i=1}^{I_2I_3}$ the rows of $\bU_2\otimes \bU_3$. Then, we write
	\begin{align*}
		(\bU_1^{\top}\bE_2)(\bU_1^{\top}\tilde{\bE}_2)^{\top}=\sum_{i=1}^{I_2I_3} \be_{1,i}\tilde{\bu}_{23,i}^{\top}(\bU_1^{\top}\tilde{\bE}_2)^{\top}.
	\end{align*}
	Similarly by the treatment of the proof of Lemma~\ref{lem:AZbound}, conditioned on $\tilde\bE_2$, we have
	\begin{align*}
		\PP\big(\|(\bU_1^{\top}\bE_2)(\bU_1^{\top}\tilde{\bE}_2)^{\top}\|\geq \|\bU_1^{\top}\tilde{\bE}_2\|\cdot \big(C_3\sqrt{J_1\log I_1}+C_4\sqrt{J_1}\log^2I_1\big) \big| \tilde{\bE}_2\big)\leq I_1^{-2}.
	\end{align*}
	Together with the event $\mathfrak{E}_1$, we conclude that with probability at least $1-2I_1^{-2}$,
	\begin{align}\label{eq:UEUEtop}
		\|(\bU_1^{\top}\bE_2)(\bU_1^{\top}\tilde{\bE}_2)^{\top}\| \leq C_3'J_1\log^{5/2}I_1+C_4'\sqrt{J_1J_2J_3}\log^4I_1
	\end{align}
	for some absolute constants $C_3',C_4'>0$.
	
	For the second term in (\ref{eq:i1i2bound_decouple}), we still apply the truncation treatment as in the proof of Lemma~\ref{lem:AZbound}. In this case, note that there exists an event $\mathfrak{E}_2$ with $\PP(\mathfrak{E}_2)\geq 1-I_1^{-2}$ such that $\max_{i\in[I_1]} \|\tilde{\be}_{2,i}\|\leq C_0\sqrt{J_2J_3}\log I_1$. Conditioned on $\tilde{\be}_{2,i}$, we have $\PP(|\langle \be_{2,i},\tilde{\be}_{2,i}\rangle|\geq C_1'\|\tilde{\be}_{2,i}\|\log I_1)\leq I_1^{-4}$ for some absolute constant $C_1'>0$. By a similar proof, we can obtain
	\begin{align}\label{eq:iuue2i}
		\PP\Big(\Big\|  \sum_{i=1}^{I_1}\tilde{\bu}_{i}\tilde{\bu}_{i}^{\top}\langle \be_{2,i}, \tilde\be_{2,i}\rangle\Big\|\geq C_3\sqrt{J_2J_3\log I_1}
		&+C_4\sqrt{J_2J_3}\log^{3}I_1\Big)\leq 2I_1^{-2}.
	\end{align}
	Putting (\ref{eq:iuue2i}), (\ref{eq:UEUEtop}), (\ref{eq:i1i2bound_decouple}), (\ref{eq:Ustat_decouple}) together, we get
	\begin{align*}
		\PP\Big(\Big\|  \sum_{1\leq i_1\neq i_2\leq I_1}\tilde{\bu}_{i_1}\tilde{\bu}_{i_2}^{\top}\langle \be_{2,i_1}, \be_{2,i_2}\rangle\Big\|\geq C_3'J_1\log^{5/2}I_1+C_4'\sqrt{J_1J_2J_3}\log^4I_1\Big)\leq 4I_1^{-2}.
	\end{align*}
	Then, together with (\ref{eq:ue2e2u}) and (\ref{eq:ZZbound_2terms}), we get with probability at least $1-5I_1^{-2}$ that
	$$
	\|\bZ_1\bZ_1^{\top}-J_2J_3\bP_1\|\leq C_3\sqrt{J_1J_2J_3}\log^4I_1+C_4J_1\log^{5/2}I_1,
	$$
	which proves Lemma~\ref{lem:ZZ_bound}.
\end{proof}

\vspace{1cm}

\begin{proof}[Proof of Lemma~\ref{lem:UEG2G3}]
	The strategy is similar to that in the proof of Lemma~\ref{lem:AZbound}. Denote $\bE_1=\calM_1(\calE)(\bG_2\otimes\bG_3)/\sqrt{I_2I_3}$ the $I_1\times (R_2R_3)$ random matrices with i.i.d rows $\{\be_{1,i}^{\top}\}_{i=1}^{I_1}$ satisfying $\EE \be_{1,i}={\bf 0}$ and $\EE \be_{1,i}\be_{1,i}^{\top}=\bI_{R_2R_3}$. Then,
	\begin{align*}
		\bU_1^{\top}\calM_1(\calE)(\bG_2\otimes \bG_3)=\sqrt{I_2I_3}\bU_1^{\top}\bE_1=\sqrt{I_2I_3}\cdot\sum_{i=1}^{I_1}\tilde{\bu}_i \be_{1,i}^{\top},
	\end{align*}
	where $\{\tilde{\bu}_i^{\top}\}_{i=1}^{I_1}$ denotes the rows of $\bU_1$. By a similar treatment, we have
	$$
	\PP\Big(\max_{1\leq i\leq I_1}\|\be_{1,i}\|^2\geq C_1R_2R_3\log^2 I_1\Big)\leq I_1^{-4}.
	$$
	Denote the above event by $\mathfrak{E}_0$. Define $\delta_1=C_1'\sqrt{R_2R_3}\log I_1$ so that $\PP(\max_i \|\be_{1,i}\|\geq \delta_1)\leq I_1^{-4}$. Then, we write
	\begin{align}\label{eq:U1EG2G3_2terms}
		\bU_1^{\top}\calM_1(\calE)(\bG_2\otimes \bG_3)=\sqrt{I_2I_3}\cdot \sum_{i=1}^{I_1}\tilde{\bu}_i\be_{1,i}^{\top}\mathbbm{1}(\|\be_{1,i}\|\leq \delta_1)+\sqrt{I_2I_3}\cdot \sum_{i=1}^{I_1}\tilde{\bu}_i\be_{1,i}^{\top}\mathbbm{1}(\|\be_{1,i}\|> \delta_1).
	\end{align}
	The second term in RHS of (\ref{eq:U1EG2G3_2terms}) is simply $0$ on event $\mathfrak{E}_0$. It suffices to investigate the first term on RHS of (\ref{eq:U1EG2G3_2terms}). We will apply the matrix Bernstein inequality as in the proof of Lemma~\ref{lem:AZbound}. Indeed, we can show that
	$$
	\PP\bigg(\Big\| \sum_{i=1}^{I_1}\tilde{\bu}_i\be_{1,i}^{\top}\mathbbm{1}(\|\be_{1,i}\|\leq \delta_1)\Big\|\geq C_1\sqrt{J_1\log I_1}+C_2\sqrt{R_2R_3}\log^{2}I_1\bigg)\leq I_1^{-2}
	$$
	for some absolute constants $C_1, C_2>0$. Since the proof is identical to the proof of Lemma~\ref{lem:AZbound}, we skip it here.
\end{proof}

\vspace{1cm}

\begin{proof}[Proof of Lemma~\ref{lem:UEsupGR}]
	We only prove (\ref{eq:supUEGU3A}) since the proof of (\ref{eq:supUEU2AU3B}) is similar. We begin with a standard discretization step, see \cite[Lemma~5]{zhang2018tensor}. There exists a subset $\mathfrak{D}_{1/3}\subset \mathfrak{B}(J_3,R_3)$ such that $\log {\rm Card}(\mathfrak{D}_{1/3})\leq c_1J_3R_3$ for some absolute constant $c_1>0$ and for any $\bB\in \mathfrak{B}(J_3,R_3)$,
	$$
	\min_{\bD\in\mathfrak{D}_{1/3}} \|\bB-\bD\|_{\rm F}\leq 1/3.
	$$
	It is easy to show that
	\begin{align}\label{eq:suptomax}
		\sup_{\bA\in \mathfrak{B}(J_3,R_3)} \big\|\bU_1^{\top}\calM_1(\calE)(\bG_2\otimes \bU_3\bA) \big\|/\sqrt{I_2}\leq \frac{3}{2}\cdot \max_{\bD\in \mathfrak{D}_{1/3}} \big\|\bU_1^{\top}\calM_1(\calE)(\bG_2\otimes \bU_3\bD) \big\|/\sqrt{I_2}.
	\end{align}
	It suffices to prove the upper bound in RHS of (\ref{eq:suptomax}). Under Assumption~\ref{assump:noise}, there exists an event $\mathfrak{E}_0$ with $\PP(\mathfrak{E}_0)\geq 1-I_1^{-5}$ in which
	$$
	\max_{\omega\in[I_1]\times [I_2]\times [I_3]} |e_{\omega}|\leq C_0\log I_1
	$$
	for some absolute constant $C_0>0$. Denote $\delta_0=C_0\log I_1$ and $\|\calE\|_{\infty}=\max_{\omega}|e_{\omega}|$. We write
	\begin{align}\label{eq:maxU1_2terms}
		\max_{\bD\in \mathfrak{D}_{1/3}} \big\|\bU_1^{\top}\calM_1(\calE)&(\bG_2\otimes \bU_3\bD) \big\|/\sqrt{I_2}
		\leq \max_{\bD\in \mathfrak{D}_{1/3}} \big\|\bU_1^{\top}\calM_1(\calE)(\bG_2\otimes \bU_3\bD)\mathbbm{1}(\|\calE\|_{\infty}\leq \delta_0) \big\|/\sqrt{I_2}\notag\\
		&+\max_{\bD\in \mathfrak{D}_{1/3}} \big\|\bU_1^{\top}\calM_1(\calE)(\bG_2\otimes \bU_3\bD)\mathbbm{1}(\|\calE\|_{\infty}>\delta_0) \big\|/\sqrt{I_2}.
	\end{align}
	Conditioned on event $\mathfrak{E}_0$, the second term in RHS of (\ref{eq:maxU1_2terms}) is simply $0$. It suffices to prove the upper bound of first term in RHS of (\ref{eq:maxU1_2terms}). Write
	\begin{align}\label{eq:U1EG2U3D_delta}
		\max_{\bD\in \mathfrak{D}_{1/3}} \big\|\bU_1^{\top}&\calM_1(\calE)(\bG_2\otimes \bU_3\bD)\mathbbm{1}(\|\calE\|_{\infty}\leq \delta_0) \big\|/\sqrt{I_2}\notag\\
		\leq& \max_{\bD\in \mathfrak{D}_{1/3}} \big\|\bU_1^{\top}\EE\big[\calM_1(\calE)\mathbbm{1}(\|\calE\|_{\infty}\leq \delta_0)\big](\bG_2\otimes \bU_3\bD) \big\|/\sqrt{I_2}\notag\\
		+ \max_{\bD\in \mathfrak{D}_{1/3}} &\big\|\bU_1^{\top}\Big[\calM_1(\calE)\mathbbm{1}(\|\calE\|_{\infty}\leq \delta_0)-\EE\calM_1(\calE)\mathbbm{1}(\|\calE\|_{\infty}\leq \delta_0)\Big](\bG_2\otimes \bU_3\bD) \big\|/\sqrt{I_2}.
	\end{align}
	To upper bound the first term in RHS of (\ref{eq:U1EG2U3D_delta}), note that $\forall \omega\in[I_1]\times[I_2]\times[I_3]$,
	$$
	\big|\EE e_{\omega}\mathbbm{1}(|e_{\omega}|\leq \delta_0)\big|=\big|\EE e_{\omega}\mathbbm{1}(|e_{\omega}|>\delta_0)\big|\stackrel{\tiny \textrm{Cauchy-Scwharz}}{\leq} \PP^{1/2}(|e_{\omega}|> \delta_0).
	$$
	Then, the first term in RHS of (\ref{eq:U1EG2U3D_delta}) is bounded as
	\begin{align}\label{eq:U1Ein_exp}
		\max_{\bD\in \mathfrak{D}_{1/3}} \big\|&\bU_1^{\top}\EE\big[\calM_1(\calE)\mathbbm{1}(\|\calE\|_{\infty}\leq \delta_0)\big](\bG_2\otimes \bU_3\bD) \big\|/\sqrt{I_2}\notag\\
		\leq&\|\EE\calM_1(\calE)\mathbbm{1}(\|\calE\|_{\infty}\leq \delta_0)\|_{\rm F}=\left(\sum_{\omega\in[I_1]\times[I_2]\times[I_3]} \PP(|e_{\omega}|>\delta_0)\right)^{1/2}\leq (I_1I_2I_3I_1^{-5})^{1/2}=O(I_1^{-1}).
	\end{align}
	We now continue the upper bound the second term in RHS of (\ref{eq:U1EG2U3D_delta}). Similarly, let $\mathfrak{R}_{1/3}(J_1)$ and $\mathfrak{R}_{1/3}(R_2R_3)$ the $1/3$-net of $\mathfrak{B}(J_1,1)$ and $\mathfrak{B}(R_2R_3,1)$, respectively.  Then, given any vector $\bu\in\mathfrak{B}(J_1,1)$ and $\bw\in\mathfrak{B}(R_2R_3,1)$, we have
	$$
	\min_{\bv\in \mathfrak{R}_{1/3}(J_1)}\|\bu-\bv\|\leq 1/3\quad {\rm and}\quad \min_{\bv\in\mathfrak{R}_{1/3}(R_2R_3)}\|\bv-\bw\|\leq 1/3
	$$
	where $\|\cdot\|$ represents $\ell_2$-norm for vectors. Meanwhile, $\log {\rm Card}(\mathfrak{R}_{1/3}(J_1))\leq c_1J_1$ and $\log{\rm Card}(\mathfrak{R}_{1/3}(R_2R_3))\leq c_2R_2R_3$ for some absolute constants $c_1, c_2>0$. Then,
	\begin{align*}
		\max_{\bD\in \mathfrak{D}_{1/3}} &\big\|\bU_1^{\top}\Big[\calM_1(\calE)\mathbbm{1}(\|\calE\|_{\infty}\leq \delta_0)-\EE\calM_1(\calE)\mathbbm{1}(\|\calE\|_{\infty}\leq \delta_0)\Big](\bG_2\otimes \bU_3\bD) \big\|/\sqrt{I_2}\\
		\leq&  \frac{9}{2}\max_{\bD\in \mathfrak{D}_{1/3}}\max_{\substack{\bu\in\mathfrak{R}_{1/3}(J_1)\\ \bw\in \mathfrak{R}_{1/3}(R_2R_3)}} \bu^{\top}\bU_1^{\top}\Big[\calM_1(\calE)\mathbbm{1}(\|\calE\|_{\infty}\leq \delta_0)-\EE\calM_1(\calE)\mathbbm{1}(\|\calE\|_{\infty}\leq \delta_0)\Big](\bG_2\otimes \bU_3\bD) \bw/\sqrt{I_2}.
	\end{align*}
	Now, we fix $\bu,\bw$ and $\bD$, and prove a concentration inequality for the above RHS, then we finish the proof by a union bound. Denote $\tilde{\bu}=\bU_1\bu\in\RR^{I_1}$ and $\tilde{\bv}=(\bG_2\otimes \bU_3\bD)\bw/\sqrt{I_2}\in\RR^{I_2I_3}$. Clearly, we have $\max\big\{\|\tilde\bu\|,\|\tilde\bv\|\big\}\leq 1$. Now, we write
	\begin{align}\label{eq:tildeuM-Etildev}
		\tilde{\bu}^{\top}\Big[\calM_1(\calE)\mathbbm{1}&(\|\calE\|_{\infty}\leq \delta_0)-\EE\calM_1(\calE)\mathbbm{1}(\|\calE\|_{\infty}\leq \delta_0)\Big] \tilde{\bv}\notag\\
		=&\sum_{\omega=(\omega_1,\omega_2,\omega_3)\in[I_1]\times [I_2]\times [I_3]}\tilde u(\omega_1)\tilde v(\omega_2\omega_3)\big[e_{\omega}\mathbbm{1}(|e_{\omega}|\leq \delta_0)-\EE e_{\omega}\mathbbm{1}(|e_{\omega}|\leq \delta_0)\big],
	\end{align}
	which is a sum of independent centered random variables. Observe that
	$$
	\Big|\tilde u(\omega_1)\tilde v(\omega_2\omega_3)\big[e_{\omega}\mathbbm{1}(|e_{\omega}|\leq \delta_0)-\EE e_{\omega}\mathbbm{1}(|e_{\omega}|\leq \delta_0)\big]\Big|\leq 2\delta_0|\tilde{u}(\omega_1)||\tilde{v}(\omega_2\omega_3)|,\quad \forall \omega
	$$
	implying that each term in RHS of (\ref{eq:tildeuM-Etildev}) is a bounded random variable. By applying Hoeffiding's inequality (\cite{hoeffding1994probability}) to (\ref{eq:tildeuM-Etildev}), we get
	\begin{align*}
		\PP\Big(\Big|\tilde{\bu}^{\top}\big[\calM_1(\calE)\mathbbm{1}(\|\calE\|_{\infty}\leq \delta_0)-\EE\calM_1(\calE)\mathbbm{1}(\|\calE\|_{\infty}\leq \delta_0)\big] \tilde{\bv}\Big|\geq t\Big)\leq& 2\exp\Big(\frac{-2t^2}{4\delta_0^2\sum_{\omega} \tilde{u}(\omega_1)^2 \tilde{v}(\omega_2\omega_3)^2}\Big)\\
		\leq& 2\exp(-t^2/(2\delta_0^2)).
	\end{align*}
	Since the cardinality of the product set of $\mathfrak{D}_{1/3}, \mathfrak{R}_{1/3}(J_1), \mathfrak{R}_{1/3}(R_2R_3)$ is bounded by $3^{C_0'(J_3R_3+J_1+R_2R_3)}$ for some absolute constant $C_0'$, we apply a union bound and get
	\begin{align*}
		\PP\Big(\max_{\bD\in \mathfrak{D}_{1/3}}&\max_{\substack{\bu\in\mathfrak{R}_{1/3}(J_1)\\ \bw\in \mathfrak{R}_{1/3}(R_2R_3)}}\Big| \bu^{\top}\bU_1^{\top}\big[\calM_1(\calE)\mathbbm{1}(\|\calE\|_{\infty}\leq \delta_0)-\EE\calM_1(\calE)\mathbbm{1}(\|\calE\|_{\infty}\leq \delta_0)\big](\frac{\bG_2}{\sqrt{I_2}}\otimes \bU_3\bD) \bw\Big|\geq t\Big)\\
		\leq& 2\exp\left(-\frac{t^2}{2\delta_0^2}+C_1(J_3R_3+J_1+R_2R_3)\right)\stackrel{t=2\sqrt{C_1(J_1+J_3R_3+R_2R_3)}\delta_0\log^{1/2}I_1}{\leq } 2I_1^{-2}
	\end{align*}
	implying that with probability at least $1-2I_1^{-2}$,
	\begin{align}\label{eq:U1Edelta_0}
		\max_{\bD\in \mathfrak{D}_{1/3}} &\big\|\bU_1^{\top}\big[\calM_1(\calE)\mathbbm{1}(\|\calE\|_{\infty}\leq \delta_0)-\EE\calM_1(\calE)\mathbbm{1}(\|\calE\|_{\infty}\leq \delta_0)\big](\bG_2\otimes \bU_3\bD) \big\|/\sqrt{I_2}\notag\\
		&\leq C_1\sqrt{J_1+J_3R_3+R_2R_3}\log^{3/2}I_1.
	\end{align}
	Putting together (\ref{eq:U1Edelta_0}), (\ref{eq:U1Ein_exp}), (\ref{eq:U1EG2U3D_delta}), (\ref{eq:maxU1_2terms}) and (\ref{eq:suptomax}), we conclude that with probability at least $1-4I_1^{-2}$,
	$$
	\sup_{\bA\in \mathfrak{B}(J_3,R_3)} \big\|\bU_1^{\top}\calM_1(\calE)(\bG_2\otimes \bU_3\bA) \big\|/\sqrt{I_2}\leq C_3\sqrt{J_1+J_3R_3+R_2R_3}\log^{3/2}I_1
	$$
	for some absolute constant $C_3>0$. This proves (\ref{eq:supUEGU3A}) of Lemma~\ref{lem:UEsupGR}.
\end{proof}

\vspace{1cm}

\begin{proof}[Proof of Lemma~\ref{lem:EG1G2G3_tilde}]
	Recall by definition that $\tilde\bG_m^{\top}\tilde{\bG}/I_m={\bI}_{R_m}$. Moreover, the column space of $\tilde \bG_m$ is a subspace of column space of $\bPhi_m(\bX_m)$. Denote $\bU_m\in\RR^{I_m\times J_m}$ the left singular vectors of $\bPhi_m(\bX_m)$. Then, there exists a $\tilde \bB_m\in\RR^{J_m\times R_m}$ so that $\tilde\bG_m=\bU_m\tilde\bB_m$ and $\tilde \bB_m^{\top}\tilde \bB_m/I_m=\bI_{R_m}$ where $\tilde\bB_m$ is dependent with $\calE$ while $\bU_m$ is independent with $\calE$. Denote $\mathcal{G}(d_1,d_2):=\{\bB\in\RR^{d_1\times d_2}, \bB^{\top}\bB=\bI_{d_2}\}$.
	Then,
	$$
	\frac{\big\|\calM_1\big(\calE\times_1 \tilde\bG_1^{\top}\times_2 \tilde\bG_2^{\top}\times_3 \tilde\bG_3^{\top}\big) \big\|}{\sqrt{I_1I_2I_3}}\leq \sup_{\tilde{\bB}_m\in\calG(J_m,R_m)} \big\|\calM_1\big(\calE\times_1 (\bU_1\tilde\bB_1)^{\top}\times_2 (\bU_2\tilde\bB_2)^{\top}\times_3 (\bU_3\tilde\bB_3)^{\top}\big) \big\|.
	$$
	By choosing $1/5$-nets of $\calG(J_1,R_1), \calG(J_2, R_2)$ and $\calG(J_3,R_3)$ (e.g., by the covering number of Grassmannians in \cite{koltchinskii2015optimal}), respectively, we can reproduce the proof of Lemma~\ref{lem:UEsupGR} and show that with probability at least $1-I_1^{-2}$,
	$$
	\big\|\calM_1\big(\calE\times_1 \tilde\bG_1^{\top}\times_2 \tilde\bG_2^{\top}\times_3 \tilde\bG_3^{\top}\big) \big\|\leq C_1\sqrt{J_1R_1+J_2R_2+J_3R_3+R_1+R_2R_3}\log^{3/2} I_1
	$$
	for some absolute constant $C_1>0$. Since $J_1\asymp J_2\asymp J_3$ and $R_1\geq R_2\geq R_3$, we can simplify the upper bound to $C_1'\sqrt{J_1R_1}\log^{3/2} I_1$ and finish the proof of Lemma~\ref{lem:EG1G2G3_tilde}.
\end{proof}

\section{More real data applications}

%

\subsection{Human Brain Connection Data} \label{append:HCP}

As an additional illustration of using the STEFA and IP-SVD for explanatory data analysis, 
we consider partitioning the brain connectivity according to the $136 \times 4$ covariate-relevant loading matrix $\bPhi_3(\bX)\bB_3$. 
As mentioned in the main text, we choose $\bPhi_3(\bX)$ generated by polynomial basis of order 1, which is a similar linear setting as that in \cite{hu2021generalized} with identity link function. 
Each column of $\bPhi_3(\bX)\bB_3$  corresponds to one of the four directions with largest variance among individual features in $\bX$. 
The meaning of each latent direction can be inferred from $\bB_3$. 
Figure \ref{fig:B2} presents the heat map of $\bB_3$, showing that the four factors weight mostly on the four columns of $\bPhi_3(\bX)$, namely, all-ones vector, female variable, and Age 22-25 variable and 31+ variable, respectively.
Then each factor is interpreted as the effects associated with global average, female, Age 22-25 variable and 31+. 

\begin{figure}[htpb!]
	\centering
	\includegraphics[width=.5\textwidth]{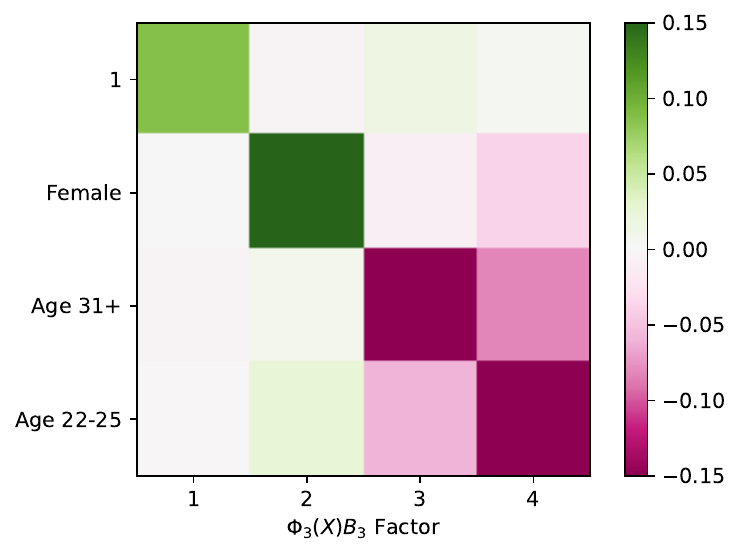}
	\caption{Heat map of the coefficient $\bB_3$ in the covariate-relevant loading with polynomial basis of order one.}  \label{fig:B2}
\end{figure}

We obtained a $68 \times 68 \times 4$ connectivity tensor by projecting the original connectivity tensor on the column space of $\bPhi_3(\bX)\bB_3$. 
As a result, the four slices along the third mode corresponds to the connectivity matrix for each of the global average, female, Age 22-25 variable and 31+ effects. 
We divide the $68$ regions of the brain into two clusters based on the connectivity matrix for each of the effects. 
The clustering results are plotted in Figure \ref{fig:brain-region-clusters}. 
The connection within the same cluster is stronger than that between clusters.
Some connectivity patterns can be observed. 
For example, the global connection exhibits clear left and right spatial separation and the age 22 - 25 group shows additional inter-hemispheric connectivity. 
While such explanatory analysis can provide some interesting observation, more rigorous methods, such as statistical testing procedures, need to be developed to support any scientific claim.
These are important directions for future statistics researches.  

\begin{figure}[htpb!]
	\centering
	\begin{subfigure}[b]{0.45\textwidth}
		\centering
		\includegraphics[width=\textwidth]{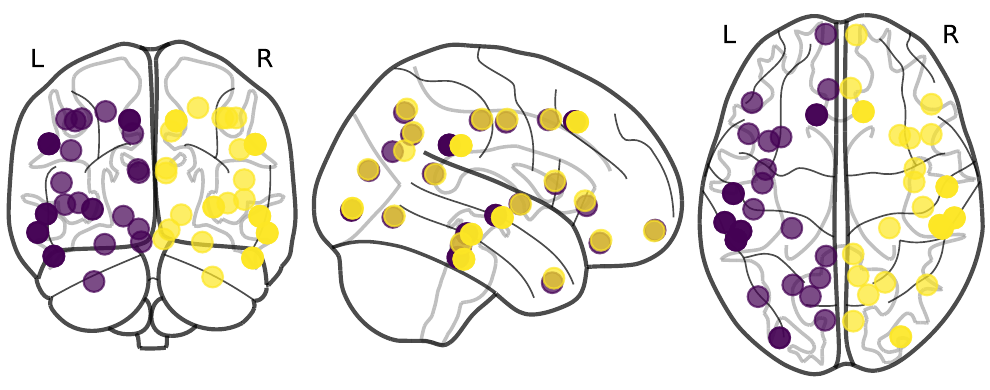}
		\caption{Global effect}
	\end{subfigure}
	\hspace{2em}
	\begin{subfigure}[b]{0.45\textwidth}
		\centering
		\includegraphics[width=\textwidth]{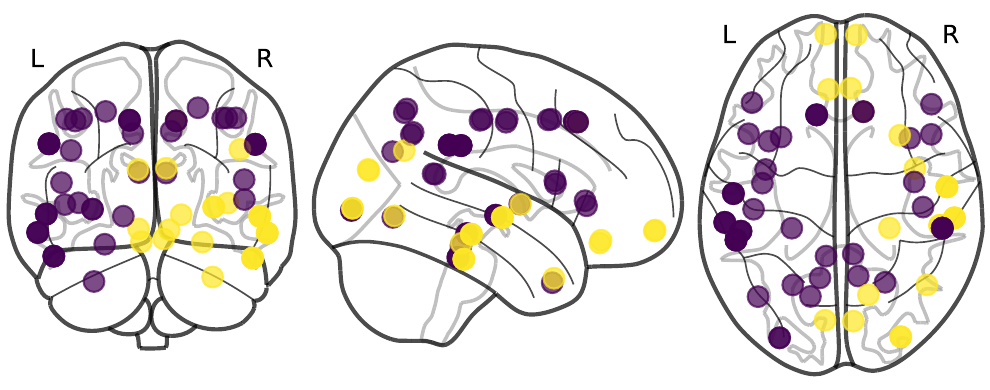}
		\caption{Female effect}
	\end{subfigure}
	\par\bigskip
	\begin{subfigure}[b]{0.45\textwidth}
		\centering
		\includegraphics[width=\textwidth]{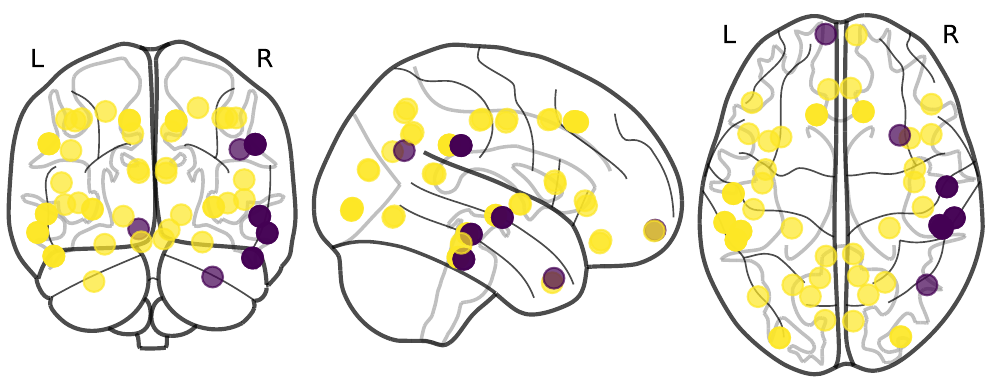}
		\caption{Age 22-25}
	\end{subfigure}
	\hspace{2em}
	\begin{subfigure}[b]{0.45\textwidth}
		\centering
		\includegraphics[width=\textwidth]{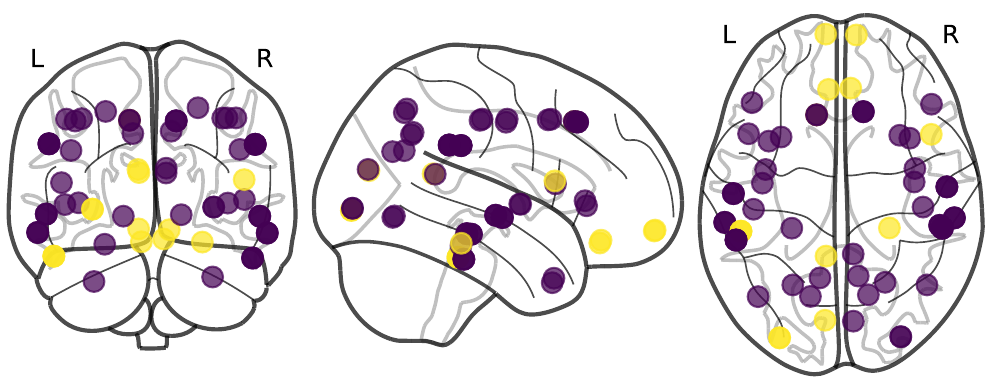}
		\caption{Age 31+}
	\end{subfigure}
	\caption{Partition of brain regions for each latent dimension corresponding to the covariate-relevant loading.}
	\label{fig:brain-region-clusters}
\end{figure}

\end{appendices}